\let\doendproof\endproof
\renewcommand\endproof{~\hfill\qed\doendproof}
\title{Average-energy games\thanks{Work partially supported by
    European project CASSTING (FP7-ICT-601148) and ERC project EQualIS
    (StG-308087). Mickael Randour is an F.R.S.-FNRS Postdoctoral Researcher.}}
\author{Patricia~Bouyer\inst{1} \and Nicolas Markey\inst{1} \and Mickael
  Randour\inst{2} \and
Kim~G.~Larsen\inst{3} \and Simon Laursen\inst{3}}
\institute{
LSV -- CNRS \& ENS Cachan, France
\and Computer Science Department, Universit\'e libre de Bruxelles (ULB), Belgium
\and Aalborg University, Denmark
}
\begin{document}

\maketitle

\begin{abstract}
Two-player quantitative zero-sum games provide a natural framework to
synthesize controllers with performance guarantees for reactive systems within
an uncontrollable environment. Classical settings include mean-payoff games,
where the objective is to optimize the long-run average gain per action, and
energy games, where the system has to avoid running out of energy.

We study \textit{average-energy} games, where the goal is to optimize the
long-run average of the accumulated energy. We show that this objective arises
naturally in several applications, and that it yields interesting connections with
previous concepts in the literature. We~prove that deciding the winner in such
games is in \NP $\cap$ \coNP and at least as hard as solving mean-payoff
games, and we establish that memoryless strategies suffice to~win. We~also
consider the case where the system has to minimize the average-energy
\textit{while} maintaining the accumulated energy within predefined bounds at
all times: this~corresponds to operating with a finite-capacity storage for
energy. We give results for one-player and two-player games, and establish
complexity bounds and memory requirements.

\end{abstract}

\section{Introduction}
\label{sec:intro}
\paragraph{Quantitative games.} Game-theoretic formulations are a standard
tool for the synthesis of provably-correct controllers for reactive
systems~\cite{GTW02}. We consider two-player (system vs. environment)
turn-based games played on finite graphs. Vertices of the graph are called
\textit{states} and partitioned into states of player~1 and states of
player~2. The game is played by moving a pebble from state to state, along
\textit{edges} in the graph, and starting from a given initial state. Whenever
the pebble is on a state belonging to player~$i$, player~$i$ decides where to
move the pebble next, according to his \textit{strategy}. The infinite path
followed by the pebble is called a \textit{play}: it~represents one possible
behavior of the system. A~\textit{winning objective} encodes acceptable
behaviors of the system and can be seen as a set of winning plays. The goal of
player~1 is to ensure that the outcome of the game will be a winning play,
whatever the strategy played by his adversary.

To reason about resource constraints and the performance of strategies, \textit{quantitative games} have been considered in the literature. See for example~\cite{emsoft2003-CAHS,BCHJ09,Ran13}, or~\cite{Ran14} for an overview. Those games are played on \textit{weighted} graphs, where edges are fitted with integer weights modeling rewards or costs. The performance of a play is evaluated via a \textit{payoff function} that maps it to the numerical domain. The objective of player~1 is then to ensure a sufficient payoff with regard to a given threshold value. Seminal classes of quantitative games include mean-payoff~($\MPG$), total-payoff~($\TPG$) and energy games~($\EG$). In~$\MPG$ games~\cite{EM79,ZP96,ipl68(3)-Jur}, player~1 has to optimize his long-run average gain per edge taken whereas, in $\TPG$ games~\cite{mfcs2004-GZ,GS09}, player~1 has to optimize his long-run sum of weights. Energy games~\cite{emsoft2003-CAHS,BFLMS08,JLR13} model safety-like properties: the goal is to ensure that the running sum of weights never drops below zero and/or that it never exceeds a given upper bound $U \in \mathbb{N}$. All three classes share common properties. First, $\MPG$~games, $\TPG$ games, and $\EG$ games with only a lower bound ($\EGL$) are memoryless determined (given an initial state, either player~1 has a strategy to win, or player~2 has one, and in both cases no memory is required to win). Second, deciding the winner for those games is in \NP $\cap$ \coNP and no polynomial algorithm is known despite many efforts (e.g.,~\cite{BCDGR11,Chatterjee201525}). Energy games with both lower and upper bounds~($\EGLU$) are more complex: they are \EXPTIME-complete and winning requires memory in general~\cite{BFLMS08}.

While those classes are well-known, it is sometimes necessary to go beyond them to accurately model practical applications. For example, multi-dimensional games and conjunctions with a parity objective model trade-offs between different quantitative aspects~\cite{CD10,CRR14,VCDHRR15}. Similarly, window objectives address the need for strategies ensuring good quantitative behaviors within reasonable time frames~\cite{Chatterjee201525}.

\paragraph{Average-energy games.} We study the \textit{average-energy} ($\AE$) payoff function: in $\AE$ games, the goal of player~1 is to optimize the \textit{long-run average accumulated energy} over a play. We introduce this objective to formalize the specification desired in a practical application~\cite{CJLRR09}, which we detail in the following as a motivating example. Interestingly, it turns out that this payoff first appeared long ago~\cite{TV87}, but it was not subject to a systematic study until very recently: see related work for more discussion.

In addition to being meaningful w.r.t.~practical applications, $\AE$ games also have theoretical interest. In~\cite{CP13}, Chatterjee and Prabhu define the \textit{average debit-sum level} objective, which can be seen as a variation of the \textit{average-energy} where the accumulated energy is taken to be zero in any point where it is actually positive (hence, it focuses on the average debt). They use the corresponding games to compute the values of quantitative timed simulation functions. In particular, they provide a pseudo-polynomial-time algorithm to solve those games, but the complexity of deciding the winner as well as the memory requirements are open. Here, we solve those questions for the very similar average-energy objective. 

\paragraph{Motivating example.} Our example is a simplified version of the
industrial application studied by Cassez et~al.~in~\cite{CJLRR09}. Consider a
machine that consumes oil, stored in a connected accumulator. We want to
synthesize an appropriate controller to operate the oil pump that fills the
accumulator, and by the effect of pressure, that releases oil from the accumulator into the machine with a (time-varying) rate according to desired production. In order to ensure safety, the oil level in the accumulator should be maintained at all times between a minimal and a maximal level. This part of the specification can be encoded as an energy objective with both lower and upper bounds~($\EGLU$). At the same time, the more oil (thus pressure) in the accumulator, the faster the whole apparatus wears out. Hence, an ideal controller should minimize the average level of oil in the long run. This desire can be formalized through the average-energy payoff~($\AE$). Overall, the specification is thus to minimize the average-energy under the strong energy constraints: we denote the corresponding objective by~$\AELU$.

\renewcommand{\arraystretch}{1.1}
\begin{table*}[t]\centering
\small
\scalebox{1}{\begin{tabular}{cccccc}\toprule
Game objective & \textbf{1-player} && \textbf{2-player}&& \textbf{memory}  \\
 \midrule

$\MPG$  & \PTIME~\cite{Kar78} && \NP $\cap$ \coNP~\cite{ZP96} && memoryless~\cite{EM79} \\
$\TPG$  & \PTIME~\cite{FV97} && \NP $\cap$ \coNP~\cite{GS09}  && memoryless~\cite{mfcs2004-GZ} \\
$\EGL$  & \PTIME \cite{BFLMS08} && \NP $\cap$ \coNP \cite{emsoft2003-CAHS,BFLMS08} && memoryless~\cite{emsoft2003-CAHS} \\
$\EGLU$ & \PSPACE-complete \cite{FJ13} && \EXPTIME-complete \cite{BFLMS08} && pseudo-polynomial \\
 \midrule

$\AEG$  & \PTIME && \NP $\cap$ \coNP && memoryless \\
$\AELU$, polynomial~$U$ & \PTIME && \NP $\cap$ \coNP && polynomial \\
$\AELU$, arbitrary~$U$ & \PSPACE-complete && \EXPTIME-complete && pseudo-polynomial \\
$\AEL$  & \PSPACE-easy~/~\NP-hard && \textit{open}~/~\EXPTIME-hard && \textit{open} ($\geq$ pseudo-p.)\\
\bottomrule
\end{tabular}}
\vspace{1mm}

\caption{Complexity of deciding the winner and memory requirements for quantitative games: $\MP$~stands for mean-payoff, $\TP$ for total-payoff, $\EGL$ (resp.~$\EGLU$) for lower-bounded (resp.~lower- and upper-bounded) energy, $\AE$ for average-energy, and $\AEL$ (resp.~$\AELU$) for average-energy under a lower bound (resp.~and upper bound $U \in \mathbb{N}$) on the energy. Results without reference are proved in this paper.}
\label{tab:results}
\vspace{-4mm}
\end{table*}

\paragraph{Contributions.} Our main results are summarized in Table~\ref{tab:results}.

A) We establish that the average-energy objective can be seen as a \textit{refinement} of total-payoff, in the same sense as total-payoff is seen as a refinement of mean-payoff~\cite{GS09}: it allows to distinguish strategies yielding identical mean-payoff and total-payoff. 

B) We show that deciding the winner in two-player $\AE$ games is in \NP $\cap$ \coNP whereas it is in \PTIME for one-player games. In both cases, memoryless strategies suffice (Thm.~\ref{thm:ae_two_memoryless}). Those complexity bounds match the state-of-the-art for $\MP$ and $\TP$ games~\cite{ZP96,ipl68(3)-Jur,GS09,BCDGR11}. Furthermore we prove that $\AE$ games are at least as hard as mean-payoff games (Thm.~\ref{thm:mp_to_ae}). Therefore, the \NP $\cap$ \coNP-membership can be considered optimal w.r.t.~our knowledge of $\MP$ games.
Technically, the crux of our approach is as follows. First, we show that memoryless strategies suffice in one-player $\AE$ games (Thm.~\ref{thm:aeg_memoryless}): this requires to prove important properties of the $\AE$ payoff as classical sufficient criteria for memoryless determinacy present in the literature fail to apply directly. Second, we establish a polynomial-time algorithm for the one-player case: it exploits the structure of winning strategies and mixes graph techniques with local linear program solving (Thm.~\ref{thm:ae_onePlayer_PTIME}). Finally, we lift memoryless determinacy to the two-player case using results by Gimbert and Zielonka~\cite{GZ05} and obtain the \NP~$\cap$~\coNP-membership as a corollary (Thm.~\ref{thm:ae_npinter}). 
\enlargethispage{3mm}

C) We establish an \EXPTIME algorithm to solve two-player $\AE$ games with lower- and upper-bounded energy ($\AELU$) with an arbitrary upper bound $U \in \mathbb{N}$ (Thm.~\ref{thm:aelu_reduc}). It relies on a reduction of the $\AELU$ game to a pseudo-polynomially larger $\AE$ game where the energy constraints are encoded in the graph structure. Applying straightforwardly the $\AE$ algorithm on this game would only give us \NEXPTIME~$\cap$~\coNEXPTIME-membership, hence we avoid this blowup by further reducing the problem to a particular $\MP$ game and applying a pseudo-polynomial algorithm, with some care to ensure that overall the algorithm only requires pseudo-polynomial time in the original $\AELU$ game. Since the simpler $\EGLU$ games (i.e., $\AELU$ with a trivial $\AE$ constraint) are already \EXPTIME-hard~\cite{BFLMS08}, the $\EXPTIME$-membership result is optimal. We also prove that pseudo-polynomial memory is both sufficient and in general necessary to win in $\AELU$ games, for both players (Thm.~\ref{thm:aelu_memory}). We show that one-player $\AELU$ games are $\PSPACE$-complete via the on-the-fly construction of a witness path based on the aforementioned reduction, answering a question left open in~\cite{DBLP:journals/corr/BouyerMRLL15}. For polynomial (in the size of the game graph) values of the upper bound~$U$\,---\,or~if it is given in unary\,---\,the~complexity of the two-player (resp.~one-player) $\AELU$ problem collapses to \NP $\cap$ \coNP (resp.~\PTIME) with the same approach, and polynomial memory suffices for both players.

D) We provide partial answers for the $\AEL$ objective\,---\,$\AE$ under a lower
bound constraint on energy but no upper bound. We~show \PSPACE-membership for the one-player case (Thm.~\ref{thm:ael_one_complexity}), by reducing the problem to an $\AELU$ game
with a sufficiently large upper bound. That is, we prove that if the player
can win for the $\AEL$ objective, then he can do so without ever increasing
its energy above a well-chosen bound. We also prove the $\AEL$ problem to be
at least \NP-hard in one-player games (Thm.~\ref{thm:ael_one_complexity}) and \EXPTIME-hard in two-player games
(Lem.~\ref{lem:ael_exp_hard}) via reductions from the subset-sum problem and
countdown games respectively. Finally, we show that memory is required for
both players in two-player $\AEL$ games (Lem.~\ref{lem:ael_memory}), and that
pseudo-polynomial memory is both sufficient and necessary in the one-player
case (Thm.~\ref{thm:ael_one_memory}). The~decidability status of two-player $\AEL$ games remains open as we only provide a correct but incomplete incremental algorithm (Lem.~\ref{lem:ael_semi}). We conjecture that the two-player $\AEL$ problem is decidable and sketch a potential approach to solve it. We highlight the key remaining questions and discuss some connections with related models that are known to be difficult. 

Observe that in many applications, the energy must be stocked in a finite-capacity storage for which an upper bound is provided. Hence, the model of choice in this case is~$\AELU$.

\paragraph{Related work.} This paper extends previous work presented in a conference~\cite{DBLP:journals/corr/BouyerMRLL15}: it gives a full
presentation of the technical details, along with additional results and improved complexities.

The \textit{average-energy} payoff\,---\,Eq.~\eqref{eq:ae}\,---\,appeared in a paper by
Thuijsman and Vrieze in the late eighties~\cite{TV87}, under the name
\textit{total-reward}. This definition is different from the classical
\textit{total-payoff}\,---\,see~Sect.~\ref{sec:prelim}\,---\,commonly studied in the formal
methods community (see for example~\cite{mfcs2004-GZ,GS09}), which, despite
that, has been referred in many papers as either total-payoff or total-reward
equivalently. We will see in this paper that both definitions are
\textit{indeed} different and exhibit different behaviors.

Maybe due to this confusion, the payoff of Eq.~\eqref{eq:ae}\,---\,which we call
\textit{average-energy} thus avoiding misunderstandings\,---\,was not studied
extensively until recently. Nothing was known about memoryless determinacy and
complexity of deciding the winner. Independently to our work, Boros et
al.~recently studied the same payoff (under the name \textit{total-payoff}).
In~\cite{BEGM15}, they study Markov decision processes and stochastic
games with the payoff of Eq.~\eqref{eq:ae} and solve both questions. Their
results overlap with ours for $\AE$ games (Table~\ref{tab:results}). Let us
first mention that our results were obtained independently. Second, and
\textit{most importantly}, our approach and \textit{techniques are different},
and we believe our take on the problem yields some interest for our community.
Indeed, the algorithm of Boros et~al.~entirely relies on linear programming in
the one-player case, and resorts to approximation by discounted games in the
two-player one. Our techniques are arguably more constructive and based on
inherent properties of the payoff. In that sense, it is closer to what is
usually deemed important in our field. For example, we provide an extensive
comparison with classical payoffs. We base our proof of memoryless determinacy
on \textit{operational understanding} of the $\AE$ which is crucial in order
to formalize proper specifications. Our technique then benefits from seminal
works~\cite{GZ05} to bypass the reduction to discounted games and obtain a
direct proof, thanks to our more constructive approach. Lastly,
while~\cite{BEGM15} considers the $\AEG$ problem in the stochastic
context, we focus on the deterministic one but consider multi-criteria
extensions by adding bounds on the energy ($\AELU$ and $\AEL$ games). Those
extensions are \textit{completely new}, exhibit theoretical interest and are
adequate for practical applications in constrained energy systems, as
witnessed by the case study of~\cite{CJLRR09}.

Recent work of Br\'azdil et~al.~\cite{BKKN14} considers the optimization of a payoff under energy constraint. They study mean-payoff in consumption systems, i.e., simplified one-player energy games where all edges consume energy but some states can atomically produce a reload of the energy up to the allowed capacity.

\section{Preliminaries}
\label{sec:prelim}
\paragraph{Graph games.} 
We consider turn-based games played on graphs between two players denoted by
$\pI$ and~$\pII$. A~\emph{game} is a tuple $\Game =
(S_1, S_2, \trans, \weg)$ where 
(i)~$S_1$~and $S_2$ are disjoint finite sets of
\textit{states} belonging to $\pI$ and~$\pII$, with $S = S_1
\uplus S_2$, 
(ii)~$\trans \subseteq S \times S$ is a finite set of \textit{edges} such that for all $s \in S$, there exists $s' \in S$ such that $(s, s') \in \trans$ (i.e., no deadlock), and 
(iii)~$\weg\colon \trans \to \bbZ$ is an integer \textit{weight
  function}. 
Given edge $(s_{1}, s_{2}) \in \trans$, we write $\weg(s_{1},
s_{2})$ as a shortcut for $\weg((s_{1}, s_{2}))$. We denote by~$\largestW$ the
largest absolute weight assigned by function $\weg$. A~game is
called $1$-player if~$\states_{1} = \emptyset$ or $\states_{2} = \emptyset$.

A \emph{play} from an initial state $\initState \in \states$ is an
infinite sequence $\play = s_0 s_1 \ldots s_n \ldots$ such that $s_0 =
\initState$ and for all $i \ge 0$ we have $(s_i,s_{i+1}) \in \trans$.
The (finite) \emph{prefix} of $\play$ up to position $n$ gives the
sequence $\play(n) = s_0 s_1 \ldots s_n$, the first (resp. last)
element $s_0$ (resp. $s_n$) is denoted $\first(\play(n))$
(resp. $\last(\play(n))$). The set of all plays in $\Game$ is denoted
by $\plays(\Game)$ and the set of all prefixes is denoted by
$\prefs(\Game)$. We say that a prefix $\prefix \in \prefs(\Game)$
belongs to $\player{i}$, $i \in \{1,2\}$, if $\last(\prefix) \in
S_i$. The set of prefixes that belong to $\player{i}$ is denoted by
$\prefs_i (\Game)$. The classical concatenation between prefixes
(resp. prefix and play) is denoted by the $\cdot$ operator. The length
of a non-empty prefix $\prefix = s_{0}\ldots{}s_{n}$ is defined as the
number of edges and denoted by $\vert\prefix\vert = n$.

\paragraph{Payoffs of plays.} Given a play $\play = s_0 s_1 \ldots s_n \ldots$ we define
\begin{itemize}
\item its \textit{energy level} at position $n$ as
	$\EL(\play(n)) = \sum_{i = 0}^{n-1} w(s_i,s_{i+1})$;

\item its \textit{mean-payoff} as
	$\MPsup(\play) = 
	\limsup_{n \to \infty} \frac{1}{n}  \sum_{i = 0}^{n-1} w(s_i,s_{i+1}) 
	= \limsup_{n \to \infty} \frac{1}{n} \EL(\play(n))$;
	
\item its \textit{total-payoff} as
	$\TPsup(\play) = 
	\limsup_{n \to \infty} \sum_{i = 0}^{n-1} w(s_i,s_{i+1}) 
	= \limsup_{n \to \infty} \EL(\play(n))$;
\item and its \textit{average-energy} as
\vspace{-1mm}
\begin{equation}
\label{eq:ae}
  	\AEsup(\play) = 
	\limsup_{n \to \infty} \frac{1}{n} \sum_{i = 1}^{n} 
	\left ( \sum_{j = 0}^{i-1} w(s_j,s_{j+1}) \right )
	= \limsup_{n \to \infty} \frac{1}{n} \sum_{i = 1}^{n}  \EL(\play(i)).
\end{equation}
\end{itemize}
\vspace{-1mm}

We will sometimes consider those measures defined with $\liminf$ instead of $\limsup$, in which case we write $\MPinf$, $\TPinf$ and $\AEinf$ respectively. Finally, we also consider those measures over prefixes: we naturally define
them by dropping the $\limsup_{n \rightarrow \infty}$ operator and taking $n =
\vert\prefix\vert$ for a prefix $\prefix \in \prefs(\Game)$. In this case, we
simply write $\MP(\prefix)$, $\TP(\prefix)$ and $\AE(\prefix)$ to denote the
fact that we consider \textit{finite} sequences.

\paragraph{Strategies.} A~\emph{strategy} for $\player{i}$, $i \in \{1,2\}$, is a function $\St_i\colon \prefs_i(\Game)
\to S$ such that for all $\prefix \in \prefs_i(\Game)$ we have $(\last(\prefix),
\St_i(\prefix)) \in \trans$.
A~strategy~$\St_{i}$ for~$\player{i}$ is \textit{finite-memory} 
if it can be
encoded by a deterministic Moore machine $(M,m_0,\alpha_u,\alpha_n)$ where $M$
is a finite set of states (the memory of the strategy), $m_0 \in M$ is the
initial memory state, $\alpha_u \colon M \times S \to M$ is an update
function, and $\alpha_n \colon M \times S_{i} \to S$ is the next-action
function. If the game is in $s \in S_{i}$ and $m \in M$ is the current memory
value, then the strategy chooses $s' = \alpha_n(m,s)$ as the next state of the
game. When the game leaves a state $s \in S$, the memory is updated to
$\alpha_u(m,s)$. Formally, $(M, m_0, \alpha_u,
  \alpha_n) $ defines the strategy $\St_{i}$ such that
$\St_{i}(\rho\cdot s) = \alpha_n(\hat{\alpha}_u(m_0, \rho), s)$ for all $\rho
\in S^*$ and $s \in S_{i}$, where $\hat{\alpha}_u$ extends $\alpha_u$ to
sequences of states as expected. A strategy is \emph{memoryless} if $\vert
M\vert = 1$, i.e., it does not depend on the history but only on the current
state of the game. We denote by $\strats_{i}(\Game)$, the sets of strategies for player $\player{i}$. We drop $\Game$ when the context is clear.

A play $\play = s_0 s_1 \ldots$ is \emph{consistent} with a
strategy $\St_i$ of $\player{i}$ if, for all $n \ge 0$ where $\last(\play(n))
\in S_i$, we have $\St_i(\play(n)) = s_{n+1}$. Given an initial state
$\initState \in \states$ and strategies $\St_{1}$ and $\St_{2}$ for the two
players, we denote by $\out(\initState, \St_1,\St_2 )$ the unique play that
starts in $\initState$ and is consistent with both $\St_1$ and $\St_2$. When
fixing the strategy of only $\player{i}$, we denote the set of
consistent outcomes by $\outs(\initState, \St_i)$.

\paragraph{Objectives.} 
An~\emph{objective} in $\Game$ is a set $\mathcal{W} \subseteq \plays(\Game)$ that is declared
winning for $\pI$. Given a game $\Game$, an initial state~$\initState$, and an
objective~$\mathcal{W}$, a~strategy $\St_1 \in \strats_{1}$ is winning for $\playerOne$
if for all strategy $\St_2 \in \strats_{2}$, we have that $\out(\initState,
\St_1,\St_2) \in \mathcal{W}$. Symmetrically, a~strategy $\St_2 \in \strats_{2}$ is
winning for $\playerTwo$ if for all strategy $\St_1 \in \strats_{1}$, we~have
that $\out(\initState, \St_1,\St_2) \not\in \mathcal{W}$. That is, we consider \textit{zero-sum} games.

We consider the following objectives and combinations of those objectives.
\begin{itemize}
	\item Given an initial energy level $\initCredit \in \bbN$, the \textbf{lower-bounded energy} ($\EGL$) objective
		$\LBound(\initCredit) = \{ \play \in \plays(G)$ $\mid \forall\, n
		 \ge 0,\ \initCredit + \EL(\play(n)) \geq 0 \}$
	requires non-negative energy at all times. 
	
	\item Given an upper bound $U \in \bbN$ and an initial energy level
	$\initCredit \in \bbN$, the \textbf{lower- and upper-bounded energy} ($\EGLU$) objective
			$\LUBound(U, \initCredit) = \{ \play \in \plays(\Game) \mid 
			\forall\, n \ge 0,\ \initCredit + \EL(\play(n)) \in [0,U] \}$
		requires that the energy always remains non-negative and below the 
		upper bound $U$ along a play.

	\item Given a threshold $t \in \bbQ$, the \textbf{mean-payoff} ($\MPG$) objective 
		$\MeanPayOff(t) = \{ \play \in \plays(\Game) \mid \MPsup(\play) \le t \}$
	requires that the mean-payoff is at most~$t$. 
	
	\item Given a threshold $t \in \bbZ$, the \textbf{total-payoff} ($\TPG$) objective 
		$\TotalPayOff(t) = \{ \play \in \plays(\Game) \mid \TPsup(\play) \le t \}$
	requires that the total-payoff is at most~$t$.

	\item Given a threshold $t \in \bbQ$, the \textbf{average-energy} ($\AEG$) objective
		$\AvgEnergyLevel(t) = \{ \play \in \plays(\Game) \mid \AEsup(\play) \le t \}$
	requires that the average-energy is at most~$t$.
\end{itemize}

For the $\MPG$, $\TPG$ and $\AEG$ objectives, note that
$\playerOne$ aims to \textit{minimize} the payoff value while $\playerTwo$
tries to maximize~it. The reversed convention is also often used in the
literature but both are equivalent. For our motivating example, 
seeing $\playerOne$ as a minimizer is more natural. Note that we define the objectives using the $\limsup$ variants
of~$\MPG$, $\TPG$ and~$\AEG$, but similar results are obtained for the $\liminf$ variants.

\paragraph{Decision problem.} 
Given a game $\Game$, an initial state $\initState \in \states$, and an
objective $\mathcal{W} \subseteq \plays(\Game)$ as defined above, the associated
\textit{decision problem} is to decide if $\pI$ has a winning strategy for
this objective.

We recall classical results in Table~\ref{tab:results}. Memoryless strategies suffice for both players for $\EGL$~\cite{emsoft2003-CAHS,BFLMS08}, $\MPG$~\cite{EM79} and
$\TPG$~\cite{FV97,mfcs2004-GZ} objectives. Since all associated
problems can be solved in polynomial time for 1-player games, it follows that
the 2-player decision problem is in $\NP\cap\coNP$ for those three objectives~\cite{BFLMS08,ZP96,GS09}. For the
$\EGLU$ objective, memory is in general needed and the associated
decision problem is \EXPTIME-complete~\cite{BFLMS08} (\PSPACE-complete for one-player games~\cite{FJ13}).

\paragraph{Game values.} Given a game with an objective $\mathcal{W} \in \{\MeanPayOff, \TotalPayOff, \AvgEnergyLevel\}$ and an initial state $\initState$, we refer to the \textit{value} from $\initState$ as $v = \inf \{t \in \mathbb{Q} \mid \exists\, \sigma_{1} \in \Sigma_{1},\, \outs(\initState, \St_1) \subseteq \mathcal{W}(t)\}$. For~both $\MPG$ and~$\TPG$ objectives, it~is known that the value can be achieved by an optimal memoryless strategy; for the $\AEG$ objective it follows from our results (Thm.~\ref{thm:ae_two_memoryless}).

\section{Average-Energy}
\label{sec:average_games}
In this section, we consider the problem of ensuring a \textit{sufficiently low} average-energy.
\begin{bproblem}[$\AEG$] 
  Given a game~$\Game$, an initial state $\initState$, and a threshold~$t \in \mathbb{Q}$, decide if $\pI$ has a winning strategy $\St_1 \in \strats_{1}$ for the objective
  $\AvgEnergyLevel(t)$.
\end{bproblem}

We first compare the $\AEG$ objective with traditional quantitative objectives and study how they can be connected (Sect.~\ref{subsec:ae_relation}). Then we want to establish that in $\AEG$ games, memoryless strategies are always sufficient to play optimally, for \textit{both} players. Interestingly, this result cannot be obtained by straightforward application of many well-known sufficient criteria for memoryless determinacy existing in the literature. We thus introduce some technical lemmas that highlight the inherent features of the $\AEG$ payoff function (Sect.~\ref{subsec:ae_tech}) and permit to prove the result for \textit{one-player} $\AEG$ games (Sect.~\ref{subsec:ae_one}). We then prove that one-player $\AEG$ games can be solved in polynomial-time via an algorithm combining graph analysis techniques with linear programming. Finally, we consider the \textit{two-player} case (Sect.~\ref{subsec:ae_two}). Applying a result by Gimbert and Zielonka~\cite{GZ05}, combined with our results on the one-player case, we derive memoryless determinacy of two-player $\AEG$ games. This also induces $\NP \cap \coNP$-membership of the $\AEG$ problem by the $\PTIME$ algorithm of Sect.~\ref{subsec:ae_one}. We conclude by proving that $\AEG$ games are at least as hard as $\MPG$ games, hence indicating that the $\NP \cap \coNP$ upper bound is essentially optimal with regard to our current knowledge of $\MPG$ games (whose membership to $\PTIME$ is a long-standing open problem~\cite{ZP96,ipl68(3)-Jur,BCDGR11,Chatterjee201525}).

\subsection{Relation with classical objectives}
\label{subsec:ae_relation}

Several links between $\EGL$, $\MPG$
and $\TPG$ objectives can be established. Intuitively, $\pI$ can only ensure a lower bound on energy if he can prevent $\pII$ from enforcing strictly-negative cycles (otherwise the initial energy is eventually exhausted). This is the case if and only if $\pI$ can ensure a non-negative mean-payoff in $\Game$ (here, he wants to maximize the $\MP$), and if this is the case, $\pI$ can prevent the running sum of weights from ever going too far below zero along a play, hence granting a lower bound on total-payoff. 
We introduce the sign-reversed game $\Game'$ in the next lemma, which is consistent with our view of $\pI$ as a minimizer with regard to payoffs (as discussed in Sect.~\ref{sec:prelim}).
\begin{lemma}
\label{lem:simple_objectives_equivalences}
Let $\Game = (S_1, S_2, \trans, \weg)$ be a game and $\initState \in \states$
be the initial state. The following assertions are equivalent.
\begin{enumerate}[A.]
\item\label{assert:energy} There exists $\initCredit \in \mathbb{N}$ such that
  $\playerOne$ has a (memoryless) winning strategy for objective
  $\LBound(\initCredit)$.
\item\label{assert:MP} Player $\playerOne$ has a (memoryless) winning strategy
  for objective $\MeanPayOff(0)$ in the game $\Game'$ defined by reversing the
  sign of the weight function, i.e., for all $(s_{1}, s_{2}) \in \trans$,
  $\weg'(s_{1}, s_{2}) = -\weg(s_{1}, s_{2})$.
\item\label{assert:TP} Player $\playerOne$ has a (memoryless) winning strategy
  for objective $\TotalPayOff(t)$, with $t = 2\cdot(\vert\states\vert -
  1)\cdot \largestW$, in the game $\Game'$ defined by reversing the sign of
  the weight function.
\item\label{assert:TPinfinite} There exists $t \in \mathbb{Z}$ such that
  $\playerOne$ has a (memoryless) winning strategy for objective
  $\TotalPayOff(t)$, in the game $\Game'$ defined by reversing the sign of the
  weight function.
\end{enumerate}
\end{lemma}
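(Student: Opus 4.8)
The plan is to establish the cycle of implications A $\Rightarrow$ B $\Rightarrow$ C $\Rightarrow$ D $\Rightarrow$ A, using the observation that the sign reversal turns ``keeping energy non-negative'' into ``keeping the accumulated sum bounded from above'', so all statements concern the same game $\Game'$. Throughout I will freely use that $\MPG$, $\TPG$ and $\EGL$ objectives are memoryless-determined for both players (cited in the preliminaries), which is what lets me add the parenthetical ``(memoryless)'' to each assertion for free once the plain versions are proved equivalent.

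For A $\Rightarrow$ B: suppose $\pI$ has a memoryless strategy $\St_1$ ensuring $\initCredit + \EL(\play(n)) \ge 0$ for all $n$ along every consistent play. Against any memoryless counter-strategy of $\pII$, the outcome is lasso-shaped, and the cycle it eventually loops on must have non-negative weight-sum in $\Game$ — otherwise iterating it drives $\EL$ to $-\infty$, contradicting the lower bound. Hence in $\Game'$ every such cycle has non-positive sum, so $\MPsup \le 0$ in $\Game'$; since it suffices to beat all memoryless responses of $\pII$ (by memoryless determinacy of $\MPG$), $\St_1$ witnesses $\MeanPayOff(0)$ in $\Game'$. For B $\Rightarrow$ C: take a memoryless $\St_1$ winning for $\MeanPayOff(0)$ in $\Game'$; again reduce to memoryless responses, so outcomes are lassos whose cycles have sum $\le 0$ in $\Game'$. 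The accumulated sum $\EL'(\play(n))$ along such a play is the sum over an initial acyclic part plus (non-positive) completed cycles plus a partial cycle; each of these pieces is bounded in absolute value by $(\vert S\vert - 1)\cdot\largestW$, so $\EL'(\play(n)) \le 2(\vert S\vert-1)\largestW = t$ for all $n$, giving $\TPsup \le t$. (Here I should be slightly careful about how $\limsup$ of the prefix sums relates to the per-prefix bound, but a uniform bound on all prefixes immediately bounds the $\limsup$.) C $\Rightarrow$ D is trivial: it is an existential weakening.

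For D $\Rightarrow$ A: suppose $\St_1$ ensures $\TPsup(\play) \le t$ in $\Game'$ for some fixed $t \in \bbZ$ against all $\pII$ strategies. I claim $\pI$ can ensure, in $\Game'$, that $\EL'(\play(n))$ never exceeds $M := t + 2(\vert S\vert-1)\largestW$ — equivalently, reversing signs back, that $\EL(\play(n)) \ge -M$, so initial credit $\initCredit = M$ suffices for $\LBound(M)$ in $\Game$. The argument: if along some consistent play a prefix reached $\EL' > M$, then because the intervening acyclic content can account for at most $(\vert S\vert-1)\largestW$, the play must have traversed a strictly-positive cycle; but then $\pII$ could deviate to pump that positive cycle forever, sending the prefix sums — and hence $\TPsup$ — above any bound, contradicting D. (To make this rigorous I will decompose the offending prefix into simple cycles and an acyclic remainder, find a positive simple cycle reachable-and-returnable, and let $\pII$'s alternative strategy loop it; I must check this deviation is actually available to $\pII$, which it is since the cycle is visited and $\St_1$ is fixed, so the relevant choices of $\pII$ along the cycle are unconstrained.) Finally, upgrading to memoryless strategies: in A $\Rightarrow$ B and B $\Rightarrow$ C the strategy is literally reused; in D $\Rightarrow$ A the resulting $\LBound$ strategy need not be the original one, but $\EGL$ is memoryless-determined, so $\pI$ has a memoryless winning strategy for $\LBound(\initCredit)$ whenever he has any — and then running the cycle in the memoryless regime propagates memorylessness to B, C and D as well.

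The main obstacle is the D $\Rightarrow$ A step, specifically pinning down the cycle-extraction-and-pumping argument: one has to argue that a single bounded $t$ on the $\TPsup$ forbids $\pI$ from letting the running sum wander arbitrarily high even transiently, and this requires exhibiting a concrete profitable deviation for $\pII$ rather than a mere counting estimate. Everything else is a routine lasso decomposition plus the standard $2(\vert S\vert-1)\largestW$ bound on acyclic path weights.
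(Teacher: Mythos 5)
The paper does not actually prove this lemma: it disposes of A$\Leftrightarrow$B by citing \cite{BFLMS08} and of B$\Leftrightarrow$C$\Leftrightarrow$D by citing \cite{Chatterjee201525}, so any self-contained argument is a ``different route'' by default. Your implications A$\Rightarrow$B, B$\Rightarrow$C and C$\Rightarrow$D are sound: the reductions to memoryless responses of $\playerTwo$ are legitimate (once $\St_1$ is memoryless, the residual game is a one-player $\MPG$/$\TPG$ game), and the cycle-decomposition bound $2\cdot(\vert\states\vert-1)\cdot\largestW$ is more than enough.

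The gap is in D$\Rightarrow$A. Your intermediate claim --- that any strategy winning for $\TotalPayOff(t)$ in $\Game'$ keeps the running sum of $\weg'$-weights below $t+2(\vert\states\vert-1)\cdot\largestW$ on \emph{every prefix} --- is false when $\St_1$ uses memory, and assertion D does not hand you a memoryless witness. Counterexample: one $\playerOne$-state with self-loops of weights $+1$ and $-1$ in $\Game'$; the strategy that climbs to $N$, descends back to $0$, and then alternates forever has $\TPsup=1$ yet its running sum transiently reaches the arbitrary value $N$. Your pumping deviation fails for exactly this reason: you verify that $\playerTwo$'s choices along the extracted positive cycle are unconstrained, but the obstruction is on $\playerOne$'s side --- on a second traversal the history has changed, so a memoryful $\St_1$ need not reproduce the cycle's $\playerOne$-edges, and $\playerTwo$ cannot ``loop it forever''. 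The repair is easy and you already have the tools: first invoke memoryless determinacy of total-payoff games (as your preamble permits) to make $\St_1$ memoryless, after which every cycle of $\Game'$ restricted to $\St_1$ that is reachable from $\initState$ must be non-positive (otherwise $\playerTwo$ reaches and pumps it, forcing $\TPsup=+\infty>t$), and the uniform prefix bound follows by cycle decomposition. Alternatively, prove D$\Rightarrow$B directly --- $\TPsup(\play)\le t<\infty$ means the integer prefix sums are eventually at most $t$, hence $\MPsup(\play)\le 0$ with the \emph{same} strategy --- and then B$\Rightarrow$A via a memoryless mean-payoff witness, all of whose reachable cycles are non-negative in $\Game$, so that $\initCredit=(\vert\states\vert-1)\cdot\largestW$ suffices.
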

\begin{proof}
  Proof of $\ref{assert:energy} \Leftrightarrow \ref{assert:MP}$ is given
  in~\cite[Proposition 12]{BFLMS08}. Proof~of $\ref{assert:MP}
  \Leftrightarrow \ref{assert:TP} \Leftrightarrow \ref{assert:TPinfinite}$
 is in~\cite[Lem.~1]{Chatterjee201525}.
\end{proof}

The $\TPG$ objective is sometimes seen as a \textit{refinement} of
$\MPG$ for the case where~$\pI$\,---\,as~a minimizer\,---\,can ensure~$\MP$ equal
to zero but not lower, i.e., the $\MPG$ game has value zero~\cite{GS09}. Indeed, one may use
the $\TP$ to further discriminate between strategies that guarantee
$\MP$~zero. In~the same philosophy, the~average-energy can
help in distinguishing strategies that yield identical total-payoffs. See Fig.~\ref{fig:ae_refines}. The~$\AE$ values in both examples can be computed easily using the upcoming technical lemmas (Sect.~\ref{subsec:ae_tech}).

\begin{figure}[htb]
        \centering
\subfloat{\scalebox{1}{\begin{tikzpicture}[->,>=stealth',shorten >=1pt,auto,node
    distance=2.5cm,bend angle=45, scale=0.42, yscale=.9,font=\normalsize,inner sep=.5mm]
    \everymath{\scriptstyle}
    \tikzstyle{p1}=[draw,circle,text centered,minimum size=5mm,text width=4mm]
    \tikzstyle{p2}=[draw,rectangle,text centered,minimum size=5mm,text width=4mm]
    \node[p1]  (0)  at (0, 0) {};
    \node[p1]  (1) at (3, 0) {};
    \node[p1]  (2) at (6, 1.6) {};
    \node[p1]  (3) at (6, -1.6) {};
    \node[p1]  (4) at (9, 0) {};
    
    \coordinate[shift={(-5mm,0mm)}] (init) at (0.west);
    \path
    (0) edge node[above] {$1$} (1)
    (init) edge (0);
	\draw[->,>=latex] (1) to[out=60,in=180] node[above] {$2$} (2);
	\draw[->,>=latex] (2) to[out=0,in=120] node[above] {$2$} (4);
	\draw[->,>=latex] (4) to[out=240,in=0] node[below] {$-2$} (3);
	\draw[->,>=latex] (3) to[out=180,in=300] node[below] {$-2$} (1);
      \end{tikzpicture}} }
      \hspace{2cm}
      \subfloat{\scalebox{1}{\begin{tikzpicture}[->,>=stealth',shorten >=1pt,auto,node
    distance=2.5cm,bend angle=45, scale=0.42, yscale=.9, font=\normalsize,inner sep=.5mm]
    \everymath{\scriptstyle}
    \tikzstyle{p1}=[draw,circle,text centered,minimum size=5mm,text width=4mm]
    \tikzstyle{p2}=[draw,rectangle,text centered,minimum size=5mm,text width=4mm]
    \node[p1]  (0)  at (0, 0) {};
    \node[p1]  (1) at (3, 0) {};
    \node[p1]  (2) at (6, 1.6) {};
    \node[p1]  (3) at (6, -1.6) {};
    \node[p1]  (2b) at (9, 1.6) {};
    \node[p1]  (3b) at (9, -1.6) {};
    \node[p1]  (4) at (12, 0) {};
    
    \coordinate[shift={(-5mm,0mm)}] (init) at (0.west);
    \path
    (2) edge node[above] {$2$} (2b)
    (3b) edge node[below] {$-2$} (3)
    (0) edge node[above] {$1$} (1)
    (init) edge (0);
	\draw[->,>=latex] (1) to[out=60,in=180] node[above] {$2$} (2);
	\draw[->,>=latex] (2b) to[out=0,in=120] node[above] {$0$} (4);
	\draw[->,>=latex] (4) to[out=240,in=0] node[below] {$0$} (3b);
	\draw[->,>=latex] (3) to[out=180,in=300] node[below] {$-2$} (1);
      \end{tikzpicture}} }

\subfloat[Play $\play_{1}$ sees energy levels $(1, 3, 5, 3)^{\omega}$.]{\scalebox{1}{\begin{tikzpicture}[scale=.8]
      \everymath{\scriptstyle}
		\def \xscale {0.4}
		\def \yscale {0.4}
		\def \xmax {12}
		\def \ymax {6}
			
		\def\xy#1#2{({#1*\xscale},{#2*\yscale})}
                \path[use as bounding box] (-1,-.8) -- ({\xmax*\xscale+1},{\ymax*\yscale + 0.6});
	    \draw[->] (-0.2,0) -- ({\xmax*\xscale+0.5},0) node[right] {\small Step};
	    \draw[->] (0,-0.2) -- (0,{\ymax*\yscale + 0.2}) node[above] {\small Energy};
 		
        \foreach \y in {0,2,...,\ymax} {	
			\draw (-0.1,{\y*\yscale}) node[anchor=east] {$\y$} ;
		}
		
        \foreach \x in {0,2,...,\xmax} {
			\draw ({\x*\xscale},-0.1) node[anchor=north] {$\x$} ;
		}

        \foreach \x in {0,...,\xmax} {
            \draw ({\x*\xscale},0) -- ({\x*\xscale},-1.5pt);
        }

        \foreach \y in {0,...,\ymax} {
            \draw (0, {\y*\yscale}) -- (-1.5pt, {\y*\yscale});
        }
		
		\def\mean{3}
		\draw[thick,dashed,color=black]
			\xy{0}{\mean} -- \xy{\xmax}{\mean};
		
		\draw \xy{\xmax}{\mean} node[anchor=west] {$ \mathit{AE} = 3$ };

		\draw[very thick,-,color=black]
					\xy{0}{1} --
					\xy{2}{5} --
					\xy{4}{1} --
					\xy{6}{5} --
					\xy{8}{1} --
					\xy{10}{5} --
					\xy{12}{1}  ;							 
	\end{tikzpicture}}}
	\hspace{1.5cm}
\subfloat[Play $\play_{2}$ sees energy levels $(1, 3, 5, 5, 5, 3)^{\omega}$.]{\scalebox{1}{
\begin{tikzpicture}[scale=.8]
					
      \everymath{\scriptstyle}
		\def \xscale {0.4}
		\def \yscale {0.4}
		\def \xmax {12}
		\def \ymax {6}
			
		\def\xy#1#2{({#1*\xscale},{#2*\yscale})}
               \path[use as bounding box] (-1,-.8) -- ({\xmax*\xscale+1.5},{\ymax*\yscale + 0.6});
	    \draw[->] (-0.2,0) -- ({\xmax*\xscale+0.5},0) node[right] {\small Step};
	    \draw[->] (0,-0.2) -- (0,{\ymax*\yscale + 0.2}) node[above]
              {\small Energy};
 		
        \foreach \y in {0,2,...,\ymax} {	
			\draw (-0.1,{\y*\yscale}) node[anchor=east] {$\y$} ;
		}
		
        \foreach \x in {0,2,...,\xmax} {
			\draw ({\x*\xscale},-0.1) node[anchor=north] {$\x$} ;
		}

        \foreach \x in {0,...,\xmax} {
            \draw ({\x*\xscale},0) -- ({\x*\xscale},-1.5pt);
        }

        \foreach \y in {0,...,\ymax} {
            \draw (0, {\y*\yscale}) -- (-1.5pt, {\y*\yscale});
        }
 	
		\def\mean{3.67}
		\draw[thick,dashed,color=black]
			\xy{0}{\mean} -- \xy{\xmax}{\mean};
		
		\draw \xy{\xmax}{\mean} node[anchor=west] {$ \mathit{AE} = 11/3$ };
		
		\draw[very thick,-,color=black]
					\xy{0}{1} --
					\xy{2}{5} --
					\xy{4}{5} --
					\xy{6}{1} --
					\xy{8}{5} --
					\xy{10}{5} --
					\xy{12}{1} ;		
	\end{tikzpicture}
}}
	\caption{Both plays have identical mean-payoff and total-payoff: $\MPsup(\play_{1}) = \MPinf(\play_{1}) = \MPsup(\play_{2}) = \MPinf(\play_{2}) = 0$, $\TPsup(\play_{1}) = \TPsup(\play_{2}) = 5$, and $\TPinf(\play_{1}) = \TPinf(\play_{2}) = 1$. But play $\play_{1}$ has a lower average-energy: $\AEsup(\play_{1}) = \AEinf(\play_{1}) =  3 < \AEsup(\play_{2}) = \AEinf(\play_{2}) = 11/3$.}
	\label{fig:ae_refines}
\end{figure}

In these examples, the average-energy is clearly comprised between the infimum and supremum total-payoffs. This remains true for any play. 

\begin{lemma}
\label{lem:AEbetweenTP}
For any play $\play \in \plays(\Game)$, we have that $\AEinf(\play), \AEsup(\play) \in \left[
  \TPinf(\play), \TPsup(\play) \right] \subseteq \mathbb{R} \cup \{-\infty,
\infty\}$.
\end{lemma}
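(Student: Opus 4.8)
The plan is to recognize the claim as an instance of the classical sandwiching of Cesàro means between $\liminf$ and $\limsup$. Fix a play $\play = s_0 s_1 \ldots$ and set $a_i := \EL(\play(i))$ for $i \ge 1$. By the definitions in Section~\ref{sec:prelim}, we have $\TPinf(\play) = \liminf_{i\to\infty} a_i$ and $\TPsup(\play) = \limsup_{i\to\infty} a_i$, while $\AEinf(\play)$ and $\AEsup(\play)$ are, respectively, the $\liminf$ and the $\limsup$ of the sequence of averages $C_n := \frac1n \sum_{i=1}^n a_i$. Since $\liminf_n C_n \le \limsup_n C_n$ trivially, it suffices to prove the two inequalities $\TPinf(\play) \le \AEinf(\play)$ and $\AEsup(\play) \le \TPsup(\play)$: together with the trivial one they give $\AEinf(\play), \AEsup(\play) \in [\TPinf(\play), \TPsup(\play)]$ inside $\mathbb{R}\cup\{-\infty,\infty\}$.

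I would establish $\AEsup(\play) \le \TPsup(\play)$; the inequality $\TPinf(\play) \le \AEinf(\play)$ is obtained by the same argument applied to the sequence $(-a_i)_i$ (equivalently, to the sign-reversed game). Write $U := \TPsup(\play)$. If $U = +\infty$ the inequality is vacuous, so assume $U < +\infty$. For every real $M > U$ — and, in the degenerate case $U = -\infty$, for every real $M$ whatsoever — there is an index $N$ such that $a_i \le M$ for all $i \ge N$. Splitting the average at $N$ yields, for $n > N$,
\[
  C_n = \frac1n \sum_{i=1}^{N-1} a_i + \frac1n \sum_{i=N}^{n} a_i \;\le\; \frac1n \sum_{i=1}^{N-1} a_i + \frac{n-N+1}{n}\, M .
\]
The first summand tends to $0$ and the coefficient of $M$ tends to $1$ as $n \to \infty$ (with $N$, hence $\sum_{i=1}^{N-1} a_i$, fixed), so $\limsup_{n\to\infty} C_n \le M$. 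Letting $M \downarrow U$ (respectively $M \to -\infty$ when $U = -\infty$) gives $\AEsup(\play) = \limsup_n C_n \le U = \TPsup(\play)$.

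The argument is elementary and the only point that needs a little attention is the bookkeeping at the endpoints of the interval: the cases $\TPsup(\play) = +\infty$ and $\TPinf(\play) = -\infty$ make the corresponding bound vacuous, while the cases $\TPsup(\play) = -\infty$ and $\TPinf(\play) = +\infty$ force the Cesàro averages to the same infinite limit and are covered by allowing $M$ to range over all of $\mathbb{R}$ in the estimate above. I do not expect any genuine obstacle; note in particular that one cannot simply invoke a statement about Cesàro means of \emph{bounded} sequences, since $a_i = \EL(\play(i))$ may drift to $\pm\infty$ (linearly in $i$), but the truncation estimate above needs no boundedness assumption.
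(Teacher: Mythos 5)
Your proof is correct and follows essentially the same route as the paper's: both reduce the claim to the classical fact that Ces\`aro means are sandwiched between the $\liminf$ and $\limsup$ of the underlying sequence $a_i=\EL(\play(i))$. The only cosmetic difference is that the paper's proof uses integrality of the weights to place all but finitely many $a_i$ exactly in the closed interval $\left[\TPinf(\play),\TPsup(\play)\right]$ before averaging, whereas you run the standard truncation estimate with $M\downarrow \TPsup(\play)$, which has the minor advantage of not needing integer weights; your handling of the infinite endpoints is also fine.
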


\begin{proof}
  Consider a play $\play \in \plays(\Game)$. By definition of the total-payoff
  and thanks to weights taking integer values, we have that there exists some
  index $m \in \mathbb{N}_{0}$ such that, for all $n \geq m$, $\EL(\play(n))
  \in \left[ \TPinf(\play), \TPsup(\play) \right]$. By definition, the
  average-energy $\AEsup$ (resp.~$\AEinf$) measures the supremum (resp.~infinimum) limit of the averages of those partial
  sums, hence it holds that $\AEinf(\play), \AEsup(\play) \in \left[ \TPinf(\play),
    \TPsup(\play) \right]$.
\end{proof}

In particular, \textit{if the mean-payoff value from a state is not zero}, its total-payoff value is infinite and the following lemma holds.

\begin{lemma}
\label{lem:MPInfToAE}
  Let $\Game = (S_1, S_2, \trans, \weg)$ be a game and $\initState \in
  \states$ be the initial state.
\begin{enumerate}
\item\label{prop:MPneg} If there exists $t < 0$ such that $\playerOne$ has a (memoryless)
  winning strategy for $\MeanPayOff(t)$, then $\playerOne$ has a memoryless
  strategy that is winning for $\AvgEnergyLevel(t')$ for all $t' \in
  \mathbb{Q}$, i.e., this strategy ensures that any consistent outcome $\play$
  is such that $\AEinf(\play) = \AEsup(\play) = -\infty$.
\item\label{prop:MPpos} If $\playerOne$ has no (memoryless) winning strategy for
  $\MeanPayOff(0)$, then, for any $t' \in \mathbb{Q}$, $\playerOne$ has no
  winning strategy for $\AvgEnergyLevel(t')$. In particular, $\playerTwo$ has
  a memoryless strategy ensuring that any consistent outcome $\play$ is such
  that $\AEinf(\play) = \AEsup(\play) = \infty$.
\end{enumerate}
\end{lemma}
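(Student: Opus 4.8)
The plan is to reduce both statements to two facts already in hand: memoryless determinacy of mean-payoff games (used in Table~\ref{tab:results} and the preliminaries) and the sandwiching $\AEinf(\play),\AEsup(\play) \in [\TPinf(\play),\TPsup(\play)]$ from Lemma~\ref{lem:AEbetweenTP}. The elementary bridge between them is the observation that, because weights are integers, a mean-payoff bounded strictly away from $0$ makes the energy level diverge linearly: if $\MPsup(\play) \le t < 0$ then for all large enough $n$ one has $\EL(\play(n)) \le (t/2)\cdot n$, so $\limsup_n \EL(\play(n)) = -\infty$ and hence $\TPinf(\play) = \TPsup(\play) = -\infty$; dually, $\MPinf(\play) \ge c > 0$ forces $\TPinf(\play) = \TPsup(\play) = +\infty$.

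For item~\ref{prop:MPneg}, I would start from a winning strategy of $\playerOne$ for $\MeanPayOff(t)$ with $t<0$ and invoke memoryless determinacy of $\MPG$ games to replace it by a memoryless strategy $\St_1$ all of whose consistent outcomes $\play$ satisfy $\MPsup(\play) \le t < 0$. The observation above then gives $\TPinf(\play) = \TPsup(\play) = -\infty$, and Lemma~\ref{lem:AEbetweenTP} collapses this to $\AEinf(\play) = \AEsup(\play) = -\infty$. Since $-\infty \le t'$ for every $t' \in \mathbb{Q}$, this single memoryless strategy is simultaneously winning for $\AvgEnergyLevel(t')$ for all thresholds.

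For item~\ref{prop:MPpos}, I would argue on $\playerTwo$'s side: if $\playerOne$ has no winning strategy for $\MeanPayOff(0)$ then, by memoryless determinacy of $\MPG$, the mean-payoff value from $\initState$ is \emph{strictly} positive\,---\,otherwise $\playerOne$'s memoryless optimal strategy would already secure $\MPsup \le 0$\,---\,and $\playerTwo$ has a memoryless strategy $\St_2$ guaranteeing $\MPinf(\play) \ge c$ for some fixed $c > 0$ against every counter-strategy. By the observation, each such outcome has $\TPinf(\play) = \TPsup(\play) = +\infty$, whence $\AEinf(\play) = \AEsup(\play) = +\infty$ by Lemma~\ref{lem:AEbetweenTP}. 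Consequently $\out(\initState,\St_1,\St_2) \notin \AvgEnergyLevel(t')$ for every $\St_1 \in \strats_1$ and every $t' \in \mathbb{Q}$, which proves both that $\playerOne$ has no winning strategy for $\AvgEnergyLevel(t')$ and the stated property of $\St_2$.

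I do not expect a serious obstacle here. The only points needing care are invoking $\MPG$ determinacy in the correct direction for each player\,---\,in particular using the $\liminf$-flavoured guarantee available to the maximizer even though the objective $\MeanPayOff$ is written with $\limsup$\,---\,and noticing that failing to win $\MeanPayOff(0)$ yields a \emph{strictly} positive value, which is precisely what is needed to drive the total-payoff (and thus, via Lemma~\ref{lem:AEbetweenTP}, the average-energy) to $+\infty$ rather than to some finite quantity.
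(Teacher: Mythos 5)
Your proof is correct, and its skeleton coincides with the paper's: both arguments funnel everything through Lemma~\ref{lem:AEbetweenTP} by showing that the total-payoff of every relevant outcome is $-\infty$ (resp.\ $+\infty$). The differences lie in how that divergence is obtained. For item~\ref{prop:MPneg} the paper argues structurally (a memoryless strategy securing $\MPsup<0$ induces a subgraph whose simple cycles are all strictly negative, so the energy diverges), whereas you use the elementary analytic fact that $\MPsup(\play)\le t<0$ already forces $\EL(\play(n))\le (t/2)\,n$ for all large $n$; your version works play-by-play and does not even need memorylessness for that step (and, minor point, it needs only the definition of $\limsup$, not integrality of weights --- integrality is what Lemma~\ref{lem:AEbetweenTP} itself uses). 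For item~\ref{prop:MPpos} the routes genuinely diverge: the paper passes through the equivalence \ref{assert:MP}~$\Leftrightarrow$~\ref{assert:TPinfinite} of Lemma~\ref{lem:simple_objectives_equivalences} and then invokes memoryless determinacy of \emph{total-payoff} games to extract a strategy of $\playerTwo$ forcing $\TPinf=\infty$, while you stay entirely inside \emph{mean-payoff} determinacy: failing $\MeanPayOff(0)$ makes the (attained) value strictly positive, the maximizer's optimal memoryless strategy guarantees $\MPinf\ge v>0$ against every counter-strategy, and the same divergence observation finishes. Your route is more self-contained --- it bypasses Lemma~\ref{lem:simple_objectives_equivalences} and total-payoff determinacy --- at the cost of invoking the finer, $\liminf$-flavoured guarantee of the mean-payoff maximizer and the attainment of the value; you flag both points explicitly, and both are classical (Ehrenfeucht--Mycielski). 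Either argument is acceptable.
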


\begin{proof}
  Consider the first implication. Assume $\playerOne$ has a memoryless
  strategy $\St_{1}$ ensuring that all consistent outcomes $\play \in
  \outs(\initState, \St_1)$ are such that $\MPsup(\play) < 0$. For any such
  outcome, it is guaranteed that all simple cycles have a strictly negative
  energy level. Thus, we have that $\TPsup(\play) = -\infty$, and by
  Lem.~\ref{lem:AEbetweenTP}, it implies that $\AEsup(\play) = -\infty$, as
  claimed. Since $\AEinf(\play) \leq \AEsup(\play)$ by definition, the property holds.

Now consider the second implication. Assume there exists no winning strategy
for $\playerOne$ for the mean-payoff objective. By equivalence \ref{assert:MP}
$\Leftrightarrow$ \ref{assert:TPinfinite} of
Lem.~\ref{lem:simple_objectives_equivalences}, and memoryless determinacy of
total-payoff games (see for example~\cite{mfcs2004-GZ}), it follows that
$\playerTwo$ has a memoryless strategy $\St_{2}$ ensuring that all consistent
outcomes $\play \in \outs(\initState, \St_2)$ are such that $\TPinf(\play) =
\infty$. By~Lem.~\ref{lem:AEbetweenTP}, this induces the claim.
\end{proof}

\subsection{Useful properties of the average-energy}
\label{subsec:ae_tech}

In this subsection, we will first review some classical criteria that usually prove sufficient to deduce memoryless determinacy in quantitative games and discuss why they cannot be applied straight out of the box to the average-energy payoff. We will then prove two useful properties of this payoff that will later help us to prove the desired result.

\paragraph{Classical sufficient criteria.}

We briefly discuss
traditional approaches to prove memoryless determinacy in quantitative games.
The first one is to study a variant of the infinite-duration game where
the game halts as soon as a cycle is closed and then to relate the properties
of this variant to the infinite-duration game. This technique was used in the
original proof of memoryless determinacy for mean-payoff games by Ehrenfeucht
and Mycielski~\cite{EM79}, and in a following simpler proof by Bj{\"{o}}rklund
et~al.~\cite{BSV04}. The connection between infinite-duration games and
so-called \textit{first cycle games} was recently streamlined by Aminof and
Rubin~\cite{AR14}, identifying sufficient conditions to prove that first cycle
games and their infinite-duration counterparts admit optimal memoryless
strategies for both players. Among those conditions is the need for winning
objectives to be closed under \textit{cyclic permutation} (intuitively, swapping cycles in a play should not induce a better payoff) and under
\textit{concatenation} (intuitively, concatenating two prefixes should not result in a payoff better than the best of the two prefixes). Without further assumptions, the average-energy
objective satisfies neither. Indeed, consider individual cycles represented by sequences
of \textit{weights} $\mathcal{C}_{1} = \{-1\}$, $\mathcal{C}_{2} = \{1\}$ and
$\mathcal{C}_{3} = \{1, -2\}$. We see that
$\AE(\mathcal{C}_{1}\mathcal{C}_{2}) = (-1 + 0)/2 = -1/2 <
\AE(\mathcal{C}_{2}\mathcal{C}_{1}) = (1 -0)/2 = 1/2$, hence $AE$ is not
closed under cyclic permutations. Intuitively, the order in which the weights
are seen \textit{does} matter, in contrast to most classical payoffs. For
concatenation, see that $\AE(\mathcal{C}_{3}) = 0$ while
$\AE(\mathcal{C}_{3}\mathcal{C}_{3}) = -1/2 < 0$. Here the intuition is that
the overall $\AE$ is impacted by the energy of the first cycle
which is strictly negative ($-1$). In a sense, the $\AE$ of a cycle
can only be maintained through repetition if this cycle is neutral with regard
to the total energy level, i.e., if it has energy level zero: we will formalize this
intuition in Lem.~\ref{lem:AE_repeat}.

Other criteria for memoryless determinacy or half-memoryless determinacy (i.e., holding only for one of the two players) respectively appear in works by Gimbert and Zielonka~\cite{mfcs2004-GZ} and by Kopczynski~\cite{Kop06}. They involve checking that the payoff is \textit{fairly mixing}, or \textit{concave}. Again, both are false for arbitrary sequences of weights in the case of the average-energy, for essentially the same reasons as above. Nevertheless, we will be able to prove that memoryless strategies suffice for both players using similar ideas but first taking care of the problematic cases. Intuitively, when those cases are dealt with, we will regain a payoff that satisfies the above conditions. We also obtain \textit{monotonicity} and \textit{selectivity} of the payoff function as defined in~\cite{GZ05}. 

\paragraph{Extraction of prefixes.} The following lemma describes the impact of adding a finite prefix to an infinite play. We prove that the average-energy over a play can be decomposed w.r.t.~to the energy level of any of its prefixes and the average-energy of the remaining suffix.

\begin{lemma}[\textbf{Average-energy prefix}]
\label{lem:AE_prefix}
Let $\prefix \in \prefs(\Game)$, $\play \in \plays(\Game)$. Then, $\AEsup(\prefix \cdot \play) = \EL(\prefix) + \AEsup(\play)$. The same equality holds for $\AEinf$.
\end{lemma}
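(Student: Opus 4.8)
The plan is to expand the definition of $\AEsup$ on the play $\prefix\cdot\play$ and split every partial energy sum at the index $k = \vert\prefix\vert$. Write $\prefix = s_0 s_1 \ldots s_k$ and $\play = s_k s_{k+1} s_{k+2}\ldots$ (identifying the last state of $\prefix$ with the first state of $\play$), so that $\prefix\cdot\play = s_0 s_1 s_2 \ldots$. For any index $i > k$, the energy level of the prefix of $\prefix\cdot\play$ up to position $i$ decomposes as $\EL((\prefix\cdot\play)(i)) = \EL(\prefix) + \EL(\play(i-k))$, simply because the sum of weights along the first $i$ edges splits into the first $k$ edges (those of $\prefix$) and the remaining $i-k$ edges (the first $i-k$ edges of $\play$). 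This is the only ``computation'' needed and it is immediate from the definition of $\EL$.

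Next I would plug this into the average $\frac{1}{n}\sum_{i=1}^{n}\EL((\prefix\cdot\play)(i))$. Splitting the sum at $i = k$, the first $k$ terms form a fixed finite quantity whose contribution is $O(1/n)$ and vanishes in the limit. For the terms with $i > k$, substitute the decomposition above: each such term contributes $\EL(\prefix)$ plus $\EL(\play(i-k))$. Reindexing $j = i-k$, the sum $\sum_{i=k+1}^{n}\EL(\play(i-k))$ becomes $\sum_{j=1}^{n-k}\EL(\play(j))$, and the number of copies of the constant $\EL(\prefix)$ is $n-k$. Hence
\[
  \frac{1}{n}\sum_{i=1}^{n}\EL((\prefix\cdot\play)(i))
  = \frac{n-k}{n}\,\EL(\prefix) + \frac{1}{n}\sum_{j=1}^{n-k}\EL(\play(j)) + \frac{c_k}{n},
\]
where $c_k = \sum_{i=1}^{k}\EL((\prefix\cdot\play)(i))$ does not depend on $n$.

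Finally I take $\limsup_{n\to\infty}$ of both sides. The factor $\frac{n-k}{n}\to 1$, so the first term tends to $\EL(\prefix)$; the last term tends to $0$; and $\frac{1}{n}\sum_{j=1}^{n-k}\EL(\play(j))$ has the same $\limsup$ as $\frac{1}{n}\sum_{j=1}^{n}\EL(\play(j))$ since shifting the upper index by the constant $k$ (and rescaling $\frac{1}{n}$ versus $\frac{1}{n-k}$, whose ratio tends to $1$) does not change the $\limsup$. This yields $\AEsup(\prefix\cdot\play) = \EL(\prefix) + \AEsup(\play)$. The argument for $\AEinf$ is verbatim the same with $\liminf$ in place of $\limsup$, using that all the auxiliary terms converge (not merely have a $\limsup$), so they pass through $\liminf$ additively.

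The only mildly delicate point — and the one I would state carefully rather than wave away — is the interchange of the index shift with the $\limsup$: one must check that $\limsup_n \frac{1}{n}\sum_{j=1}^{n-k}\EL(\play(j)) = \limsup_n \frac{1}{n}\sum_{j=1}^{n}\EL(\play(j))$, which follows because the two expressions differ by $\frac{1}{n}\sum_{j=n-k+1}^{n}\EL(\play(j))$, a sum of at most $k$ terms each of size $O(n)$ divided by $n$... so it is bounded, not vanishing in general — hence the cleaner route is to compare $\frac{1}{n}\sum_{j=1}^{n-k}$ directly with $\frac{1}{n-k}\sum_{j=1}^{n-k}$, whose difference factor $\frac{n-k}{n}\to1$ while the partial averages stay bounded (they lie in $[\TPinf(\play),\TPsup(\play)]$ eventually by the reasoning of Lem.~\ref{lem:AEbetweenTP} when finite, and the infinite cases are handled separately). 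This is the step to get right; everything else is bookkeeping.
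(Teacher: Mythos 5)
Your proof is correct and follows essentially the same route as the paper's: split each partial energy sum at position $k=\vert\prefix\vert$, observe that the first block contributes $O(1/n)$ and the constant block contributes $\frac{n-k}{n}\EL(\prefix)\to\EL(\prefix)$, pull the convergent parts out of the $\limsup$, and identify the remaining term with $\AEsup(\play)$. Your handling of the final reindexing is in fact slightly more careful than the paper's (which dismisses the $k$ missing terms as ``negligible'' even though each can be of order $n$, so their contribution after dividing by $n$ is only bounded, not vanishing): comparing the normalizations $\frac1n$ versus $\frac1{n-k}$ via the factor $\frac{n-k}{n}\to1$ is the right fix and works uniformly, including when the $\limsup$ is $\pm\infty$.
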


\begin{proof}
  Let $\prefix = s_{0}\ldots{}s_{k} \in \prefs(\Game)$ and $\play \in
  \plays(\Game)$ be a prefix and a play over a game~$\Game$. We prove the property for $\AEsup$. By~definition and
  decomposition, we have that
\begin{align*}
\AEsup(\prefix \cdot \play) &= 
	\limsup_{n \to \infty} \dfrac{1}{n} \sum_{i = 1}^{n} 
	\EL((\prefix \cdot \play)(i))\\
&= \limsup_{n \to \infty} \left[ \dfrac{1}{n} \cdot \sum_{i = 1}^{k} 
	\EL(\prefix(i)) + \frac{1}{n} \cdot \sum_{i = k+1}^{n} 
	\EL(\prefix) + \frac{1}{n} \cdot \sum_{i = k+1}^{n} 
	\EL(\play(i-k))\right]. 
\end{align*}
For clarity, we rewrite this expression as $\AEsup(\prefix \cdot \play) =
\limsup_{n \to \infty} \left[ X_{1}(n) + X_{2}(n) + X_{3}(n)\right]$,
maintaining the same order.

Since $k$ is fixed and finite, and $\EL(\prefix(i))$ is bounded
for all $i \leq k$, we have that $\limsup_{n \to \infty} X_{1}(n) = \lim_{n \to
  \infty} X_{1}(n) = 0$. Furthermore, for $n \geq k + 1$, we rewrite the
second term as $X_{2}(n) = (n - k - 1)\cdot \EL(\prefix)/n$, and it follows
that $\limsup_{n \to \infty} X_{2}(n) = \lim_{n \to \infty} X_{2}(n) =
\EL(\prefix)$. Since both sequences $X_{1}(n)$ and $X_{2}(n)$ converge, we can
write
\begin{align*}
  \liminf_{n \to \infty} X_{1}(n) + \liminf_{n \to \infty} X_{2}(n) +
  \limsup_{n \to \infty} X_{3}(n) \leq \AEsup(\prefix \cdot \play) \leq
  \limsup_{n \to \infty} X_{1}(n) + \limsup_{n \to \infty} X_{2}(n) +
  \limsup_{n \to \infty} X_{3}(n).
\end{align*}
Hence, by a small change of variable,
\begin{align*}
  \AEsup(\prefix \cdot \play) = \EL(\prefix) + \limsup_{n \to \infty} X_{3}(n)
  = \EL(\prefix) + \limsup_{n \to \infty} \left[\frac{1}{n} \cdot \sum_{i =
      1}^{n-k-1} \EL(\play(i))\right] = \EL(\prefix) + \AEsup(\play),
\end{align*}
as, in the limit, the $(k+1)$ missing terms in the sum are negligible. The proof for $\AEinf$ is similar.
\end{proof}

\paragraph{Extraction of a best cycle.} The next lemma is crucial to prove that memoryless strategies suffice: under well-chosen conditions, one can always select a best cycle in a play\,---\,hence, there is no interest in mixing different cycles and no use for memory. It holds only for sequences of cycles \textit{that have energy level zero}: since they do not change the energy, they do not modify the $\AE$ of the following suffix of play, and one can decompose the $\AE$ as a weighted average over zero cycles.

\begin{lemma}[\textbf{Repeated zero cycles of bounded length}]
  \label{lem:AE_repeat}
  Let $\cycle_{1}, \cycle_{2}, \cycle_{3}, \ldots{}$ be an infinite sequence
  of cycles $\cycle_{i} \in \prefs(\Game)$ such that (i) $\play = \cycle_{1}
  \cdot \cycle_{2} \cdot \cycle_{3} \cdots{} \in \plays(\Game)$,\footnote{We slightly abuse the notation as we see cycles as sequences of \textit{edges}. The concatenation of cycles $\cycle_{a} = s\,s'\ldots{} s$ and $\cycle_{b} = s\,s'' \ldots{} s$ is to be understood as its natural interpretation $\cycle_{a} \cdot \cycle_{b} = s\,s'\ldots{} s\,s'' \ldots{} s$: the origin state $s$ only appears \textit{once} in the middle and not twice as it would with $\cycle_{a}$ and $\cycle_{b}$ seen as true sequences of states.}
  (ii)
  $\forall\,i \geq 1$, $\EL(\cycle_{i}) = 0$ and (iii) $\exists\, \ell \in
  \bbN*$ such that $\forall\,i \geq 1$, $\vert\cycle_{i}\vert \leq \ell$. Then
  the following properties hold.
\begin{enumerate}
\item\label{prop:weighted_average} The average-energy of $\play$ is the
  \textit{weighted average} of the average-energies of the cycles:
\begin{equation}
  \AEsup(\play) = \limsup_{k \rightarrow \infty} \left[\dfrac{\sum_{i=1}^{k} 
      \vert\cycle_{i}\vert \cdot \AE(\cycle_{i})}{\sum_{i=1}^{k}
      \vert\cycle_{i}\vert}\right].
\end{equation}
\item\label{prop:repeated_cycle} For any cycle $\cycle \in \prefs(\Game)$ such
  that $\EL(\cycle) = 0$, we have that $\AEsup(\cycle^{\omega}) =
  \AE(\cycle)$.
\item\label{prop:repeat_best_cycle} Repeating the best cycle gives the lowest
  $\AE$:
  $\inf_{i \in \bbN*} \AE(\cycle_{i}) = \inf_{i \in \bbN*}
  \AEsup((\cycle_{i})^{\omega}) \leq \AEsup(\play)$.
\end{enumerate}
Similar properties hold for $\AEinf$.
\end{lemma}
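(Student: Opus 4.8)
The plan is to prove the three properties in order, with property~\ref{prop:weighted_average} doing the heavy lifting and the other two following almost immediately from it. The starting point is to apply Lemma~\ref{lem:AE_prefix} iteratively: write $\play = \cycle_1 \cdot \cycle_2 \cdots$ and observe that, because every $\cycle_i$ has $\EL(\cycle_i) = 0$, the energy level at the beginning of any cycle $\cycle_{k+1}$ is exactly $0$, i.e. $\EL(\cycle_1 \cdots \cycle_k) = 0$ for all $k$. Consequently the energy profile of $\play$ is just the concatenation of the energy profiles of the individual cycles, each started from level $0$. This is the key structural observation; the bound $\ell$ on cycle lengths from hypothesis (iii) is what will let us control the error terms that arise when the truncation point $n$ falls in the middle of a cycle rather than at a cycle boundary.

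For property~\ref{prop:weighted_average}, I would fix $n$, let $k = k(n)$ be the largest index such that $\vert\cycle_1\vert + \cdots + \vert\cycle_k\vert \le n$, and split the sum $\sum_{i=1}^n \EL(\play(i))$ into the contribution of the first $k$ complete cycles plus a remainder coming from the (at most $\ell$) leftover steps inside $\cycle_{k+1}$. The contribution of the first $k$ complete cycles is exactly $\sum_{i=1}^k \vert\cycle_i\vert \cdot \AE(\cycle_i)$ (using that each cycle starts at energy $0$, so $\AE$ of the finite cycle prefix is the average of its own energy levels), and this is a sum of at most $k$ terms each bounded in absolute value by $\ell \cdot \largestW$; meanwhile $\sum_{i=1}^{k}\vert\cycle_i\vert$ and $n$ differ by at most $\ell$. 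Dividing by $n$ and taking $\limsup$, the leftover/remainder terms vanish in the limit (they are $O(\ell^2\largestW / n)$, or more precisely bounded by a constant over $n$ since $\ell$ and $\largestW$ are fixed), and $n$ can be replaced by $\sum_{i=1}^k \vert\cycle_i\vert$ up to a negligible factor $1 + O(\ell/n)$. This yields
\begin{equation*}
  \AEsup(\play) = \limsup_{n\to\infty} \frac{1}{n}\sum_{i=1}^n \EL(\play(i)) = \limsup_{k\to\infty} \frac{\sum_{i=1}^k \vert\cycle_i\vert\cdot\AE(\cycle_i)}{\sum_{i=1}^k \vert\cycle_i\vert},
\end{equation*}
where the passage from $\limsup_n$ to $\limsup_k$ uses that $k(n)\to\infty$ and that consecutive values of $\sum_{i=1}^{k}\vert\cycle_i\vert$ differ by at most $\ell$, so no subsequential limit is lost. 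Property~\ref{prop:repeated_cycle} is then the special case $\cycle_i = \cycle$ for all $i$: the weighted average becomes the constant $\AE(\cycle)$, so $\AEsup(\cycle^\omega) = \AE(\cycle)$ (note $\cycle^\omega$ indeed satisfies the hypotheses since $\EL(\cycle)=0$ and all cycles have length $\vert\cycle\vert$). Property~\ref{prop:repeat_best_cycle} follows because a weighted average of the quantities $\AE(\cycle_i)$ is always at least their infimum, and by property~\ref{prop:repeated_cycle} that infimum equals $\inf_i \AEsup((\cycle_i)^\omega)$; hence $\inf_{i}\AE(\cycle_i) = \inf_i \AEsup((\cycle_i)^\omega) \le \AEsup(\play)$.

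The main obstacle I anticipate is the bookkeeping in property~\ref{prop:weighted_average}: one must be careful that the $\limsup$ over $n$ and the $\limsup$ over $k$ genuinely coincide. This is exactly where hypothesis (iii) is indispensable — without a uniform length bound $\ell$, a single enormous cycle could make the partial averages oscillate between cycle boundaries in a way that is not captured by the sequence indexed by $k$, and the remainder term need not be negligible relative to $n$. With the bound $\ell$ in hand, the remainder is uniformly $O(1/n)$ and the two $\limsup$s agree. The $\AEinf$ versions are proved identically, replacing $\limsup$ by $\liminf$ throughout; the error-term estimates are the same since they are two-sided bounds.
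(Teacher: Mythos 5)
Your proof is correct and takes essentially the same route as the paper's: decompose the partial sums at cycle boundaries, use the fact that each zero cycle starts at energy level zero to identify the complete-cycle contribution with $\sum_{i}\vert\cycle_{i}\vert\cdot\AE(\cycle_{i})$, and invoke the uniform length bound $\ell$ to make the remainder and the discrepancy between $n$ and $\sum_{i\le k(n)}\vert\cycle_{i}\vert$ negligible, so that the $\limsup$ over $n$ and over $k$ coincide. Points~2 and~3 are then derived exactly as in the paper.
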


Observe that since we assume a bound $\ell \in \bbN*$ on the length of cycles,
and the game is played on a finite graph,
Point~\ref{prop:repeat_best_cycle} of Lem.~\ref{lem:AE_repeat} does actually allow to select a \textit{best} cycle:
the set of possible cycles of length at most~$\ell$ is finite and the infimum
is reached, hence can be replaced by the miminum.

\begin{proof}
We prove the three points for $\AEsup$, similar arguments can be applied for $\AEinf$.
  Consider Point~\ref{prop:weighted_average}. Let $\play =
  s_{0}^{1}\ldots{}s_{\vert\cycle_{1}\vert}^{1}
  s_{1}^{2}\ldots{}s_{\vert\cycle_{2}\vert}^{2}s_{1}^{3} \ldots{}$ where $s_{j}^{i}$
  denotes the $j$-th state of cycle $\cycle_{i}$, with $\cycle_{1} = s_{0}^{1}\ldots{}s_{\vert\cycle_{1}\vert}^{1}$ and for all $i > 1$, $\cycle_{i} = s^{i-1}_{\vert\cycle_{i-1}\vert}s_{1}^{i}\ldots{}s^{i}_{\vert\cycle_{i}\vert}$.   Essentially, $s^{i-1}_{\vert\cycle_{i-1}\vert}$ is both the last state of $\cycle_{i-1}$ and the first one of $\cycle_{i}$: it can also be seen as $s^{i}_{0}$ and we later use both notations depending on the role we consider for this state. Given index $k \in \bbN$ of
  a state $s_{k}$ in the classical formulation $\play =
  s_{0}s_{1}s_{2}\ldots{}$ such that $s_{k}$ denotes state $s_{j}^{i}$ in our
  new formulation $\play =
  s_{0}^{1}\ldots{}s_{\vert\cycle_{1}\vert}^{1}
  s_{1}^{2}\ldots{}s_{\vert\cycle_{2}\vert}^{2}s_{1}^{3} \ldots{}$, we define $c(k) = i$ and $p(k) = j$, respectively denoting
  the index of the corresponding cycle and the position of state $s_{k}$
  within this cycle. We can rewrite the definition of the average-energy
  of~$\play$ as
\begin{align}
\label{eq:repeated_cycles}
\AEsup(\play) &=
  \limsup_{n \to \infty} \left[\dfrac{1}{n} \sum_{k = 1}^{n}
  \EL(\play(k))\right] = \limsup_{n \to \infty} \left[\dfrac{1}{n}
  \left(\sum_{i = 1}^{c(n) - 1} \sum_{j = 1}^{\vert\cycle_{i}\vert}
    \EL(s_{0}^{1}\ldots{}s_{j}^{i})\:+\: \sum_{j = 1}^{p(n)}
    \EL(s_{0}^{1}\ldots{}s_{j}^{c(n)})\right)\right].
\end{align}
Now observe that since all cycles are such that $\EL(\cycle_{i}) = 0$, we have
that $\EL(s_{0}^{1}\ldots{}s_{j}^{i}) = \EL(s_{0}^{i}\ldots{}s_{j}^{i})$ for
all indices $i \in \bbN*$, $j \in \{1, \ldots, \vert\cycle_{i}\vert\}$. In
other words, the energy level in a given position only depends on the current
cycle. Hence, for all $i \in \bbN*$,
\begin{equation*}
  \sum_{j = 1}^{\vert\cycle_{i}\vert} \EL(s_{0}^{1}\ldots{}s_{j}^{i}) =
  \sum_{j = 1}^{\vert\cycle_{i}\vert} \EL(s_{0}^{i}\ldots{}s_{j}^{i}) =
    \vert\cycle_{i}\vert \cdot \AE(\cycle_{i}) 
\end{equation*}
where the second equality follows by definition of $\AE(\cycle_{i})$.
Therefore, Eq.~\eqref{eq:repeated_cycles} becomes
\begin{equation*}
  \AEsup(\play) = \limsup_{n \to \infty} \left[\dfrac{1}{n} \left(\sum_{i =
        1}^{c(n) - 1} \vert\cycle_{i}\vert \cdot \AE(\cycle_{i})\:+\: \sum_{j
        = 1}^{p(n)} \EL(s_{0}^{c(n)}\ldots{}s_{j}^{c(n)})\right)\right]. 
\end{equation*}
Recall that, by hypothesis, there exists $\ell \in \bbN*$ such that for all $i
\geq 1$, $\vert\cycle_{i}\vert \leq \ell$. Observe that the boundedness of
cycles length implies that 
  (a)~$p(n) \leq \ell$, 
  (b)~$\sum_{j = 1}^{p(n)} \EL(s_{0}^{c(n)}\ldots{}s_{j}^{c(n)})$ is bounded,
  and 
  (c)~$\sum_{i =
    1}^{c(n) - 1} \vert\cycle_{i}\vert \leq n = \sum_{i = 1}^{c(n) - 1} \vert\cycle_{i}\vert + p(n) \leq \sum_{i =
    1}^{c(n) - 1} \vert\cycle_{i}\vert + \ell$. 
Combining those three arguments, we obtain that 
\begin{equation*}
 \limsup_{n \to \infty} \left[\dfrac{\sum_{i = 1}^{c(n) - 1}
      \vert\cycle_{i}\vert \cdot \AE(\cycle_{i})}{\sum_{i = 1}^{c(n) - 1}
      \vert\cycle_{i}\vert + \ell}\right] \leq \AEsup(\play) \leq \limsup_{n \to \infty} \left[\dfrac{\sum_{i = 1}^{c(n) - 1}
      \vert\cycle_{i}\vert \cdot \AE(\cycle_{i})}{\sum_{i = 1}^{c(n) - 1}
      \vert\cycle_{i}\vert }\right] 
\end{equation*}
Hence,
\begin{equation*}
\AEsup(\play) = \limsup_{k \rightarrow \infty}
  \left[\dfrac{\sum_{i=1}^{k} \vert\cycle_{i}\vert \cdot
      \AE(\cycle_{i})}{\sum_{i=1}^{k} \vert\cycle_{i}\vert}\right]
\end{equation*}
as claimed by Point~\ref{prop:weighted_average}.

Now consider Point~\ref{prop:repeated_cycle}. For any cycle $\cycle \in
\prefs(\Game)$ such that $\EL(\cycle) = 0$, all three hypotheses~\textit{(i)},
\textit{(ii)}, and~\textit{(iii)} are clearly satisfied, with $\ell =
\vert\cycle\vert$. Hence by Point~\ref{prop:weighted_average}, we have that
\begin{equation*}
\AEsup(\cycle^{\omega}) = \limsup_{k \rightarrow \infty} \left[\dfrac{k\cdot
    \vert\cycle\vert \cdot \AE(\cycle)}{k\cdot \vert\cycle\vert}\right] =
\AE(\cycle). 
\end{equation*}

Finally, we prove Point~\ref{prop:repeat_best_cycle}. The equality
straightforwardly follows from Point~\ref{prop:repeated_cycle}. It remains
to consider the inequality. By definition of the infimum, we have that, for
all $k \geq 1$,
\begin{equation*}
\inf_{i \in \bbN*} \AE(\cycle_{i}) = \dfrac{\sum_{i=1}^{k}
  \vert\cycle_{i}\vert \cdot \inf_{i \in \bbN*}
  \AE(\cycle_{i})}{\sum_{i=1}^{k} \vert\cycle_{i}\vert} \leq
\dfrac{\sum_{i=1}^{k} \vert\cycle_{i}\vert \cdot
  \AE(\cycle_{i})}{\sum_{i=1}^{k} \vert\cycle_{i}\vert}. 
\end{equation*}
Hence by taking the limit, we obtain 
\begin{equation*}
  \inf_{i \in \bbN*} \AE(\cycle_{i}) = \limsup_{k \rightarrow \infty}
  \left[\inf_{i \in \bbN*} \AE(\cycle_{i})\right] \leq \limsup_{k \rightarrow
    \infty} \left[\dfrac{\sum_{i=1}^{k} \vert\cycle_{i}\vert \cdot
      \AE(\cycle_{i})}{\sum_{i=1}^{k} \vert\cycle_{i}\vert}\right] =
  \AEsup(\play). 
\end{equation*}
This concludes our proof.
\end{proof}

\subsection{One-player games}

\textit{We assume that the unique player is $\playerOne$}, hence that $S_{2} = \emptyset$. The proofs are similar for the case where all states belong to $\playerTwo$ (i.e., $S_{1} = \emptyset$). Similarly, we present our results for the $\AEsup$ variant, but they carry over to the $\AEinf$ one. Actually, since we show that we can restrict ourselves to \textit{memoryless} strategies, all consistent outcomes will be periodic and thus both variants will be equal over those outcomes. 

\paragraph{Memoryless determinacy.} Intuitively, we use Lem.~\ref{lem:AE_prefix} and Lem.~\ref{lem:AE_repeat} to transform any arbitrary path into a simple \textit{lasso path}, repeating a unique simple cycle, yielding an $\AE$ at least as good, thus proving that any threshold achievable with memory can also be achieved without it.

\label{subsec:ae_one}
\begin{theorem}
\label{thm:aeg_memoryless}
Memoryless strategies are sufficient to win one-player $\AEG$ games.
\end{theorem}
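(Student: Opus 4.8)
The plan is to reduce, using Lemma~\ref{lem:MPInfToAE}, to the single nontrivial regime, and then to shorten an arbitrary winning play into a simple lasso of no larger average-energy. As the game is one-player, a strategy of $\playerOne$ has a single outcome, so it suffices to prove: if a play $\play$ from $\initState$ has $\AEsup(\play) \le t$, then some memoryless strategy has an outcome $\play^{\star}$ with $\AEsup(\play^{\star}) \le t$. If $\playerOne$ can ensure mean-payoff strictly below~$0$, Lemma~\ref{lem:MPInfToAE}(\ref{prop:MPneg}) directly yields a memoryless strategy with $\AEsup = -\infty \le t$; if $\playerOne$ cannot even ensure $\MeanPayOff(0)$, Lemma~\ref{lem:MPInfToAE}(\ref{prop:MPpos}) says no strategy wins $\AvgEnergyLevel(t)$ and there is nothing to prove. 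So we may assume $\playerOne$ ensures $\MeanPayOff(0)$ but no smaller threshold; since in a one-player game the mean-payoff value is the least cycle-mean over reachable simple cycles, this means every simple cycle reachable from $\initState$ has nonnegative total weight, and at least one has weight~$0$.

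I then analyze the shape of a winning play $\play = s_0 s_1 \ldots$ in this regime. By Lemma~\ref{lem:AEbetweenTP}, $\TPinf(\play) \le \AEsup(\play) \le t$ is finite, so $\EL(\play(n))$ stays bounded above along infinitely many positions~$n$, which visit some state $s^{\star}$ infinitely often. Since every closed sub-walk of $\play$ splits into reachable — hence nonnegative — simple cycles, $\EL(\play(\cdot))$ is nondecreasing along successive visits of a fixed state; being additionally bounded and integer-valued on the visits of $s^{\star}$, it is eventually constant there. A sandwiching argument between two consecutive visits of $s^{\star}$ then shows that beyond some index~$N$: (i)~$\play$ stays inside the set $T$ of states seen infinitely often; (ii)~$\EL(\play(n))$ depends only on $s_n$, say $\EL(\play(n)) = g(s_n)$ for a potential $g \colon T \to \mathbb{Z}$; and (iii)~every simple cycle occurring in $\play$ after step~$N$ has total weight~$0$. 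Writing $\play = \prefix \cdot \play'$ with $\prefix = \play(N)$, $\bar s = \last(\prefix)$, Lemma~\ref{lem:AE_prefix} gives $\AEsup(\play') = \AEsup(\play) - g(\bar s) \le t - g(\bar s)$, while in $\play'$ the energy at a position carrying state~$u$ equals $g(u) - g(\bar s)$.

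Now I extract a best cycle and rebuild a lasso. The occurrence-energies in $\play'$ being governed by~$g$, $\AEsup(\play')$ is the $\limsup$-average of $g(s_n)$ along $\play'$ shifted by $-g(\bar s)$, i.e., a one-player mean-payoff: any long prefix of $\play'$, seen as a walk, decomposes into simple cycles occurring in it — each of length $\le |S|$ and, by~(iii), of weight~$0$ — plus a bounded remainder, so (in the spirit of Lemma~\ref{lem:AE_repeat}(\ref{prop:weighted_average},\ref{prop:repeat_best_cycle})) this $\limsup$-average is at least $\min_D \bar g(D)$ over those cycles, where $\bar g(D) = \tfrac{1}{|D|}\sum_{u \in D} g(u)$, a minimum attained by some simple cycle $D^{\star}$ occurring in $\play'$; hence $\bar g(D^{\star}) \le \AEsup(\play') + g(\bar s) \le t$. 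Let $v$ be the first vertex of $D^{\star}$ that $\play$ reaches, and let $\sigma$ be a simple path from $\initState$ to~$v$ obtained by deleting the (nonnegative) cycles from the prefix of $\play$ up to its first visit of~$v$: then $\sigma$ meets $D^{\star}$ only at~$v$, and $\EL(\sigma) \le g(v)$ since the energies at visits of~$v$ never exceed $g(v)$. The memoryless strategy following $\sigma$ on $\sigma$'s vertices and looping $D^{\star}$ on $D^{\star}$'s vertices has outcome $\play^{\star} = \sigma \cdot (D^{\star}_v)^{\omega}$, where $D^{\star}_v$ is $D^{\star}$ rotated to start at~$v$ and still has energy level~$0$; a short computation with~$g$ gives $\AE(D^{\star}_v) = \bar g(D^{\star}) - g(v)$, so by Lemma~\ref{lem:AE_prefix} and Lemma~\ref{lem:AE_repeat}(\ref{prop:repeated_cycle}), $\AEsup(\play^{\star}) = \EL(\sigma) + \AE(D^{\star}_v) \le g(v) + \bar g(D^{\star}) - g(v) = \bar g(D^{\star}) \le t$, as desired.

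The main obstacle is steps~(ii)--(iii): turning the bare inequality $\AEsup(\play) \le t$ into the fact that, eventually, energies form a state potential and all occurring cycles are weight-$0$. This is exactly what makes Lemma~\ref{lem:AE_repeat} usable here, since its hypotheses (a concatenation of bounded-length, zero-energy cycles) fail for arbitrary plays and it cannot be invoked blindly. A secondary subtlety is that the rebuilt witness must genuinely be the outcome of a memoryless strategy, i.e., a \emph{simple} stem attached to a \emph{simple} cycle and disjoint from it except at the junction; taking the junction to be the first cycle-vertex reached and pruning cycles out of the prefix — which only lowers its energy, as all reachable cycles are nonnegative — is what keeps this compatible with the energy accounting.
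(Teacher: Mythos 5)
Your proof is correct, and while it shares the paper's skeleton, the heart of the argument takes a genuinely different route. Both proofs open with the same trichotomy via Lemma~\ref{lem:MPInfToAE} (a reachable strictly negative cycle gives $\AEsup = -\infty$ memorylessly; no reachable zero cycle gives $\AEsup = +\infty$ for every play; otherwise all reachable simple cycles are nonnegative and some are zero), and both aim at the same witness: a simple stem followed by an optimal zero cycle, assembled with Lemmas~\ref{lem:AE_prefix} and~\ref{lem:AE_repeat}. They differ in how an arbitrary winning play is reduced to such a lasso. The paper inducts on $\size{\trans}-\size{\states}$, splitting the outgoing edges of a state with several successors into two sub-games; the induction hypothesis forces the play to be a concatenation of simple cycles all anchored at a single state, which is exactly the situation covered by Lemma~\ref{lem:AE_repeat}. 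You instead operate directly on an arbitrary (possibly infinite-memory) winning play: from $\TPinf \le \AEsup \le t$ and the nonnegativity of all reachable cycles you extract an eventually constant energy at some recurrent state, deduce that the energy eventually becomes a potential $g$ of the current state (so every cycle built from traversed edges is a zero cycle by telescoping), rewrite $\AEsup$ as a Ces\`aro average of $g$, and lower-bound it by the best simple cycle's $g$-average through a stack decomposition of prefixes into simple cycles plus a bounded remainder. This is the key point: the decomposition works even though the cycles are not co-anchored, precisely because the potential makes each cycle's contribution independent of where it sits in the play, which is why Lemma~\ref{lem:AE_repeat} can be invoked only ``in spirit'' rather than literally. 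Your route buys a direct, induction-free argument at the cost of the potential-function analysis; the paper's route avoids that analysis at the cost of the induction and the sub-game bookkeeping. Your write-up also makes explicit a point the paper leaves implicit, namely that the rebuilt witness must be a simple stem meeting a simple cycle only at the junction so that it really is the outcome of a memoryless strategy.
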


\begin{proof}
  As a preliminary step, we check whether the graph contains a reachable
  strictly negative cycle, e.g., using the Bellman-Ford algorithm in $\mathcal{O}(\vert S\vert \cdot \vert T \vert)$-time. If so, then $\playerOne$ can ensure a strictly negative mean-payoff, and by Point~\ref{prop:MPneg} of Lem.~\ref{lem:MPInfToAE}, a memoryless strategy exists to make the average-energy be~$-\infty$: such a strategy consists in reaching and repeating the negative simple cycle forever.
  
  Now, assume that the graph contains no (reachable) strictly negative cycle. If~the
  graph also contains no zero cycle, then the energy level necessarily
  diverges to~$+\infty$, and the average-energy is~$+\infty$ along any run. Indeed, we are in the case of Point~\ref{prop:MPpos} of Lem.~\ref{lem:MPInfToAE}. Any strategy is optimal in that case: in particular, any memoryless strategy is.

 For the rest of this proof, we consider the remaining case of graphs that contain no reachable strictly negative
  cycle, but that do contain zero cycles. We will prove that memoryless strategies suffice for $\playerOne$ in those games, by induction on the number of choices of $\playerOne$. Given a game $\Game =
(S_1, S_2 = \emptyset, \trans, \weg)$, we define $d_\Game = \size{\trans} - \size{S}$. Since we assume graphs to be deadlock-free, we have that $d_\Game \geq 0$ for any game $\Game$. We consider induction on the value $d_\Game$. For every game $\Game$ such that $d_\Game = 0$ and initial state $\initState \in S$, $\playerOne$ wins for the $\AE$ objective for threshold $t \in \mathbb{Q}$ iff he wins with a memoryless strategy: indeed, $\playerOne$ actually has no choice at all in $\Game$, which is reduced to a unique outcome from $\initState$.

Now assume that memoryless strategies suffice for $\playerOne$ in every game $\Game$ such that $d_\Game \leq m$ for some $m \in \mathbb{N}$. We claim that they also suffice in every game $\Game$ such that $d_\Game = m+1$. Observe that if this holds, we are done as it proves that memoryless strategies suffice for $\playerOne$ in all one-player $\AE$ games. Let $\Game$ be such a game with $d_\Game = m+1$. Recall that in $\Game =
(S_1, S_2 = \emptyset, \trans, \weg)$ there is no strictly negative cycle by hypothesis. Let $s$ be a state of $\Game$ such that $s$ has at least two outgoing edges. Such a state necessarily exists since $d_\Game \geq 1$. Consider a partition of the outgoing edges of $s$ in two non-empty sets $A$, $B$ such that $A \uplus B = \{(s_1, s_2) \in \trans \mid s_1 = s\}$. According to this partition, we can define in the natural way two sub-games $\Game_A=
(S_1, S_2 = \emptyset, \trans \setminus B, \weg)$ and $\Game_B =
(S_1, S_2 = \emptyset, \trans \setminus A, \weg)$ such that $d_{\Game_A} \leq m$ and $d_{\Game_B} \leq m$. By induction hypothesis, we know that memoryless strategies suffice to play optimally for the $\AE$ objective in those two sub-games. First, observe that if $\playerOne$ has a memoryless winning strategy $\sigma$ in either $\Game_A$ or $\Game_B$ for threshold $t \in \mathbb{Q}$, then this strategy remains winning in $\Game$. What we need to show is that if $\playerOne$ cannot win in both $\Game_A$ and $\Game_B$, then he also cannot win in $\Game$, even using memory in $s$: in the following, we assume that $\playerOne$ is memoryless in any other state $s' \neq s$ (following the induction hypothesis) and we show that mixing cycles in $s$ does not help him.

By contradiction, assume that $\playerOne$ cannot win in both $\Game_A$ and $\Game_B$, but he has a winning strategy $\sigma$ in $\Game$, for the same threshold $t$. Let $\play$ be the outcome consistent with $\sigma$. Two cases are possible.

First, state $s$ is seen \textit{finitely often} along $\play$. In this case, we apply Lemma~\ref{lem:AE_prefix} repeatedly on $\play$ to iteratively remove all cycles on $s$. Since there is no strictly negative cycle in $\Game$, we know that removing one cycle cannot increase the average-energy of the play (it either stays the same if the cycle is a zero cycle, or decreases if it is a strictly positive one). Since $s$ is seen finitely often, we eventually obtain a play $\play'$ that sees $s$ at most once. Therefore, this play either belongs to $\Game_A$ or $\Game_B$ (both if $s$ is never visited). Furthermore, it has average-energy at most $t$ by construction. This contradicts the claim that $\playerOne$ has no winning strategy in both sub-games and concludes the proof in this case.

Second, state $s$ is seen \textit{infinitely often} along $\play$. Since $\playerOne$ is memoryless outside $s$, $\play$ only contains simple cycles and can be written as $\play = \rho \cdot \cycle_{1}
  \cdot \cycle_{2} \cdot \cycle_{3} \cdots{}$ where $\rho$ is an acyclic prefix ending in $s$ and for all $i \geq 1$, $\cycle_{i}$ is a simple cycle on $s$. Observe that every cycle $\cycle_{i}$ belongs either to $\Game_A$ or to $\Game_B$. Furthermore, since $\play$ is winning and there is no strictly negative cycle in $\Game$, only finitely many indices $i_1$, \ldots{}, $i_k$ may correspond to a strictly positive cycle. With the same reasoning as above (repeated application of Lemma~\ref{lem:AE_prefix}), we have that the play $\play' = \rho \cdot \cycle_{i_k+1} \cdot \cycle_{i_k +2}\cdots{}$, obtained by removing the first cycles up to index $i_k$, necessarily has a lower or equal average-energy: hence it is also winning. Now observe that the sequence of cycles $\play'' = \cycle_{i_k+1} \cdot \cycle_{i_k +2}\cdots{}$ may still involve simple cycles from both $\Game_A$ and $\Game_B$. Still, as all cycles are of length at most $\size{S}$, and are zero cycles, we can apply Lemma~\ref{lem:AE_repeat} to extract one best cycle $\cycle_{j}$, $j > i_k$. Putting all this together, we have that $\play''' = \rho \cdot (\cycle_j)^{\omega}$ is such that $\AEsup(\play''') \leq \AEsup(\play)$. Furthermore, $\play'''$ is a simple lasso path that belongs either to $\Game_A$ or to $\Game_B$ (as it now uses a unique outgoing edge from $s$). Consequently, $\play'''$ describes a winning strategy in one of the sub-games, which contradicts our hypothesis and concludes our proof in this case too.
\end{proof}

\paragraph{Polynomial-time algorithm.} We now know the form of optimal memoryless strategies: an optimal lasso path $\play = \prefix \cdot \calC^\omega$ w.r.t.~the $\AE$. We establish a polynomial-time algorithm to solve one-player $\AEG$ games.

The crux of our algorithm  consists in
  computing, for each state~$s$, the best\,---\,w.r.t.~the~$\AE$\,---\,\textit{zero} cycle~$\calC_s$ starting and ending in~$s$ (if~any). This is achieved through linear programming~(LP) over expanded graphs. For~each state~$s$ and length $k \in \{1, \ldots{}, \vert \states \vert\}$, we compute the best cycle $\calC_{s, k}$ by considering a graph (Fig.~\ref{fig:LPGraphForAE}) that models all cycles of length~$k$ from~$s$ and that uses $k+1$ levels and two-dimensional weights on edges of the form $(c, l\cdot c)$ where $c$ is the weight in the original game and $l \in \{k, k-1, \ldots{}, 1\}$ is the level of the edge. In~the~LP, we~look for cycles $\calC_{s,k}$ of length $k$ on $s$ such that (a)~the~sum of weights in the first dimension is zero (thus $\calC_{s,k}$ is a \textit{zero} cycle), and (b)~the~sum in the second one is minimal. Fortunately, this sum is exactly equal to $\AE(\calC) \cdot k$ thanks to the $l$ factors used in the weights of the expanded graph. Hence, we obtain the optimal cycle $\calC_{s, k}$ (in~polynomial time).
Doing this $\vert \states \vert$ times for each state~$s$, we~obtain for each of them the optimal cycle $\calC_{s}$ (if one zero cycle exists). Then, by
  Lem.~\ref{lem:AE_prefix}, it~remains to compute the least~$\EL$
  with which each state~$s$ can be reached using classical graph techniques (e.g., Bellman-Ford), and to pick the optimal combination to obtain an optimal memoryless strategy, in polynomial time.

\begin{figure}[htb]
        \centering
\subfloat[Original game.]{\scalebox{1}{\begin{tikzpicture}[->,>=stealth',shorten >=1pt,auto,node
    distance=2.5cm,bend angle=45, scale=0.5, font=\normalsize,inner sep=.5mm]
    \everymath{\scriptstyle}
    \tikzstyle{p1}=[draw,circle,text centered,minimum size=7mm,text width=5mm]
    \tikzstyle{p2}=[draw,rectangle,text centered,minimum size=7mm,text width=4mm]
    \node[p1]  (0)  at (0, 0) {$s'$};
    \node[p1]  (1) at (3, 0) {$s$};
    \node[p1]  (2) at (6, 0) {$s''$};
    
    \coordinate[shift={(0mm,5mm)}] (init) at (1.north);
    \path
    (init) edge (1);
	\draw[->,>=latex] (0) to[out=40,in=140] node[above] {$1$} (1);
	\draw[->,>=latex] (1) to[out=40,in=140] node[above] {$1$} (2);
	\draw[->,>=latex] (2) to[out=220,in=-40] node[below] {$-1$} (1);
	\draw[->,>=latex] (1) to[out=220,in=-40] node[below] {$-1$} (0);
      \end{tikzpicture}}}
      \hspace{2cm}
      \subfloat[Expanded graph for $k = 2$.]{\scalebox{1}{\begin{tikzpicture}[->,>=stealth',shorten >=1pt,auto,node
    distance=2.5cm,bend angle=45, scale=.8, font=\small,inner sep=.5mm]
    \everymath{\scriptstyle}
    \tikzstyle{p1}=[draw,ellipse,text centered,minimum size=7mm,text width=8.5mm]
    \tikzstyle{p2}=[draw,rectangle,text centered,minimum size=7mm,text width=4mm]
    \node[p1]  (0)  at (0, 0) {$(s, 2)$};
    \node[p1]  (1) at (3, 0.6) {$(s', 1)$};
    \node[p1]  (2) at (3, -0.6) {$(s'', 1)$};
    \node[p1]  (4) at (6, 0) {$(s, 0)$};
    \coordinate[shift={(-5mm,0mm)}] (init) at (0.west);
    \path
    (0) edge node[above,xshift=-2mm] {$(-1, -2)$} (1)
    (0) edge node[below,xshift=-2mm] {$(1, 2)$} (2)
    (1) edge node[above,xshift=2mm] {$(1, 1)$} (4)
    (2) edge node[below,xshift=2mm] {$(-1, -1)$} (4)
    (init) edge (0);
      \end{tikzpicture}}}
	\caption{The best cycle $\calC_{s, 2}$ is computed by looking for a
          path from $(s,2)$ to $(s,0)$ with sum zero in the first dimension
          (zero cycle) and minimal sum in the second dimension (minimal
          $\AEG$). Here, the cycle via $s'$ is clearly better, with $\AEG$
          equal to $-1/2$ in contrast to $1/2$ via $s''$.}
	\label{fig:LPGraphForAE}
\end{figure}

\begin{theorem}
\label{thm:ae_onePlayer_PTIME}
  The $\AEG$ problem for one-player games is in \PTIME.	
\end{theorem}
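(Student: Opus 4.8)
The plan is to compute the optimal (minimal) value $v$ of the $\AEsup$ measure that $\playerOne$ can guarantee from $\initState$, and then answer ``yes'' exactly when $v \le t$. This reduction is sound: by Thm.~\ref{thm:aeg_memoryless} memoryless strategies suffice and there are only finitely many of them, so $v$ is attained by an optimal memoryless strategy, whose unique outcome is a lasso $\prefix \cdot \calC^\omega$ with $\prefix$ a simple acyclic prefix and $\calC$ a simple cycle. I would start by restricting the game to the states reachable from $\initState$ and running Bellman--Ford: if a strictly negative cycle is reachable then, by Point~\ref{prop:MPneg} of Lem.~\ref{lem:MPInfToAE}, $v = -\infty$ and $\playerOne$ wins for every threshold. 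For the rest, assume no reachable strictly negative cycle; in particular, every closed walk through a reachable state has non-negative energy level.

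The core computation follows the sketch preceding the theorem. For each state~$s$ and each length $k \in \{1,\dots,\size{S}\}$ I would build the layered graph of Fig.~\ref{fig:LPGraphForAE} on the $k+1$ levels $k, k-1,\dots,0$, with a copy of each state per level and, for each original edge $(u,u')$ and level $l \in \{1,\dots,k\}$, an edge from $(u,l)$ to $(u',l-1)$ of two-dimensional weight $(\weg(u,u'),\, l\cdot\weg(u,u'))$; the $(s,k)$-to-$(s,0)$ paths then correspond to closed walks of length~$k$ on~$s$, the first coordinate of a path summing to the walk's energy level and the second summing to $k\cdot\AE$ of the walk (the edge at level~$l$ contributes to exactly the last~$l$ of the $k$ partial sums). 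I would then solve the linear program: minimise the total second coordinate of a unit flow from $(s,k)$ to $(s,0)$ subject to its total first coordinate being~$0$\,---\,a polynomial-size LP, hence polynomial-time solvable. Taking the best length~$k$ gives the minimal-$\AE$ zero cycle $\calC_s$ based at~$s$ (and reports infeasibility if $s$ lies on no zero cycle); decomposing a zero closed walk into simple cycles, each necessarily a zero cycle here, together with the weighted-average identity of Lem.~\ref{lem:AE_repeat}(\ref{prop:weighted_average}), shows this is the best $\AE$ attainable by cycling forever at~$s$, so bounding $k$ by $\size{S}$ loses nothing. Finally, using one more Bellman--Ford (well-defined since no negative cycle is reachable) I would compute $\mathrm{m}(s)$, the least energy level of a prefix from $\initState$ to~$s$, and output $v = \min_s\big(\mathrm{m}(s) + \AE(\calC_s)\big)$ over reachable $s$ carrying a zero cycle, declaring $v = +\infty$ (hence a loss for $\playerOne$, as no rational $t$ works) when there is no such~$s$ (Point~\ref{prop:MPpos} of Lem.~\ref{lem:MPInfToAE}). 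Soundness of the formula is Lem.~\ref{lem:AE_prefix} combined with Lem.~\ref{lem:AE_repeat}(\ref{prop:repeated_cycle}): $\AEsup(\prefix\cdot\calC_s^\omega) = \EL(\prefix)+\AE(\calC_s)$; and when a shortest prefix and $\calC_s$ overlap, one re-bases $\calC_s$ at the first shared state, which\,---\,again because no negative cycle is reachable\,---\,can only lower the value, so a genuine memoryless lasso strategy attaining $v$ exists. The whole procedure runs in polynomial time: $\mathcal{O}(\size{S}^2)$ LPs of polynomial size plus a handful of reachability and shortest-path computations.

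The step I expect to be the main obstacle is showing that the LP is \emph{exact}, i.e.~that its optimum is realised by an honest cycle and not merely by a fractional combination of walks\,---\,and this is exactly where peeling off the degenerate cases beforehand is indispensable. The argument I would give: the layered graph is acyclic, so any feasible point of the LP is a convex combination of $(s,k)$-to-$(s,0)$ paths, each of which is a closed walk of non-negative energy level; since the combination has total first coordinate~$0$ while each summand contributes a non-negative amount, every path in the support must itself have first-coordinate sum~$0$, i.e.~be a zero closed walk, so the objective is a convex combination of the values $k\cdot\AE(\cdot)$ of these zero walks and is therefore minimised by a single one of them, yielding an integral optimum that is a genuine zero cycle.
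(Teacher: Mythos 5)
Your proposal is correct and follows essentially the same route as the paper: a preliminary Bellman--Ford check for reachable strictly negative cycles, then for each state and each length $k\le\size{S}$ the same layered graph with weights $(c,l\cdot c)$ and the same LP (zero first coordinate, minimise the second), and finally the combination $\min_s\bigl(\mathrm{m}(s)+\AE(\calC_s)\bigr)$ via Lem.~\ref{lem:AE_prefix}. Your integrality argument via flow decomposition of the acyclic layered graph into a convex combination of source--sink paths, each forced to be a zero walk by the absence of negative cycles, is a slightly cleaner rendering of the paper's argument, which instead extracts a single positive-flow edge sequence and derives a contradiction from Condition~\eqref{lp4}.
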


\begin{proof}
Let $\initState$ be the initial state and $t \in \mathbb{Q}$ be the threshold.
  From Thm.~\ref{thm:aeg_memoryless}, we~can restrict our search to \textit{memoryless} strategies achieving average-energy less than or equal to~$t$. As noted in the proof of Thm.~\ref{thm:aeg_memoryless}, if a strictly negative simple cycle exists and can be reached from $\initState$, then the answer to the $\AEG$ problem is clearly $\textsf{Yes}$, as average-energy $-\infty$ is achievable. Checking if such a cycle exists and is reachable can be done in cubic time in the number of states (e.g., using Bellman-Ford to detect negative cycles).
  
Hence, we now assume that no negative cycle exists. The main part of our algorithm consists in
  computing, for each state~$s$, the least average-energy that can be achieved
  along a simple \textit{zero} cycle starting and ending in~$s$ (if~any). Indeed, strictly positive cycles should be avoided as there is no negative cycle to counteract them. Applying
  Lem.~\ref{lem:AE_prefix}, it~then remains to compute the least energy level
  with which each state $s$ can be reached (simple paths are sufficient as there
  are no negative cycles), and to pick the optimal combination. Again, this last part can be solved by using classical graph algorithms in cubic time in $\vert S\vert$.

  We now focus on computing the best zero cycle from a state~$s$. This is achieved by enumerating
  the possible lengths, from~$1$ to~$\size S$ (\textit{simple} cycles suffice). For a fixed length~$k$, 
  we~consider a new graph~$\calG_{s,k}$, made of $k+1$ copies of the original game~$\Game$. The
  states of~$\calG_{s,k}$ are pairs~$(u,l)$ with $u\in S$ and~$0\leq l\leq k$. The new graph is arranged in levels, indexed from $l = k$ for the top one to $l = 0$ for the bottom one: $l$ represents the number of steps remaining to close the cycle of length $k$.
  For each edge~$(u,u')$ of~$\Game$, with $w(u,u') = c$, and for each~$1\leq l\leq k$,
  except if both $u'=s$ and~$l<k$ (in~order to rule out intermediary visits to~$s$), 
  there is an edge from $(u,l)$ to $(u',l-1)$. This edge carries a
  pair of weights $(c,l\cdot c)$. Our aim is to find a path in this graph
  from~$(s,k)$ to~$(s,0)$ (hence this is a simple cycle of length~$k$) such that the
  sum of the weights on the first dimension is~zero (hence this is a
  zero cycle) and the sum on the second dimension is minimized (when
  divided by~$k$, this sum is precisely the average-energy, if~starting
  from energy level~zero). 

  This problem can be expressed as a linear program, with
  variables~$x_{u,u',l}$ for each edge~$u\to u'$ and each~$1\leq l\leq
  k$. While they are not required to take integer values, these variables are
  intended to represent the number of times the edge from~$(u,l)$
  to~$(u',l-1)$ is taken along a ``path'' in~$\calG_{s,k}$. The linear
  program is as follows:
\begin{center}
\begin{minipage}{.8\linewidth}
minimize $\sum x_{u,u',l}\cdot l\cdot w(u, u')$ subject to
\begin{enumerate}
\item $0\leq x_{u,u',l}\leq 1$ for all~$x_{u,u',l}$;
\item\label{lp2} for all~$(u,l)$ with $1\leq l\leq k-1$, 
  \(
  \sum_{u'} x_{u',u,l+1} = \sum_{u'} x_{u,u',l} 
  \);
\item\label{lp3}
  \(
  \sum_{u'} x_{s,u',k} = \sum_{u'} x_{u',s,1} =1 
  \);
\item\label{lp4} $\sum x_{u,u',l}\cdot w(u,u')=0$;
\item\label{lp5} $\sum x_{u,u',l} \geq 1$.
\end{enumerate}
\end{minipage}
\end{center}
Condition~\eqref{lp2} states that each state has the same amount of
``incoming'' and ``outgoing'' flow. Condition~\eqref{lp3} expresses the fact
that we start and end up in state~$s$. Condition~\eqref{lp4} encodes the fact
that we are looking for zero cycles, and Condition~\eqref{lp5} rules out the
(possible) trivial solution where all variables are~zero.

First observe that if this LP has no solution, then there is no zero cycle of length $k$ from $s$. Now, assume it has a solution~$(x_{u,u',l}^0)$: this solution minimizes $\sum
x_{u,u',l}\cdot l\cdot w(u,u')$. Consider a sequence of edges $s=u_k
\to u_{k-1} \to \cdots \to u_1 \to u_0=s$ for which $x_{u_l,u_{l-1},l}>0$ for
all~$l$. The existence of such a sequence easily follows from
Conditions~\eqref{lp2} and~\eqref{lp3}. Assume that this is not a zero cycle.
As there are no negative cycles, then this must be a positive cycle. But in
order to fulfill Condition~\eqref{lp4}, we~would need a negative cycle to
compensate for this positive cycle, hence implying contradiction. We~conclude
that any sequence of consecutive edges as selected above is a
zero cycle. Similarly, there cannot be a zero cycle of length~$k$ from~$s$
with better average-energy, as this would contradict the optimality of
this solution. We~thus have obtained an average-energy-optimal simple zero
cycle of length~$k$ from $s$, in polynomial time. Indeed, the LP is polynomial in the size of $\calG_{s,k}$, itself polynomial in the size of the original game: the expanded graph has its size bounded by $\vert \states \vert \cdot (k+1)$ and all weights are bounded by $k\cdot W$ with $k \leq \vert \states\vert$ and $W$ the largest absolute weight in the original game.

As discussed above, this process can be repeated for each state $s$ and each length $k$, $1 \leq k \leq \vert \states \vert$, hence at most $\vert \states \vert^{2}$ times. For each state, we select the best cycle among the $\vert S\vert$ possible ones (one for each length). Therefore, in polynomial time, we get a description of the best cycles w.r.t.~the average-energy, for each $s \in \states$. Clearly if no such cycle exists, then the answer to the $\AEG$ problem is \textsf{No}, as all cycles are strictly positive and the average-energy of any play will be $+\infty$. If some exist, we can find an optimal strategy by picking the best combination between such a cycle from a state $s$ and a corresponding prefix from $\initState$ to $s$ of minimal energy level. As presented before, this is achieved in polynomial time. Then the answer to the $\AEG$ problem is \textsf{Yes} if and only if this optimal combination yields average-energy at most equal to $t$. This concludes our proof.
\end{proof}

\subsection{Two-player games}
\label{subsec:ae_two}
\paragraph{Memoryless determinacy.} We now prove that memoryless strategies still suffice in two-player games. As discussed in Sect.~\ref{subsec:ae_tech}, most classical criteria do not apply. There is, however, one result that proves particularly useful. Consider any payoff function such that memoryless strategies suffice for \textit{both} \textit{one-player} versions ($S_{1} = \emptyset$, resp. $S_{2} = \emptyset$). In~\cite[Cor.~7]{GZ05}, Gimbert and Zielonka establish that memoryless strategies also suffice in \textit{two-player} games with the same payoff. Thanks to Thm.~\ref{thm:aeg_memoryless}, this entails the next theorem.

\begin{theorem}
\label{thm:ae_two_memoryless}
Average-energy games are determined and both players have memoryless optimal strategies.
\end{theorem}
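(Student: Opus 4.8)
The plan is to reduce the two-player statement to the one-player analysis already carried out in Theorem~\ref{thm:aeg_memoryless} and then invoke the transfer result of Gimbert and Zielonka~\cite[Cor.~7]{GZ05}. Their corollary states that, for any payoff function\,---\,here $\AEsup$, viewed as a total preorder on plays with values in $\mathbb{R}\cup\{-\infty,\infty\}$\,---\,if \emph{both} one-player versions of every arena admit memoryless optimal strategies (the arenas where $\playerTwo$ owns all states, and those where $\playerOne$ owns all states), then so does every two-player arena, for both players; in particular such games are determined. Thus the only thing I need to supply is memoryless optimality in each of the two one-player settings, and the bulk of that work is exactly Theorem~\ref{thm:aeg_memoryless}.

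The one genuine step is to upgrade ``memoryless strategies suffice to win $\AvgEnergyLevel(t)$'' into ``there is a memoryless \emph{optimal} strategy''. Fix a one-player game with $S_2=\emptyset$ and an initial state $\initState$. There are finitely many memoryless strategies, and along each of them the unique outcome is eventually periodic, so $\AEsup$ and $\AEinf$ coincide and take a well-defined value in $\mathbb{Q}\cup\{-\infty,\infty\}$ (the infinite values occurring exactly in the degenerate cases treated at the start of the proof of Theorem~\ref{thm:aeg_memoryless}). Hence the minimum $v^*$ of these finitely many values is attained, say by a memoryless strategy $\sigma^*$. For every rational $t<v^*$, no memoryless strategy wins $\AvgEnergyLevel(t)$, so by Theorem~\ref{thm:aeg_memoryless} \emph{no} strategy does; and for $t=v^*$ (when $v^*<\infty$) the strategy $\sigma^*$ wins. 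Therefore $v^*$ is the game value from $\initState$ and $\sigma^*$ is memoryless optimal for $\playerOne$. For the mirror one-player game with $S_1=\emptyset$, $\playerTwo$ is now the maximizing decision maker: the argument of Theorem~\ref{thm:aeg_memoryless} is insensitive to this swap (it only uses Lemmas~\ref{lem:AE_prefix} and~\ref{lem:AE_repeat}, which are statements about plays, together with the extraction of a single best simple cycle in the non-degenerate case), so by the same finiteness/periodicity reasoning memoryless strategies are optimal for $\playerTwo$ there as well\,---\,this is precisely the remark made at the beginning of Section~\ref{subsec:ae_one}.

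With both one-player cases established, \cite[Cor.~7]{GZ05} applies directly and yields, for every two-player $\AEG$ game and every initial state, a value together with memoryless optimal strategies for $\playerOne$ and $\playerTwo$; in particular the game is determined. I expect the only delicate point to be bookkeeping rather than mathematics: one must check that $\AEsup$ really fits the abstract framework of~\cite{GZ05} (a total preference order defined on \emph{all} plays, once the values $\pm\infty$ are allowed), and that the hypothesis needed there is the memoryless-optimality just derived rather than the weaker threshold phrasing of Theorem~\ref{thm:aeg_memoryless}\,---\,which is exactly what the finiteness-plus-periodicity observation above converts. Notably, no approximation by discounted games and no linear programming are needed on the two-player side; the whole theorem rests on Theorem~\ref{thm:aeg_memoryless} and~\cite{GZ05}.
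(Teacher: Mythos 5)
Your proposal follows exactly the paper's argument: memoryless sufficiency in both one-player versions (Thm.~\ref{thm:aeg_memoryless} and its mirror) combined with the transfer result of Gimbert and Zielonka~\cite[Cor.~7]{GZ05}. The finiteness-plus-periodicity step you add to pass from threshold-winning to memoryless \emph{optimality} is a detail the paper leaves implicit, and it is correct.
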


Observe that this result is true for both variants of the average-energy payoff function, namely $\AEsup$ and $\AEinf$. When both players play optimally, they can restrict themselves to memoryless strategies and both variants thus coincide as mentioned earlier.

\paragraph{Solving average-energy games.} Finally, consider the complexity of deciding the winner in a two-player $\AEG$ game. By Thm.~\ref{thm:ae_two_memoryless}, one can guess an optimal memoryless strategy for $\playerTwo$ and solve the remaining one-player game for $\playerOne$, in polynomial time (by Thm.~\ref{thm:ae_onePlayer_PTIME}). The converse is also true: one can guess the strategy of $\playerOne$ and solve the remaining game where $S_{1} = \emptyset$ in polynomial time. Thus, we obtain the following result.

\begin{theorem}
\label{thm:ae_npinter}
The $\AEG$ problem for two-player games is in \NP $\cap$ \coNP.
\end{theorem}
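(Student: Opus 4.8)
The plan is to derive Thm.~\ref{thm:ae_npinter} as a corollary of the two preceding results: memoryless determinacy of two-player $\AEG$ games (Thm.~\ref{thm:ae_two_memoryless}) together with the polynomial-time algorithm for one-player $\AEG$ games (Thm.~\ref{thm:ae_onePlayer_PTIME}). The mechanism is the classical scheme ``guess a memoryless strategy of one player, fix it, then solve the resulting one-player game'': applied to $\pI$ it yields \NP-membership, and applied to $\pII$ it yields \coNP-membership.

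For \NP-membership: by Thm.~\ref{thm:ae_two_memoryless}, if $\pI$ has a winning strategy in the instance $(\Game, \initState, t)$ then he has a \emph{memoryless} one, which is polynomially sized and can therefore be guessed. Given a guessed memoryless $\St_1 \in \strats_1$, fixing it in $\Game$ turns every state of $\pI$ into a state with a single successor; reassigning the ownership of all states to $\pII$ — which changes neither the set of plays nor the objective $\AvgEnergyLevel(t)$ — yields a one-player game $\Game[\St_1]$. The plays of $\Game[\St_1]$ are exactly the outcomes $\out(\initState, \St_1, \St_2)$ for $\St_2 \in \strats_2$, so the $\AEG$ decision problem on $\Game[\St_1]$ answers \textsf{Yes} precisely when every such outcome $\play$ satisfies $\AEsup(\play) \le t$, i.e.\ precisely when $\St_1$ is winning in $\Game$. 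By Thm.~\ref{thm:ae_onePlayer_PTIME} this check runs in polynomial time, so the procedure is a correct \NP\ algorithm (completeness following from the memoryless guarantee).

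For \coNP-membership: by Thm.~\ref{thm:ae_two_memoryless} the game is determined and $\pII$ also has a memoryless optimal strategy, hence $\pI$ has \emph{no} winning strategy iff $\pII$ has a memoryless $\St_2 \in \strats_2$ all of whose outcomes $\play$ satisfy $\AEsup(\play) > t$. Such an $\St_2$ serves as certificate for a ``No''-instance: fixing it and, symmetrically to the above, reassigning all states to $\pI$ gives a one-player game $\Game[\St_2]$ whose plays are exactly the outcomes consistent with $\St_2$; the $\AEG$ problem on $\Game[\St_2]$ — decidable in polynomial time by Thm.~\ref{thm:ae_onePlayer_PTIME} — answers \textsf{No} exactly when $\pI$ has no outcome with $\AEsup \le t$ against $\St_2$, i.e.\ exactly when $\St_2$ witnesses a ``No''-instance. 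The verifier thus negates the answer of this one-player query. Consequently ``No''-instances admit polynomially verifiable certificates, and the two-player $\AEG$ problem lies in \NP\ $\cap$ \coNP.

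I~expect no real obstacle here: the substance is already carried by Thms.~\ref{thm:ae_two_memoryless} and~\ref{thm:ae_onePlayer_PTIME}, and the rest is bookkeeping. The points needing a little care are that fixing a memoryless strategy genuinely produces a one-player game of size polynomial in that of $\Game$ with the same weight function; that reassigning the ownership of states having a unique successor changes neither the plays nor the objective; and that the value returned by the one-player algorithm — possibly $+\infty$ or $-\infty$ — is compared correctly with the rational threshold $t$, bearing in mind that in the \coNP\ direction the answer of the one-player $\AEG$ query has to be \emph{negated}.
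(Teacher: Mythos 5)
Your proposal is correct and follows essentially the same route as the paper: guess a memoryless strategy of one player (justified by Thm.~\ref{thm:ae_two_memoryless}), fix it, and solve the resulting one-player game in polynomial time via Thm.~\ref{thm:ae_onePlayer_PTIME}, applied in one direction for \NP\ and in the other for \coNP. Your additional bookkeeping (ownership reassignment, negating the one-player answer in the \coNP\ direction) is sound and merely spells out what the paper leaves implicit.
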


We complete our study by proving that $\MPG$ games can be encoded into $\AEG$ ones in polynomial time. The former are known to be in \NP $\cap$ \coNP but whether they belong to $\PTIME$ is a long-standing open question (e.g.,~\cite{ZP96,ipl68(3)-Jur,BCDGR11,Chatterjee201525}). Hence, w.r.t.~current knowledge, the \NP $\cap$ \coNP-membership of the $\AEG$ problem can be considered optimal. The~key of the construction is to double each edge of the original game and modify the weight function such that each pair of successive edges corresponding to such a doubled edge now has a total energy level of zero, and an average-energy that is exactly equal to the weight of the original edge. Then we apply decomposition techniques as in Lem.~\ref{lem:AE_repeat} to establish the equivalence.

\begin{theorem}
\label{thm:mp_to_ae}
Mean-payoff games can be reduced to average-energy games in polynomial time.
\end{theorem}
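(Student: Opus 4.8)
The plan is to turn a mean-payoff game $\Game=(S_1,S_2,\trans,\weg)$ with threshold $t\in\mathbb{Q}$ into an average-energy game $\Game'$ with the \emph{same} threshold, by subdividing every edge with a suitably weighted intermediate state. For each edge $(u,v)\in\trans$ with $c=\weg(u,v)$, I would add a fresh state $m_{(u,v)}$ and replace $(u,v)$ by the two edges $u\to m_{(u,v)}$ of weight $2c$ and $m_{(u,v)}\to v$ of weight $-2c$. Each new state has a single outgoing edge, so it may be assigned to either player (say to $\playerOne$); it never induces a genuine choice. The number of states and edges of $\Game'$ is linear in that of $\Game$ and its largest weight is $2\largestW$, so the transformation is computable in polynomial time. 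Two properties of each gadget are the point of this scaling: it is \emph{energy-neutral} ($2c+(-2c)=0$), so passing through it does not change the running energy level; and, read as a two-edge prefix from energy level $0$, its average-energy is $(2c+0)/2=c$, exactly the weight of the edge it replaces.

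Next I would set up the obvious correspondence between behaviours. A play $\play=s_0s_1s_2\cdots$ of $\Game$ maps to the play $\play'=s_0\, m_{(s_0,s_1)}\, s_1\, m_{(s_1,s_2)}\, s_2\cdots$ of $\Game'$, and conversely every play of $\Game'$ alternates original states and intermediate states, the latter recording which original edge was used; this is a bijection $\plays(\Game)\to\plays(\Game')$ compatible with strategies of both players, and sending memoryless strategies to memoryless strategies. Hence it suffices to prove $\AEsup(\play')=\MPsup(\play)$ for every $\play$ (and likewise for the $\AEinf$/$\MPinf$ variants). Since each gadget is energy-neutral, when $\play'$ is read from initial energy $0$ the energy level equals $0$ at every original state and equals $2\,\weg(s_i,s_{i+1})$ at the intermediate state following $s_i$; summing the energy levels over the first $2n$ positions of $\play'$ therefore gives exactly $2\cdot\EL(\play(n))$, so $\frac{1}{2n}\sum_{j=1}^{2n}\EL(\play'(j))=\frac{1}{n}\,\EL(\play(n))$, whose $\limsup$ over $n$ is $\MPsup(\play)$ by definition. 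Odd prefix lengths contribute one extra, bounded term (at most $2\largestW$) over a normalisation tending to $1$, so they do not change the value; this gives $\AEsup(\play')=\MPsup(\play)$, and the same computation with $\liminf$ gives $\AEinf(\play')=\MPinf(\play)$.

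An alternative that leans on the machinery already developed above\,---\,and is closer to the proof the paper seems to use\,---\,is to invoke memoryless determinacy of mean-payoff games together with that of average-energy games (Thm.~\ref{thm:ae_two_memoryless}), so that one only has to check the equivalence on lasso outcomes $\play=\prefix\cdot C^{\omega}$. The subdivided cycle $C'$ is a \emph{zero} cycle in $\Game'$ (the energy returns to its starting value around it), hence Point~\ref{prop:repeated_cycle} of Lem.~\ref{lem:AE_repeat} gives $\AEsup((C')^{\omega})=\AE(C')$, and a direct count shows $\AE(C')$ equals the mean-payoff of $C$; Lem.~\ref{lem:AE_prefix} then adds $\EL(\prefix')=0$, recovering $\AEsup(\play')=\MPsup(\play)$. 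Either way, through the bijection above, $\playerOne$ has a winning strategy for $\MeanPayOff(t)$ in $\Game$ from $\initState$ if and only if $\playerOne$ has one for $\AvgEnergyLevel(t)$ in $\Game'$ from $\initState$, which is exactly the claimed polynomial-time reduction.

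The construction is essentially the whole content of the argument; the only place that needs genuine care is the numerical bookkeeping of the second paragraph\,---\,in particular the weight-doubling, chosen so that the average-energy of a subdivided cycle equals the mean-payoff of the original cycle \emph{exactly} rather than up to a multiplicative constant, and the verification that neither the parity of the prefix length nor the bounded boundary terms perturb the $\limsup$ (and symmetrically the $\liminf$).
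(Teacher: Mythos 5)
Your construction is exactly the one the paper uses: each edge $(u,v)$ of weight $c$ is subdivided through a fresh $\playerOne$-state via two edges of weights $2c$ and $-2c$, the induced length-doubling bijection on plays is set up the same way, and the identity $\AEsup(\play')=\MPsup(\play)$ (and its $\liminf$ counterpart) is verified by the same energy-level bookkeeping. The proposal is correct and matches the paper's proof; your explicit treatment of odd-length prefixes is a minor extra precaution the paper leaves implicit.
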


\begin{proof}
Let~$G=(S_1,S_2,E,w)$ be a game, and $t\in\bbQ$ be the threshold for the mean-payoff problem. From~$G$, we~build
another game~$G'=(S'_1,S'_2,E',w')$ such that
\begin{itemize}
\item $S'_1=S_1\cup E$ and $S'_2=S_2$;
\item $E'$ contains two types of edges: 
  \begin{itemize}
  \item $(s,e)\in E'$ iff there exists $s'$ such that $e=(s,s')\in E$. Then
    $w'(s,e) = 2\cdot w(e)$.
  \item $(e,s') \in E'$ for any~$e=(s,s') \in E$. Then $w'(e,s') = -2\cdot w(e)$.
  \end{itemize}
\end{itemize}
We claim that $\playerOne$ has a strategy ensuring objective $\MeanPayOff(t)$ in $G$ if and only if the answer for the $\AEG$ problem in $G'$ is \textsf{Yes} for the same threshold $t$. A similar construction is used in~\cite{BEGM15}.

With a prefix~$\rho=(s_i)_{i\leq n}$ in~$G$, we~can associate a prefix
$\rho'=(s'_i)_{i\leq 2n}$ in~$G'$ as follows: for all $k \leq n$, $s'_{2k}=s_k$, and for all $k < n$,
$s'_{2k+1}=(s_k,s_{k+1})$. The mean-payoff along~$\rho$ then equals the
average energy along~$\rho'$ (assuming initial energy~$0$ for~$\rho'$). Indeed, applying the same decomposition arguments as for Lem.~\ref{lem:AE_repeat} and by definition of the weight function $w'$, we have that
\begin{align*}
\AE(\rho') &= \dfrac{1}{n}\sum_{i = 0}^{n-1} \dfrac{2\cdot w'(s_i, (s_i, s_{i+1})) + w'((s_i, s_{i+1}), s_{i+1})}{2}\\ &= \dfrac{1}{n}\sum_{i = 0}^{n-1} \dfrac{4\cdot w(s_i, s_{i+1}) -2\cdot w(s_i, s_{i+1})}{2} = \dfrac{1}{n}\sum_{i = 0}^{n-1} w(s_i, s_{i+1}) = \MP(\rho).
\end{align*}
Conversely, with a prefix~$\rho'=(s'_i)_{i\leq 2n}$ in~$G'$ starting and ending in a state
in~$S_1\cup S_2$, we~can associate a prefix $\rho=(s_i)_{i\leq n}$ in $G$ such that
$s_k=s'_{2k}$ for all $k \leq n$. Again, assuming the initial energy is~zero in~$\rho'$, the
average energy along~$\rho'$ equals the mean payoff along~$\rho$.

Now, assume that $\playerOne$ has a winning strategy~$\sigma$ in $G$ from some
state~$s\in S_1\cup S_2$, achieving mean-payoff less than or equal to~$t$.
Consider the strategy~$\sigma'$ for $G'$ defined as $\sigma'(\rho') = \sigma(\rho)$
if~$\rho'$ ends in~$S_1$. If~$\rho'$ ends in a $T$-state of the
form~$(s,s')$, then we let $\sigma'(\rho')=s'$, which is the only possible
outgoing edge. We see that the outcomes of~$\sigma'$
correspond to the outcomes of~$\sigma$, so that, assuming that the initial
energy level is~zero, $\sigma'$ enforces that the average-energy is
below~$t$ for any infinite outcome. Conversely, given a strategy~$\sigma'$ for $G'$
whose outcomes have average-energy below~$t$, the strategy defined by
$\sigma(\rho) = \sigma'(\rho')$ for all finite paths~$\rho$ in~$G$ secures a
mean-payoff below~$t$. Observe that the equivalence holds both between $\AEsup$ and $\MPsup$, and between $\AEinf$ and $\MPinf$. Indeed, we have seen that for both \textit{MP} and $\AEG$ games, memoryless strategies suffice and decision problems for both variants coincide.
\end{proof}

\section{Average-Energy with Lower- and Upper-Bounded Energy}
\label{sec:average_lu}
We extend the $\AEG$ framework with constraints on the running energy level of the system. Such constraints are natural in many applications where the energy capacity is bounded (e.g., fuel tank, battery charge). We first study the case where the energy is subject to \textit{both} a lower bound (here, zero) \textit{and} an upper bound ($U \in \mathbb{N}$). 
We study the problem for the \textit{fixed initial energy level} $\initCredit \coloneqq 0$. In this case, the range of acceptable energy levels is by definition constrained to the interval $[ 0, U]$. Our approach benefits from this: we solve the $\AELU$ problem by considering an $\AEG$ problem (and subsequently, an $\MPG$ problem) over an expanded game that explicitly accounts for the lower and upper bounds on the energy.

Formally, we want to decide if \playerOne can ensure a \textit{sufficiently low} $\AE$ while keeping the $\EL$ within the allowed range.

\begin{bproblem}[$\AELU$] 
Given a game~$\Game$, an initial state $\initState$, an upper bound $U \in \mathbb{N}$, and a threshold~$t \in \mathbb{Q}$, decide if $\pI$ has a winning strategy $\St_1 \in \strats_{1}$ for the objective $\LUBound(U, \initCredit \coloneqq 0)\,\cap\, \AvgEnergyLevel(t)$.
\end{bproblem}

Again, we present results for the supremum variant $\AEsup$ but they also hold for the infimum one $\AEinf$.

\paragraph{Illustration.} Consider the one-player game in Fig.~\ref{fig:aelu_example}. The energy constraints force $\playerOne$ to keep the energy in $[0,\,3]$ at all times. Hence, only three strategies can be followed safely, respectively inducing plays $\play_{1}$, $\play_{2}$ and $\play_{3}$. Due to the bounds on energy, it is natural that strategies need to alternate between both a positive and a negative cycle to satisfy objective $\LUBound(U, \initCredit \coloneqq 0)$ (since no simple zero cycle exists). It is yet interesting that to play optimally (play $\play_{3}$), $\playerOne$ actually has to use \textit{both} positive cycles, and in the \textit{appropriate order} (compare plays $\play_{2}$ and $\play_{3}$).

\colorlet{newgreen}{green!60!black}

\begin{figure}[htb]
	\centering
	\subfloat[One-player $\AELU$ game.]{\label{fig:aelu_game}\scalebox{1}{\begin{tikzpicture}[->,>=stealth',shorten >=1pt,auto,node
    distance=2.5cm,bend angle=45, scale=0.6, font=\normalsize,scale=.75,inner sep=0.5mm]
    \everymath{\scriptstyle}
    \tikzstyle{p1}=[draw,circle,text centered,minimum size=7mm,text width=4mm]
    \tikzstyle{p2}=[draw,rectangle,text centered,minimum size=7mm,text width=4mm]
    \node[p1]  (0)  at (0, 0) {$b$};
    \node[p1]  (1) at (3, 0) {$a$};
    \node[p1]  (2) at (6, 0) {$c$};
    
    \coordinate[shift={(0mm,5mm)}] (init) at (1.north);
    \path
    (init) edge (1)
    (1) edge [loop below, out=240, in=300,looseness=2, distance=2cm] node [below] {$2$} (1);
	\draw[->,>=latex] (0) to[out=40,in=140] node[above] {$0$} (1);
	\draw[->,>=latex] (1) to[out=40,in=140] node[above] {$1$} (2);
	\draw[->,>=latex] (2) to[out=220,in=-40] node[below] {$0$} (1);
	\draw[->,>=latex] (1) to[out=220,in=-40] node[below] {$-3$} (0);
      \end{tikzpicture}}}	
	\subfloat[Play $\play_{1} = (acacacab)^{\omega}$.]{\scalebox{0.68}{
	\begin{tikzpicture}

		\def \xscale {0.6}
		\def \yscale {0.6}
		\def \xmax {8}
		\def \ymax {3}
			
		\def\xy#1#2{({#1*\xscale},{#2*\yscale})}
			   
	    \draw[->] (-0.2,0) -- ({\xmax*\xscale+0.5},0) node[right] {Step};
	    \draw[->] (0,-0.2) -- (0,{\ymax*\yscale + 0.2}) node[above] {Energy};
 		
        \foreach \y in {0,1,...,\ymax} {	
			\draw (-0.1,{\y*\yscale}) node[anchor=east] {\y} ;
		}
		
        \foreach \x in {1,2,...,\xmax} {
			\draw ({\x*\xscale},-0.1) node[anchor=north] {\x} ;
		}

        \foreach \x in {0,...,\xmax} {
            \draw ({\x*\xscale},0) -- ({\x*\xscale},-1.5pt);
        }

        \foreach \y in {0,...,\ymax} {
            \draw (0, {\y*\yscale}) -- (-1.5pt, {\y*\yscale});
        }
		
		\def\mean{3/2}
		\draw[thick,dashed,color=black]
			\xy{0}{\mean} -- \xy{\xmax}{\mean};
		
		\draw \xy{\xmax}{\mean} node[anchor=west] {$ \mathit{AE} = 3/2$ };

		\draw[very thick,-,color=black]
					\xy{0}{0} --
					\xy{1}{1} --
					\xy{2}{1} --
					\xy{3}{2} --
					\xy{4}{2} --
					\xy{5}{3} --
					\xy{6}{3} --
					\xy{7}{0} --
					\xy{8}{0} ;
					 
	\end{tikzpicture}}}
	\subfloat[Play $\play_{2} = (aacab)^{\omega}$.]{\scalebox{0.68}{
	\begin{tikzpicture}

		\def \xscale {0.6}
		\def \yscale {0.6}
		\def \xmax {5}
		\def \ymax {3}
			
		\def\xy#1#2{({#1*\xscale},{#2*\yscale})}
			   
	    \draw[->] (-0.2,0) -- ({\xmax*\xscale+0.5},0) node[right] {Step};
	    \draw[->] (0,-0.2) -- (0,{\ymax*\yscale + 0.2}) node[above] {Energy};
 		
        \foreach \y in {0,1,...,\ymax} {	
			\draw (-0.1,{\y*\yscale}) node[anchor=east] {\y} ;
		}
		
        \foreach \x in {1,2,...,\xmax} {
			\draw ({\x*\xscale},-0.1) node[anchor=north] {\x} ;
		}

        \foreach \x in {0,...,\xmax} {
            \draw ({\x*\xscale},0) -- ({\x*\xscale},-1.5pt);
        }

        \foreach \y in {0,...,\ymax} {
            \draw (0, {\y*\yscale}) -- (-1.5pt, {\y*\yscale});
        }
		
		\def\mean{8/5}
		\draw[thick,dashed,color=black]
			\xy{0}{\mean} -- \xy{\xmax}{\mean};
		
		\draw \xy{\xmax}{\mean} node[anchor=west] {$ \mathit{AE} = 8/5$ };

		\draw[very thick,-,color=black]
					\xy{0}{0} --
					\xy{1}{2} --
					\xy{2}{3} --
					\xy{3}{3} --
					\xy{4}{0} --
					\xy{5}{0} ;
					 
	\end{tikzpicture}}}
	\subfloat[Play {\color{newgreen}$\play_{3} = (acaab)^{\omega}$}.]{\scalebox{0.68}{
	\begin{tikzpicture}

		\def \xscale {0.6}
		\def \yscale {0.6}
		\def \xmax {5}
		\def \ymax {3}
			
		\def\xy#1#2{({#1*\xscale},{#2*\yscale})}
			   
	    \draw[->] (-0.2,0) -- ({\xmax*\xscale+0.5},0) node[right] {Step};
	    \draw[->] (0,-0.2) -- (0,{\ymax*\yscale + 0.2}) node[above] {Energy};
 		
        \foreach \y in {0,1,...,\ymax} {	
			\draw (-0.1,{\y*\yscale}) node[anchor=east] {\y} ;
		}
		
        \foreach \x in {1,2,...,\xmax} {
			\draw ({\x*\xscale},-0.1) node[anchor=north] {\x} ;
		}

        \foreach \x in {0,...,\xmax} {
            \draw ({\x*\xscale},0) -- ({\x*\xscale},-1.5pt);
        }

        \foreach \y in {0,...,\ymax} {
            \draw (0, {\y*\yscale}) -- (-1.5pt, {\y*\yscale});
        }
		
		\def\mean{1}
		\draw[thick,dashed,color=black]
			\xy{0}{\mean} -- \xy{\xmax}{\mean};
		
		\draw \xy{\xmax}{\mean} node[anchor=west] {$ \mathit{AE} = 1$ };

		\draw[very thick,-,color=newgreen]
					\xy{0}{0} --
					\xy{1}{1} --
					\xy{2}{1} --
					\xy{3}{3} --
					\xy{4}{0} --
					\xy{5}{0} ;
					 
	\end{tikzpicture}}}
	\caption{Example of a one-player $\AELU$ game ($U = 3$) and the evolution of energy under different strategies that maintain it within $\left[0,\, 3\right]$ at all times. The minimal average-energy is obtained with {\color{newgreen}play $\play_{3}$}: alternating in order between the $+1$, $+2$ and $-3$ cycles.}\label{fig:aelu_example}	
\end{figure}

This type of alternating behavior is more intricate than for other classical conjunctions of objectives.
Consider for example energy parity~\cite{CD10} or multi-dimensional energy games~\cite{CRR14,VCDHRR15}. It is usually necessary to use different cycles in such games: intuitively, one needs one ``good'' cycle for each dimension and one for the parity objective, and a winning strategy needs to alternate between those cycles. However, there is no need to use \textit{two} different cycles that are ``good'' w.r.t.~the same part of the objective. In the case of $\AELU$ games, we see that it is sometimes necessary to use two (or more) different cycles even though they impact the sum of weights in the same direction (e.g., several positive cycles). This gives a hint of the complexity of $\AELU$ games.

\subsection{Pseudo-polynomial algorithm and complexity bounds}
\label{subsec:AELU_algo}

We first reduce the $\AELU$ problem to the $\AE$ problem over
a \textit{pseudo-polynomial expanded game}, i.e., polynomial in the size of
the original $\AELU$ game and in $U \in \mathbb{N}$. By
Thm.~\ref{thm:ae_npinter} and Thm.~\ref{thm:ae_onePlayer_PTIME}, this
reduction induces $\NEXPTIME~\cap~\coNEXPTIME$-membership of the two-player
$\AELU$ problem, and $\EXPTIME$-membership of the one-player one. We~improve
the complexity for two-player games by further reducing the $\AE$
game to an $\MP$ game: this yields $\EXPTIME$-membership, which is optimal
(Thm.~\ref{thm:aelu_reduc}). We~also improve the one-player case by observing
that a witness lasso path in the $\MP$ game can be built on-the-fly, and the mean-payoff of this path can be computed using only polynomial space in the original game, hence we~end up with \PSPACE-membership which
we also prove optimal in Thm.~\ref{thm:aelu_reduc}.

Observe that if $U$ is encoded in unary or if~$U$ is polynomial in the size of the original game, the complexity of the $\AELU$ problem collapses to $\NP \cap \coNP$ for two-player games and to $\PTIME$ for one-player games thanks to our reduction to an $\AE$ problem and the results of Thm.~\ref{thm:ae_npinter} and Thm.~\ref{thm:ae_onePlayer_PTIME}.

\paragraph{The reductions.} Given a game $G = (S_{1}, S_{2}, E, w)$, an initial state $\initState$, an upper bound $U \in \mathbb{N}$, and a threshold $t \in \mathbb{Q}$, we reduce the $\AELU$ problem to an $\AEG$ problem as follows. If at any point along a play, the energy drops below zero or exceeds $U$, the play will be losing for the $\LUBound(U, \initCredit \coloneqq 0)$ objective, hence also for its conjunction with the $\AEG$ one. So we build a new game $G'$ over the state space $(S \times \{0, 1, \ldots{}, U\}) \cup \{\textsf{sink}\}$. The idea is to include the energy level within the state labels, with $\textsf{sink}$ as an absorbing state reached only when the energy constraint is breached. We now consider the $\AEG$ problem for threshold $t$ on $G'$. By putting a self-loop of weight $1$ on $\textsf{sink}$, we ensure that if the energy constraint is not guaranteed in $G$, the answer to the $\AEG$ problem in $G'$ will be \textsf{No} as the average-energy will be infinite due to reaching this positive loop and repeating it forever.
Hence, we show that the $\AELU$ objective can be won in $G$ if and only if the $\AE$ one can be won in $G'$ (thus avoiding the $\textsf{sink}$ state). The result of the reduction for the game in Fig.~\ref{fig:aelu_game} is presented in Fig.~\ref{fig:aelu_to_ae}.

\begin{figure}[htb]
\hspace{2.6cm}\scalebox{0.7}{\begin{tikzpicture}[->,>=stealth',shorten >=1pt,auto,node
    distance=2.5cm,bend angle=45, scale=1.2, font=\small]
    \tikzstyle{p1}=[draw,ellipse,text centered,minimum size=9mm,text width=8mm]
    \tikzstyle{p2}=[draw,rectangle,text centered,minimum size=7mm,text width=4mm]
    \node[p1]  (a0)  at (0, 0) {$(a, 0)$};
    \node[p1]  (a1) at (3, 0) {$(a, 1)$};
    \node[p1]  (a2) at (6, 0) {$(a, 2)$};
    \node[p1]  (a3) at (9, 0) {$(a, 3)$};
    \node[p1]  (b0)  at (-0.6, -1.2) {$(b, 0)$};
    \node[p1]  (b1) at (2.4, -1.2) {$(b, 1)$};
    \node[p1]  (b2) at (5.4, -1.2) {$(b, 2)$};
    \node[p1]  (b3) at (8.4, -1.2) {$(b, 3)$};
    \node[p1]  (c0)  at (-2, -2) {$(c, 0)$};
    \node[p1]  (c1) at (1, -2) {$(c, 1)$};
    \node[p1]  (c2) at (4, -2) {$(c, 2)$};
    \node[p1]  (c3) at (7, -2) {$(c, 3)$};
    \node[p1]  (sink) at (4.5, 1.9) {$\textsf{sink}$};
    \coordinate[shift={(-5mm,0mm)}] (init) at (a0.west);
    \path
    (a0) edge[thick,color=newgreen] node[right] {$1 \mid 0$} (c1)
    (a1) edge node[right] {$1 \mid 1$} (c2)
    (a2) edge node[right] {$1 \mid 2$} (c3)
    (b0) edge[thick,color=newgreen] node[right,yshift=-1mm,xshift=-1mm] {$0 \mid 0$} (a0)
    (b1) edge node[right,yshift=-1mm,xshift=-1mm] {$0 \mid 1$} (a1)
    (b2) edge node[right,yshift=-1mm,xshift=-1mm] {$0 \mid 2$} (a2)
    (b3) edge node[right,yshift=-1mm,xshift=-1mm] {$0 \mid 3$} (a3)
    (a0) edge node[left,xshift=-4mm] {$1 \mid 0$} (sink)
    (a1) edge node[left,xshift=-1mm] {$1 \mid 1$} (sink)
    (a2) edge node[right,xshift=1mm] {$1 \mid 2$} (sink)
    (a3) edge node[right,xshift=4mm] {$1 \mid 3$} (sink)
    (sink) edge [loop below, out=245, in=295,looseness=1, distance=0.8cm] node [below] {$1 \mid 2$} (sink)
    (init) edge (a0);
	\draw[->,>=latex] (c0) to[out=60,in=210] node[left,xshift=3mm,yshift=3mm] {$0 \mid 0$} (a0);
	\draw[->,>=latex,thick,color=newgreen] (c1) to[out=60,in=210] node[left,xshift=3mm,yshift=3mm] {$0 \mid 1$} (a1);
	\draw[->,>=latex] (c2) to[out=60,in=210] node[left,xshift=3mm,yshift=3mm] {$0 \mid 2$} (a2);
	\draw[->,>=latex] (c3) to[out=60,in=210] node[left,xshift=3mm,yshift=3mm] {$0 \mid 3$} (a3);
	\draw[->,>=latex,thick,color=newgreen] (a3) -- (10,0) -- (10,-2.5) to[out=180,in=0] node[above,xshift=8mm] {$-3 \mid 3$} (-0.6,-2.5) to[out=90, in=270] (b0);
	\draw[->,>=latex] (a0) to[out=15,in=165] node[above,xshift=-7mm,yshift=-1mm] {$2 \mid 0$} (a2);
	\draw[->,>=latex,thick,color=newgreen] (a1) to[out=15,in=165] node[above,xshift=7mm,yshift=-1mm] {$2 \mid 1$} (a3);
      \end{tikzpicture}}
	\caption{Reduction from the $\AELU$ game in Fig.~\ref{fig:aelu_game} to an $\AE$ game and further reduction to an $\MP$ game over the same expanded graph. For the sake of succinctness, the weights are written as $c \mid c'$ with $c$ the weight used in the $\AE$ game and $c'$ the one used in the $\MP$ game. We use the upper bound $U = 3$ and the average-energy threshold $t = 1$ (the optimal value in this case). The optimal play {\color{newgreen}$\play_{3} = (acaab)^{\omega}$} of the original game corresponds to an optimal memoryless play in the expanded graph.}\label{fig:aelu_to_ae}	
\vspace{-4mm}
\end{figure}

\begin{lemma}
\label{lem:aelu_to_ae}
The $\AELU$ problem over a game $G = (S_{1}, S_{2}, E, w)$, with an initial state $\initState$, an upper bound $U \in \mathbb{N}$, and a threshold $t \in \mathbb{Q}$, is reducible to an $\AE$ problem for the same threshold $t \in \mathbb{Q}$ over a game $G' = (S'_{1}, S'_{2}, E', w')$ such that $\vert \states' \vert = (U + 1) \cdot \vert \states \vert + 1$ and $W' = \max\, \{\min\,\{W,\, U\},\: 1\}$, i.e., the largest absolute weight in $G'$ is at most the same as in $G$, or equal to constant $1$.
\end{lemma}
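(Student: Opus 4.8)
The plan is to make precise the construction sketched just before the statement and then to exhibit a payoff-preserving correspondence between the plays of $G$ satisfying $\LUBound(U,0)$ and the plays of $G'$ that never visit $\textsf{sink}$. Concretely, set $S' = (S\times\{0,1,\ldots,U\})\cup\{\textsf{sink}\}$, with $(s,e)$ owned by the player owning $s$ (the owner of $\textsf{sink}$ is irrelevant, since it has a single successor; in a one-player game one assigns it to the active player). For every edge $(s,s')\in E$ with $w(s,s') = c$ and every $e\in\{0,1,\ldots,U\}$, add an edge $\bigl((s,e),(s',e+c)\bigr)$ of weight $c$ when $e+c\in[0,U]$, and an edge $\bigl((s,e),\textsf{sink}\bigr)$ of weight $1$ otherwise; finally add the self-loop on $\textsf{sink}$ of weight $1$. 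Deadlock-freeness is inherited from $G$ together with that self-loop, and $\vert \states'\vert = (U+1)\vert\states\vert + 1$ is immediate. For the weight bound, an edge staying inside $S\times[0,U]$ carries a weight $c$ with $0\le e\le U$ and $0\le e+c\le U$, hence $\vert c\vert \le \min\{W,U\}$, while every other edge carries weight $1$; so $W' \le \max\{\min\{W,U\},1\}$.

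The correspondence maps a prefix (or play) $\rho = s_0 s_1\ldots$ of $G$ from $\initState$ to $(s_0,\EL(\rho(0)))(s_1,\EL(\rho(1)))\ldots$; whenever all energy levels $\EL(\rho(n))$ lie in $[0,U]$ this is a legal $\textsf{sink}$-free prefix (play) of $G'$ from $(\initState,0)$, and the map is a bijection onto exactly the $\textsf{sink}$-free ones, the second component of such a play being at every step the current energy level. Since the $G'$-side edge weights are the original $w(s_i,s_{i+1})$, the energy levels of a $G$-play and of its image agree term by term, hence so do $\AEsup$ and $\AEinf$; the map also respects state ownership, so it transports strategies.

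For the forward implication, let $\sigma_1$ win $\LUBound(U,0)\cap\AvgEnergyLevel(t)$ in $G$ and let $\sigma_1'$ faithfully replay $\sigma_1$'s move through the correspondence on $\textsf{sink}$-free prefixes (taking the $\textsf{sink}$-edge if that move would breach $[0,U]$, and being forced in $\textsf{sink}$). No outcome of $\sigma_1'$ reaches $\textsf{sink}$: a $\pI$-move into $\textsf{sink}$ pulls back to a $\sigma_1$-consistent $G$-play leaving $[0,U]$, and a $\pII$-move into $\textsf{sink}$ does the same after also pulling the adversary's choice back to a $G$-strategy; both contradict that $\sigma_1$ wins $\LUBound(U,0)$. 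So every outcome of $\sigma_1'$ is the image of a $\sigma_1$-outcome, which lies in $\AvgEnergyLevel(t)$, and as $\AEsup$ is preserved so does the $G'$-outcome. For the converse, let $\sigma_1'$ win $\AvgEnergyLevel(t)$ in $G'$ from $(\initState,0)$; on $G$-prefixes all of whose energy levels stay in $[0,U]$, $\sigma_1'$ cannot point to $\textsf{sink}$ (else an outcome reaches the positive self-loop and gets $\AEsup=+\infty>t$), so $\sigma_1$ can faithfully replay it there. An induction on $n$ shows $\EL(\pi(n))\in[0,U]$ for every outcome $\pi$ of $\sigma_1$: a first violation at step $n+1$ would, after pushing the prefix and the adversary's last move forward to $G'$ (there that move becomes the corresponding $\textsf{sink}$-edge), give a $\sigma_1'$-consistent play reaching $\textsf{sink}$, a contradiction. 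Hence $\pi\in\LUBound(U,0)$, $\pi$ is the image of a $\sigma_1'$-outcome in $\AvgEnergyLevel(t)$, and preservation of $\AEsup$ gives $\pi\in\AvgEnergyLevel(t)$; so $\sigma_1$ wins.

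The construction and the term-by-term identity of energy levels are routine; the main obstacle, in both directions, is the ``$\textsf{sink}$ is never visited'' step, which forces one to translate the \emph{opponent}'s strategy across the correspondence while keeping the quantifier order straight. The content that makes the reduction work is exactly that $\pII$ can force a breach of $[0,U]$ in $G$ iff $\pII$ can force a visit to $\textsf{sink}$ in $G'$.
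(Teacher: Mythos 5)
Your proposal is correct and follows essentially the same route as the paper: the same expanded game $G'$ with energy levels encoded in the states, a $\textsf{sink}$ with a positive self-loop to penalize breaches of $[0,U]$, and a payoff-preserving bijection between $\textsf{sink}$-free plays of $G'$ and energy-feasible plays of $G$ used to transfer winning strategies in both directions. Your treatment is somewhat more explicit than the paper's about translating the opponent's strategy and about the induction showing the energy constraint is maintained, but the substance is identical.
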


\begin{proof}
Consider the game $G= (S_{1}, S_{2}, E, w)$, with initial state $\initState$, upper bound $U \in \mathbb{N}$ and threshold $t \in \mathbb{Q}$.  We define the expanded game $G' = (S'_{1}, S'_{2}, E', w')$ as follows.
\begin{itemize}
\item $S'_{1} = (S_{1} \times \{0, 1, \ldots{}, U\}) \cup \{\textsf{sink}\}$.
\item $S'_{2} = S_{2} \times \{0, 1, \ldots{}, U\}$.
\item For all $(u, v) \in E$, $(u, c) \in S'$, we have that:
\begin{enumerate}
\item if $d = c + w(u, v) \in [0, U]$, then $e = \big( (u,c), (v, d)\big) \in E'$ and $w'(e) = w(u, v)$,
\item else $e = \big( (u,c),\, \textsf{sink}\big) \in E'$ and $w'(e) = 1$.
\end{enumerate}
\item $(\textsf{sink}, \textsf{sink}) \in E'$ and $w(\textsf{sink}, \textsf{sink}) = 1$.
\end{itemize}
The game $G'$ starts in state $(\initState, 0)$ and edges are built naturally to reflect the changes in the energy level. Whenever the energy drops below zero or exceeds $U$, we redirect the edge to $\textsf{sink}$, where a self-loop of weight $1$ is repeated forever.

We claim that $\playerOne$ has a winning strategy $\sigma_{1}$ for the $\AELU$ objective in $G$ if and only if he has a winning strategy $\sigma'_{1}$ for the $\AE$ objective in $G'$, for the very same average-energy threshold $t$.

First, consider the left-to-right implication. Assume $\sigma_{1}$ is winning for objective $\LUBound(U, \initCredit~\coloneqq~0) \cap \AvgEnergyLevel(t)$ in $G$. The very same strategy can be followed in $G'$, ignoring the additional information on the energy in the state labels. Precisely, for any prefix $\rho' = (s_0, c_0) (s_{1}, c_1) \ldots{} (s_n, c_n)$ in $G'$, we define $\sigma'_{1}(\rho') = (s, c)$ where $s = \sigma_{1}(\rho)$ for $\rho = s_{0} s_{1} \ldots{} s_n$ and $c = c_n + w(s_n, s)$. Obviously, playing this strategy ensures that the special state \textsf{sink} is never reached, as otherwise it would not be winning for $\LUBound(U, \initCredit \coloneqq 0)$ in $G$, by construction of $G'$. Since all weights are identical in both games except on edges entering the \textsf{sink} state, we have that any consistent outcome $\play'$ of $\sigma'_{1}$ in $G'$ corresponds to a consistent outcome $\play$ of $\sigma_{1}$ in $G$ such that $\AEsup(\play') = \AEsup(\play)$, and conversely. Therefore, $\sigma'$ is clearly winning for objective $\AvgEnergyLevel(t)$ in $G'$.

Second, consider the right-to-left implication. Assume $\sigma'_{1}$ is winning for $\AvgEnergyLevel(t)$ in $G'$. Then this strategy ensures that \textsf{sink} is avoided forever. Otherwise, there would exist a consistent outcome $\play'$ reaching \textsf{sink}, and such that $\AEsup(\play') = \infty > t$ because of the strictly positive self-loop. Thus the strategy would not be winning. Hence by construction of $G'$, this strategy trivially ensures $\LUBound(U, \initCredit \coloneqq 0)$ in $G'$. From $\sigma'_{1}$, we build a strategy~$\sigma_{1}$ in~$G$ in the natural way, potentially integrating the information on the energy within the memory of $\sigma_{1}$. Again, there is a bijection between plays avoiding \textsf{sink} in $G'$ and plays in $G$, such that $\sigma_{1}$ is winning for $\LUBound(U, \initCredit \coloneqq 0) \cap \AvgEnergyLevel(t)$ in $G$.

Hence we have shown the claimed reduction. For the sake of completeness, observe that the reduction holds both for $\AEsup$ and $\AEinf$ variants of the average-energy. It remains to discuss the size of the expanded game. Observe that $\vert S' \vert = (U + 1) \cdot \vert S \vert + 1$. Furthermore, if $W$ is the largest absolute weight in $G$, then $W' = \max\, \{\min\,\{W, U\},\: 1\}$ is the largest one in $G'$. Indeed, $W'$ is upper-bounded by $U$ by construction (as all edges of absolute weight larger than $U$ can be redirected directly to \textsf{sink}) and it is lower-bounded by $1$ due to edges leading to \textsf{sink}. So the state space of $G'$ is polynomial in the state space of $G$ and in the \textit{value} of the upper bound~$U$, while its weights are bounded by either the largest weight $W$, the upper bound~$U$ or constant $1$.
\end{proof}

We now show that the $\AE$ game $G'$ can be further reduced to an $\MP$ game $G''$ by modifying the weight structure of the graph. Essentially, all edges leaving a state $(s, c)$ of $G'$ are given weight $c$ in $G''$, i.e., the current energy level, and the self-loop on \textsf{sink} is given weight $(\lceil t\rceil + 1)$. This modification is depicted in Fig.~\ref{fig:aelu_to_ae}. We claim that the $\AEG$ problem for threshold $t \in \mathbb{Q}$ in $G'$ is equivalent to the $\MPG$ problem for the same threshold in $G''$. Indeed, we show that with our change of weight function, reaching \textsf{sink} implies losing, both in $G'$ for $\AEG$ and in $G''$ for $\MPG$, and all plays that \textit{do not} reach \textsf{sink} have the same value for their average-energy in $G'$ as for their mean-payoff in $G''$.
\begin{lemma}
\label{lem:ae_to_mp}
The $\AEG$ problem over the game $G' = (S'_{1}, S'_{2}, E', w')$ defined in Lem.~\ref{lem:aelu_to_ae} is reducible to an $\MPG$ problem for the same threshold $t \in \mathbb{Q}$ over a game $G'' = (S'_{1}, S'_{2}, E', w'')$ sharing the same state space but with largest absolute weight $W'' = \max \{U,\, \lceil t\rceil + 1\}$, where $U$ is the energy upper bound of the original $\AELU$ problem.
\end{lemma}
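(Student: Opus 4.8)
The plan is to define the new weight function $w''$ on $G''=(S'_1,S'_2,E',w'')$ by setting, for every edge $e=\big((s,c),(s',c')\big)\in E'$ whose source is not $\textsf{sink}$, $w''(e)=c$ (the energy level recorded in the source state), and $w''(\textsf{sink},\textsf{sink})=\lceil t\rceil+1$. Since $G'$ and $G''$ share the same states and the same edges, they induce the same plays, prefixes and strategies; hence it suffices to prove that a single play $\play\in\plays(G')=\plays(G'')$ satisfies $\AEsup(\play)\le t$ in $G'$ if and only if $\MPsup(\play)\le t$ in $G''$ (and likewise for the $\liminf$ variants). The equivalence of winning strategies, and thus of the two decision problems for the same threshold $t$, then follows at once, because a strategy of $\pI$ is winning exactly when all of its outcomes are winning plays.

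First I would dispatch the plays that reach $\textsf{sink}$. Such a play eventually stays on the self-loop of $\textsf{sink}$; in $G'$ this loop has weight $1$, so the energy level diverges and $\AEsup(\play)=+\infty>t$, while in $G''$ this loop has weight $\lceil t\rceil+1$, so $\MPsup(\play)=\lceil t\rceil+1>\lceil t\rceil\ge t$. Hence such a play is losing in both games and the desired equivalence holds trivially for it.

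Next, for a play $\play=(s_0,c_0)(s_1,c_1)\cdots$ that never reaches $\textsf{sink}$: by the construction of $G'$ in Lem.~\ref{lem:aelu_to_ae}, the second component records the running energy, i.e.\ $c_0=0$ and $c_i=\EL(\play(i))$ for all $i$, with $c_i\in[0,U]$. Therefore the average-energy of $\play$ in $G'$ is $\limsup_{n\to\infty}\frac1n\sum_{i=1}^{n}c_i$, whereas, since in $G''$ every edge leaving $(s_i,c_i)$ carries weight $c_i$, the mean-payoff of $\play$ in $G''$ is $\limsup_{n\to\infty}\frac1n\sum_{i=0}^{n-1}c_i$. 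The two partial sums differ by the single term $c_n-c_0=c_n\in[0,U]$, which is bounded, so after dividing by $n$ the two $\limsup$'s (and the two $\liminf$'s) coincide: $\AEsup(\play)=\MPsup(\play)$ and $\AEinf(\play)=\MPinf(\play)$. Consequently $\play$ is winning for $\AvgEnergyLevel(t)$ in $G'$ iff it is winning for $\MeanPayOff(t)$ in $G''$.

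Combining the two cases yields the reduction. For the weight bound, every edge out of a non-$\textsf{sink}$ state has weight in $\{0,\dots,U\}$, and the only remaining weight is $\lceil t\rceil+1$ on the sink loop; moreover one may assume $t\ge0$, since the energy is forced to stay in $[0,U]$, so $\AEsup(\play)\ge0$ for every surviving play and the original $\AELU$ instance is trivially negative when $t<0$. Hence $W''=\max\{U,\lceil t\rceil+1\}$. The only point that needs a bit of care is the index bookkeeping in the non-$\textsf{sink}$ case\,---\,matching the sum $\sum_{i=1}^{n}\EL(\play(i))$ from the definition of $\AE$ with the sum of edge weights $\sum_{i=0}^{n-1}c_i$ defining the mean-payoff\,---\,but this reduces to noting that their difference is the single bounded term $c_n$, which is negligible in the Cesàro limit; I do not expect any genuine obstacle beyond this.
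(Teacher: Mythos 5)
Your proposal is correct and follows essentially the same route as the paper's proof: same weight function $w''$, the same dichotomy between sink-reaching plays (losing in both games) and sink-avoiding plays (where $\AEsup$ in $G'$ equals $\MPsup$ in $G''$), and the same conclusion on $W''$. The only difference is cosmetic — you make the index-shift bookkeeping explicit where the paper elides it, and you argue at the level of plays before lifting to strategies, while the paper spells out both strategy directions.
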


\begin{proof}
Let $G' = (S'_{1}, S'_{2}, E', w')$ be the game defined in Lem.~\ref{lem:aelu_to_ae}, as a reduction from the original game $G$ for the $\AELU$ problem with upper bound $U \in \mathbb{N}$ and average-energy threshold $t \in \mathbb{Q}$. We now build the game $G'' = (S'_{1}, S'_{2}, E', w'')$  by simply modifying the weight function of $G'$. The changes are as follows:
\begin{itemize}
\item For all edge $e = ((s,c),\,(s',c')) \in E'$, its weight in $G'$ is $w'(e) = c' - c$ and we now set it to $w''(e) = c$ in $G''$. Recall that by construction of $G'$, the value $c$ represents the current energy level for any prefix ending in $(s,c)$. This is the value we now use for the outgoing edge. Also, this value is constrained in $[0,\,U]$ by definition of $G'$.
\item For all edge $e = ((s,c),\,\textsf{sink}) \in E'$, its weight in $G'$ is $w'(e) = 1$ and we now set it to $w''(e) = c$ in $G''$ for the sake of consistency (the actual value over this type of edges will not matter eventually). 
\item For the self-loop $e = (\textsf{sink},\, \textsf{sink}) \in E'$, its weight in $G'$ is $w'(e) = 1$ and we now set it to $w''(e) = \lceil t\rceil + 1$ in $G''$. That is, reaching \textsf{sink} will imply a mean-payoff higher than the threshold.
\end{itemize}

Before proving the claim, we show that for all plays $\play \in \plays(G') = \plays(G'')$ that \textit{do not} reach \textsf{sink}, we have that $\AEsup_{G'}(\play) = \MPsup_{G''}(\play)$, where the subscript naturally refers to the change of weight function. Let $\play = s'_0s'_1s'_2\ldots{} = (s_0, c_0) (s_1, c_1) (s_2, c_2)\ldots{}$ be such a play, where for all $i \geq 0$, $s'_{i} \in \states'$ and $(s_{i}, c_{i}) \in \states \times [0, U] \cap \mathbb{N}$ is its corresponding label. By definition of $G''$, we have that,
\begin{equation*}
\forall\,  n \geq 0,\quad w''(s'_n, s'_{n+1}) = c_n = \EL_{G'}(\play(n)).
\end{equation*}
Hence by definition of the mean-payoff and the average-energy,
\begin{equation}
\label{eq:reduc}
\MPsup_{G''}(\play) = 
	\limsup_{n \to \infty} \frac{1}{n}  \sum_{i = 0}^{n-1} w''(s'_i,s'_{i+1}) = \limsup_{n \to \infty} \frac{1}{n} \sum_{i = 0}^{n-1} \EL_{G'}(\play(i)) = \AEsup_{G'}(\play).
\end{equation}
For the sake of completeness, observe that this equality does not hold for plays reaching \textsf{sink}, as they have infinite average-energy in $G'$ but finite mean-payoff in $G''$.

We proceed by proving the claim that $\playerOne$ has a winning strategy $\sigma'_{1}$ for the $\AE$ objective in $G'$ if and only if he has a winning strategy $\sigma''_{1}$ for the $\MP$ objective in $G''$, for the very same threshold $t$.

First, consider the left-to-right implication. Assume $\sigma'_{1}$ is winning for $\AvgEnergyLevel(t)$ in $G'$. We apply the same strategy in $G''$ straightforwardly as the underlying graph is not modified. Since this strategy is winning for the $\AE$ objective in $G'$, it necessarily avoids \textsf{sink} both in $G'$ and $G''$ (as otherwise the $\AE$ would be infinite). Hence by Eq.~\eqref{eq:reduc}, we have that $\sigma'_{1}$ is also winning for $\MeanPayOff(t)$ in $G''$.

Second, consider the right-to-left implication. Assume $\sigma''_{1}$ is winning for $\MeanPayOff(t)$ in $G''$. Since the self-loop on \textsf{sink} has weight $\lceil t\rceil + 1$, it is necessary that $\sigma''_{1}$ never reaches \textsf{sink} otherwise it would not be winning. Hence we apply the same strategy in $G'$ and by Eq.~\eqref{eq:reduc}, we have that $\sigma''_{1}$ is also winning for $\AvgEnergyLevel(t)$ in $G'$.

This proves correctness of the reduction. The same reasoning can be followed for $\AEinf$ (thus using $\MPinf$) instead of $\AEsup$. We end by discussing the size of $G''$. Clearly, the state space $\states''$ is identical to $\states'$, hence $\vert S'' \vert = (U + 1) \cdot \vert S \vert + 1$. However, the largest absolute weight in $G''$ is $W'' = \max \{U,\, \lceil t\rceil + 1\}$. Indeed, the self-loop on \textsf{sink} has weight $(\lceil t\rceil + 1)$ and all other edges have weight bounded by the energy upper bound $U$ by construction.
\end{proof}

\paragraph{Illustration.} Consider the $\AELU$ game $G$ depicted in Fig.~\ref{fig:aelu_game}. We have seen that the optimal strategy is $\play_{3} = (acaab)^{\omega}$. Now consider the reduction to the $\AE$ game, and further to the $\MP$ game, depicted in Fig.~\ref{fig:aelu_to_ae}. The optimal (memoryless) strategy in both the $\AE$ game $G'$ and the $\MPG$ game $G''$ is to create the play $\play' = ((a,0) (c,1) (a,1) (a,3) (b,0))^{\omega}$, which corresponds to the optimal play $\play_{3}$ in the original game. It can be checked that $\AEsup_{G}(\play_{3}) = \AEsup_{G'}(\play') = \MPsup_{G''}(\play')$. 

\paragraph{Complexity.} 
The reduction from the $\AELU$ game to the $\AE$ one induces a
pseudo-polynomial blow-up in the number of states. Thanks to the second
reduction and the use of a pseudo-polynomial algorithm for the $\MP$
game~\cite{ZP96,BCDGR11}, we get \EXPTIME-membership, which is optimal for
two-player games thanks to the lower bound proved for $\EGLU$~\cite{BFLMS08}.
The complexity is reduced when the bound~$U$ is given in unary or is
polynomial in the size of the game, matching the one obtained for $\AE$ games
without energy constraints.

For the one-player case, we also use the reduction to an $\MP$ game. By~\cite{EM79}, optimal memoryless strategies exist, hence it~suffices to non-deterministically build a simple lasso path in $G''$, and to check that it satisfies the
mean-payoff constraint. It can be done using only polynomial space through on-the-fly computation.

\begin{theorem}
\label{thm:aelu_reduc}
The $\AELU$ problem is $\EXPTIME$-complete for two-player games and 
\PSPACE-complete for one-player games. If the upper bound $U \in \mathbb{N}$ is
polynomial in the size of the game or encoded in unary, the $\AELU$ problem
collapses to $\NP \cap \coNP$ and $\PTIME$ for two-player and one-player games,
respectively.
\end{theorem}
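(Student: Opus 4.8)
The plan is to chain the two reductions already established and then attack the resulting mean-payoff game with the \emph{right} algorithm, and to obtain the matching lower bounds by degenerating $\AELU$ to $\EGLU$. Concretely, from an $\AELU$ instance $(\Game,\initState,U,t)$ I would first apply Lem.~\ref{lem:aelu_to_ae} to obtain an equivalent $\AEG$ instance over $G'$ with $\vert\states'\vert=(U+1)\cdot\vert\states\vert+1$ and $W'\le\max\{W,1\}$, then Lem.~\ref{lem:ae_to_mp} to obtain an equivalent $\MPG$ instance over $G''$ with the same state space and $W''=\max\{U,\lceil t\rceil+1\}$. Since $U$ and $t$ are given in binary, both $\vert\states''\vert$ and $W''$ are at most exponential in the input size, so any algorithm running in time $\mathrm{poly}(\vert\states''\vert,W'')$ runs in time $2^{O(\vert\mathrm{input}\vert)}$. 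The step I expect to require the most care is precisely \emph{which} algorithm to run on $G''$: invoking Thm.~\ref{thm:ae_npinter} on the exponentially larger $G'$ would only give $\NEXPTIME\cap\coNEXPTIME$-membership, so instead one runs a \emph{deterministic} pseudo-polynomial mean-payoff solver~\cite{ZP96,BCDGR11} on $G''$, which yields $\EXPTIME$-membership for two-player games. Optimality then follows from the $\EXPTIME$-hardness of $\EGLU$~\cite{BFLMS08} via the degeneration described below.

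For the one-player case I would again reduce to $G''$ but additionally use that memoryless strategies suffice in one-player mean-payoff games~\cite{EM79}: $\playerOne$ wins iff $G''$ contains, reachable from $(\initState,0)$, a lasso whose cycle has mean weight at most $t$ (a lasso entering $\textsf{sink}$ is automatically ruled out, the self-loop there having weight $\lceil t\rceil+1>t$). As $G''$ is exponentially large it is never built explicitly; instead this property is checked by an on-the-fly nondeterministic search storing only polynomially many bits: the current vertex $(s,c)$ (with $c\le U$, hence $O(\log U)$ bits), one guessed anchor vertex, a length counter bounded by $\vert\states''\vert$, and the running sum of weights along the candidate cycle (bounded in absolute value by $\vert\states''\vert\cdot W''$, hence polynomially many bits). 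The procedure guesses a walk of length $\le\vert\states''\vert$ from $(\initState,0)$ to the anchor, then a closed walk of length $\le\vert\states''\vert$ back to the anchor, and accepts iff the accumulated average is $\le t$, tested without division as $q\cdot(\text{sum})\le p\cdot(\text{length})$ when $t=p/q$; decomposing a closed walk into simple cycles shows this is equivalent to the existence of a reachable simple cycle of mean weight $\le t$. This is a nondeterministic polynomial-space procedure, so $\PSPACE$-membership follows by Savitch's theorem; together with $\PSPACE$-hardness inherited from one-player $\EGLU$~\cite{FJ13} (again via degeneration) this settles the question left open in~\cite{DBLP:journals/corr/BouyerMRLL15}.

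If $U$ is polynomial in $\vert\Game\vert$, equivalently given in unary, then the game $G'$ produced by Lem.~\ref{lem:aelu_to_ae} has polynomial size and is computable in polynomial time, so applying Thm.~\ref{thm:ae_npinter} (resp.~Thm.~\ref{thm:ae_onePlayer_PTIME}) directly to $G'$ places the two-player (resp.~one-player) $\AELU$ problem in $\NP\cap\coNP$ (resp.~$\PTIME$), matching the bounds for plain $\AEG$.

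For the lower bounds I would use the following degeneration. Fix $t:=U$. Every play $\play$ in $\LUBound(U,\initCredit\coloneqq 0)$ satisfies $\EL(\play(n))\in[0,U]$ for all $n$, hence $\AEsup(\play)\le U=t$; thus $\LUBound(U,\initCredit\coloneqq 0)\subseteq\AvgEnergyLevel(U)$ and the $\AELU$ objective with threshold $U$ is literally $\LUBound(U,\initCredit\coloneqq 0)$. Therefore deciding the $\AELU$ problem with $t=U$ is exactly deciding the $\EGLU$ problem with initial credit $0$, which is $\EXPTIME$-complete for two-player games~\cite{BFLMS08} and $\PSPACE$-complete for one-player games~\cite{FJ13}, giving the claimed hardness. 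Finally, all of the above carries over unchanged to the $\AEinf$ variant, using $\MPinf$ in place of $\MPsup$ throughout.
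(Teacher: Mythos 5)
Your proposal is correct and follows essentially the same route as the paper's proof: chain Lem.~\ref{lem:aelu_to_ae} and Lem.~\ref{lem:ae_to_mp}, run a deterministic pseudo-polynomial mean-payoff solver on $G''$ for the two-player \EXPTIME{} bound, do an on-the-fly nondeterministic lasso search with polynomially many stored bits plus Savitch for the one-player \PSPACE{} bound, apply Thm.~\ref{thm:ae_npinter}/Thm.~\ref{thm:ae_onePlayer_PTIME} directly to the polynomially-sized $G'$ when $U$ is small, and obtain hardness by setting $t:=U$ so that $\AELU$ degenerates to $\EGLU$. The only (harmless) cosmetic differences are your explicit division-free test $q\cdot(\text{sum})\le p\cdot(\text{length})$ and phrasing the cycle search via decomposition into simple cycles rather than truncating after $\vert S'\vert+1$ steps as the paper does.
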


\begin{proof}
Let $G = (S_1, S_2, E, w)$ be the original $\AELU$ game, $W \in \mathbb{N}$ its largest absolute weight, $U \in \mathbb{N}$ the upper bound for energy and $t \in \mathbb{Q}$ the threshold for the $\AELU$ problem. By Lem.~\ref{lem:aelu_to_ae}, this $\AELU$ problem is reducible to an $\AE$ problem for the same threshold $t$ over a game $G' = (S'_{1}, S'_{2}, E', w')$ such that $\vert \states' \vert = (U + 1) \cdot \vert \states \vert + 1$ and $W' = \max\, \{\min\,\{W,\, U\},\: 1\}$. By Lem.~\ref{lem:ae_to_mp}, the $\AELU$ problem can be further reduced to an $\MPG$ problem for the same threshold $t$ over a game $G'' = (S'_{1}, S'_{2}, E', w'')$ sharing the same state space as~$G'$ but with largest absolute weight $W'' = \max \{U,\, \lceil t\rceil + 1\}$. We start by proving the complexity upper bounds.

First, consider the one-player case. Combining Thm.~\ref{thm:ae_onePlayer_PTIME} and the reduction to an $\AE$ game, we obtain that one-player $\AELU$ games can be solved in pseudo-polynomial time, i.e., polynomial in $\vert S \vert$ but also in the value of $U$ (hence exponential in the size of its binary encoding). This both gives $\EXPTIME$-membership of one-player $\AELU$ games with arbitrary upper bounds, and $\PTIME$-membership of the same games with polynomial or unary upper bounds. For arbitrary bounds, we improve the complexity from \EXPTIME to \PSPACE. To do so, we consider the further reduction to an $\MP$ game, but we \textit{do not} completely build the $\MP$ game $G''$ which is known to be of exponential size. Instead, we build non-deterministically a witness lasso path (thanks to memoryless determinacy~\cite{EM79}, they are sufficient) and check on-the-fly that the path is winning or not, using only polynomial space. Recall that we consider a game $G''$ such that $S'_{2} = \emptyset$. Our non-deterministic algorithm answers \textsc{Yes} if $\playerOne$ has a winning strategy in $G''$ (and hence in $G$ thanks to Lem.~\ref{lem:aelu_to_ae} and Lem.~\ref{lem:ae_to_mp}), \textsc{No} otherwise, and is as follows:
\begin{enumerate}
\item Guess a state $s'_{r} \in S'_{1} = (S_{1} \times \{0, 1, \ldots{}, U\}) \cup \{\textsf{sink}\}$ that will be the starting (and ending) state of the cycling part of the lasso path. For the following, we assume that $s'_{r} \neq \textsf{sink}$ otherwise the lasso path that we are trying to build is clearly losing (see proof of Lem.~\ref{lem:ae_to_mp}) and the algorithm answers \textsc{No}. Thus, store state $s'_{r} = (s_{r}, m)$ for some $m \in \{0, \ldots{}, U\}$.
\item Check that $s'_{r}$ is reachable from the initial state $(\initState, 0)$. This can be done in \textsf{NLOGSPACE} w.r.t.~the size of $G''$ (see e.g.,~\cite{DBLP:books/daglib/0086373}), hence \textsf{NPSPACE} w.r.t.~the original problem. If it is not, then the answer is \textsc{No}.
\item Build step by step\footnote{Observe that given a state in $G''$, it is indeed possible to build any neighboring state using only $E$ and $w$ from the original game: one can effectively build the graph $G''$ on-the-fly.} a lasso path by constructing a simple cycle in $G''$ starting in $s'_{r}$. This construction is non-deterministic: if at any point, the \textsf{sink} state is reached, the algorithm returns \textsc{No}. The construction stops as soon as $s'_{r}$ is reached, or after $\vert S'\vert+1$ steps if $s'_{r}$ is not reached: in the latter case, the answer is also \textsc{No} (after $\vert S'\vert +1$ steps, we know for certain that a cycle was created hence our lasso path is complete). While constructing the cycle, we make on-the-fly computations: at each step, the next state is chosen non-deterministically and the only information that is stored~---~except from state $s'_{r}$ used to determine the end of the cycle~---~is the number of steps from leaving $s'_{r}$, and the sum of the weights seen along the cycle.
\item Assume $s'_{r}$ is reached (otherwise we have seen that the answer is \textsc{No}). Let $s'_{0}s'_{1}\ldots{}s'_{l}$ be the sequence of states visited along the construction, with $s'_{0} = s'_{l} = s'_{r}$. We have stored the length $l$ and the sum of weights $\gamma = \sum_{i = 0}^{l-1} w''(s'_{i}, s'_{i+1})$. Now, we check if $\dfrac{\gamma}{l} \leq t$: this quantity is the mean-payoff of the lasso path we have constructed. If yes, then the answer is \textsc{Yes}, thanks to Lem.~\ref{lem:aelu_to_ae} and Lem.~\ref{lem:ae_to_mp}: the lasso path describes a winning strategy. Otherwise, the answer is \textsc{No} as this lasso path represents a losing strategy, by the same lemmas.
\end{enumerate}
The correctness of this algorithm is guaranteed by Lem.~\ref{lem:aelu_to_ae} and Lem.~\ref{lem:ae_to_mp}. It remains to argue that it only uses polynomial space in the original $\AELU$ problem. Observe that our on-the-fly computations only need to record the state $s'_{r}$, the current state, the current length and the current sum. We have that both states belong to $S_{1} \times \{0, 1, \ldots{}, U\}$, that $l < \vert S'\vert = (U + 1) \cdot \vert S \vert + 1$ and that the sum is bounded by $l \cdot W'' = l \cdot \max \{U,\, \lceil t\rceil + 1\}$. Hence, encoding those values only requires a polynomial number of bits w.r.t.~the input of the $\AELU$ problem (i.e., logarithmic in the upper bound $U$, the largest weight $W$ and the threshold $t$). This proves that our algorithm lies in \textsf{NPSPACE}, and by Savitch's theorem~\cite{DBLP:books/daglib/0086373} we know that \textsf{NPSPACE} $= \PSPACE$: hence we proved the upper bound for the one-player $\AELU$ problem.

Second, consider two-player $\AELU$ games. In this case, we solve the $\MPG$ problem over $G''$ using a pseudo-polynomial algorithm such as the one presented in~\cite{BCDGR11}, whose complexity is $\mathcal{O}(\vert S^{\ast}\vert^{3} \cdot W^{\ast})$ for a game with $\vert S^{\ast}\vert$ states and largest absolute weight $W^{\ast} \in \mathbb{N}$. Therefore, the complexity of solving the original $\AELU$ problem is
\begin{equation*}
\mathcal{O}\left(\vert S'\vert^{3} \cdot W''\right) =  \mathcal{O}\left( \big( (U + 1) \cdot \vert \states \vert + 1\big) ^{3} \cdot \max \{U,\, \lceil t\rceil + 1\}\right),
\end{equation*}
which is clearly pseudo-polynomial. Hence we obtain $\EXPTIME$-membership for two-player $\AELU$ games. If the upper bound $U \in \mathbb{N}$ is
polynomial in the size of the game or encoded in unary, it is sufficient to solve the polynomially-larger $\AEG$ game $G'$ using Thm.~\ref{thm:ae_npinter} to obtain $\NP \cap \coNP$-membership.

Now consider lower bounds. The $\AELU$ problem trivially encompasses the lower- and upper-bounded energy problem $\EGLU$, i.e., the $\AELU$ without consideration of the average-energy. Indeed, consider a game~$G$ with an objective $\LUBound(U, \initCredit \coloneqq 0)$, for some $U \in \mathbb{N}$. Assume $\playerOne$ has a winning strategy for this objective. Then this strategy ensures that along any consistent outcome $\play$, the running energy at any point is at most equal to~$U$. By definition, this implies that $\AEinf(\play) \leq \AEsup(\play) \leq U$. Hence this strategy is also winning for the $\AELU$ objective written as the conjunction $\LUBound(U, \initCredit \coloneqq 0) \cap \AvgEnergyLevel(t \coloneqq U)$. The converse is also trivially true.
Ergo, any lower bound on the complexity of the $\EGLU$ problem also holds for the $\AELU$ one. The $\EXPTIME$-hardness of the two-player $\EGLU$ problem was proved in~\cite{BFLMS08}, the $\PSPACE$-hardness of the one-player version was proved in~\cite{FJ13} (in the equivalent setting of reachability in bounded one-counter automata). Note that those results clearly rely on having an upper bound~$U$ larger than polynomial (w.r.t.~the size of the game) and encoded in binary, as we have already shown that in the opposite case the complexity of the problem is reduced.

Finally, observe that the same reduction and complexities also hold if we use $\AEinf$ instead of $\AEsup$ to define the $\AELU$ problem. This concludes our proof.
\end{proof}

\begin{remark}
One could argue that the reduction from $\AE$ games to $\MPG$ games presented in Lem.~\ref{lem:ae_to_mp} could be used to solve $\AE$ games without resorting to the specific analysis presented in Sect.~\ref{sec:average_games}. Indeed, in the case where the mean-payoff value is zero, any memoryless strategy (which we know to suffice) that is winning should only create zero cycles: the energy can be constrained in the range $[-2\cdot \vert \states\vert \cdot W,\, 2\cdot \vert \states\vert \cdot W]$ along any winning play. However, applying a pseudo-polynomial $\MPG$ algorithm on this new game would only grant $\EXPTIME$-membership for $\AE$ games (because of the polynomial dependency on $W$), in contrast to the $\NP \cap \coNP$ and $\PTIME$ results obtained with the refined analysis for two-player and one-player $\AE$ games respectively.
\end{remark}

\subsection{Memory requirements}

We prove pseudo-polynomial lower and upper bounds on memory for the two players in $\AELU$ games. The upper bound follows from the reduction to a pseudo-polynomial $\AE$ game and the memoryless determinacy of $\AE$ games proved in Thm.~\ref{thm:ae_two_memoryless}. Observe that winning strategies obtained via our reductions have a natural form: they are memoryless w.r.t.~configurations $(s, c)$ denoting the current state and the current energy level. As noted before, when the upper bound on energy $U \in \mathbb{N}$ is polynomial or given in unary, the expanded game is only polynomial in size, and the memory needs are also reduced.

The lower bound can be witnessed in two families of games asking for strategies using memory polynomial in the energy upper bound $U \in \mathbb{N}$ to be won by $\playerOne$ (Fig.~\ref{fig:AELU_memory_p1}) or $\playerTwo$ (Fig.~\ref{fig:AELU_memory_p2}) respectively. It is interesting to observe that those families already ask for such memory when considering the simpler $\EGLU$ objective (i.e., bounded energy only). Sufficiency of pseudo-polynomial memory for $\EGLU$ games follows from~\cite{BFLMS08} but to the best of our knowledge, it was not proved in the literature that such memory is also necessary.

\begin{figure}[htb]
        \centering
\subfloat[$\playerOne$ needs to take $U$ times $(s, s')$ before taking $(s, s)$ once and repeating.]{\label{fig:AELU_memory_p1}\hspace{8mm}\scalebox{0.7}{\begin{tikzpicture}[->,>=stealth',shorten >=1pt,auto,node
    distance=2.5cm,bend angle=45, scale=0.6, font=\normalsize]
    \tikzstyle{p1}=[draw,circle,text centered,minimum size=7mm,text width=4mm]
    \tikzstyle{p2}=[draw,rectangle,text centered,minimum size=7mm,text width=4mm]
    \node[p1]  (0)  at (0, 0) {$s$};
    \node[p1]  (1) at (3, 0) {$s'$};
    \node[]  (empty) at (0, -2.8) {};
    \node[]  (empty) at (-5, 0) {};
    \node[]  (empty) at (5, 0) {};
    
    \coordinate[shift={(0mm,5mm)}] (init) at (0.north);
    \path
    (0) edge [loop left, out=220, in=140,looseness=2, distance=2cm] node [left] {$-U$} (0)
    (init) edge (0);
	\draw[->,>=latex] (0) to[out=40,in=140] node[above] {$1$} (1);
	\draw[->,>=latex] (1) to[out=220,in=320] node[below] {$0$} (0);
      \end{tikzpicture}}\hspace{12mm}}
      \hspace{4mm}
      \subfloat[$\playerTwo$ needs to increase the energy up to $U$ using $(a,c)$ to force $\playerOne$ to take $(g, d)$ then make him lose by taking $(a, b)$.]{\label{fig:AELU_memory_p2}\scalebox{0.7}{\begin{tikzpicture}[->,>=stealth',shorten >=1pt,auto,node
    distance=2.5cm,bend angle=45, scale=1, font=\small]
    \tikzstyle{p1}=[draw,circle,text centered,minimum size=7mm,text width=4mm]
    \tikzstyle{p2}=[draw,rectangle,text centered,minimum size=7mm,text width=4mm]
    \node[p1]  (first)  at (-2, 0) {$s$};
    \node[p2]  (a)  at (0, 0) {$a$};
    \node[p1]  (b) at (-4, -1.5) {$b$};
    \node[p1]  (c) at (-2, -1.5) {$c$};
    \node[p1]  (d) at (0, -1.5) {$d$};
    \node[p1]  (e) at (2, -1.5) {$e$};
    \node[p1]  (f) at (4, -1.5) {$f$};
    \node[p1]  (g) at (0, -3) {$g$};
    \node[]  (empty) at (-6, 0) {};
    \node[]  (empty) at (6, 0) {};
    \coordinate[shift={(-5mm,0mm)}] (init) at (first.west);
    \path
    (first) edge node[above] {$1$} (a)
    (a) edge node[left,xshift=-4mm,yshift=-0.8mm] {$-1$} (b)
    (a) edge node[left,xshift=-1mm] {$1$} (c)
    (d) edge node[left,xshift=0mm] {$0$} (a)
    (e) edge node[right,xshift=1mm] {$0$} (a)
    (f) edge node[right,xshift=4mm,yshift=-0.7mm] {$0$} (a)
    (b) edge node[left,xshift=-4mm,yshift=0.4mm] {$0$} (g)
    (c) edge node[left,xshift=-1mm] {$0$} (g)
    (g) edge node[left,xshift=0mm] {$-U$} (d)
    (g) edge node[right,xshift=1mm] {$0$} (e)
    (g) edge node[right,xshift=4mm,yshift=0.6mm] {$1$} (f)
    (init) edge (first);
      \end{tikzpicture}} }
      \vspace{-1mm}
	\caption{Families of games witnessing the need for pseudo-polynomial-memory strategies for $\EGLU$ (and $\AELU$) objectives. The goal of $\playerOne$ is to keep the energy in $[0,\,U]$ at all times, for $U \in \mathbb{N}$. The left game is won by $\playerOne$ and the right one by $\playerTwo$ but both require memory polynomial in the \textit{value} $U$ to be won.}
	\label{fig:AELU_memory}
\end{figure}

\begin{theorem}
\label{thm:aelu_memory}
Pseudo-polynomial-memory strategies are both sufficient and necessary to win in $\EGLU$ and $\AELU$ games with arbitrary energy upper bound $U \in \mathbb{N}$, for both players. Polynomial memory suffices when~$U$ is polynomial in the size of the game or encoded in unary.
\end{theorem}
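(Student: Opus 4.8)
The plan is to prove sufficiency and necessity separately, treating $\EGLU$ and $\AELU$ together by using the observation (already exploited at the end of the proof of Thm.~\ref{thm:aelu_reduc}) that $\EGLU$ is the instance of $\AELU$ obtained for threshold $t\coloneqq U$: along any play staying in $[0,U]$ one automatically has $\AEinf,\AEsup\le U$, so $\LUBound(U,0)$ and $\LUBound(U,0)\cap\AvgEnergyLevel(U)$ have exactly the same sets of winning strategies, for either player. Hence every memory lower bound for $\EGLU$ transfers verbatim to $\AELU$, and it suffices to prove the sufficiency part for $\AELU$ and the necessity part for $\EGLU$.

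\emph{Sufficiency.} For $\EGLU$ the pseudo-polynomial (resp.\ polynomial) upper bound is part of~\cite{BFLMS08}. For $\AELU$ I~would invoke Lem.~\ref{lem:aelu_to_ae}: the game is reduced to an $\AEG$ game over the expanded graph $G'$ with state set $(S\times\{0,\dots,U\})\cup\{\textsf{sink}\}$, of pseudo-polynomial size. By Thm.~\ref{thm:ae_two_memoryless} both players have memoryless optimal strategies in $G'$; such a strategy depends only on the current configuration $(s,c)$, and read back in $G$ it is a finite-memory strategy whose memory component is precisely the running energy level (a Moore machine with memory set $\{0,\dots,U\}$ and the obvious update following $w$). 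This gives pseudo-polynomial memory for both players and both variants $\AEsup$, $\AEinf$. When $U$ is polynomial in $|S|$ or given in unary, $G'$ is of polynomial size, so polynomially many memory states suffice.

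\emph{Necessity.} It remains to analyse the two $\EGLU$ games. In Fig.~\ref{fig:AELU_memory_p1} I~would first note that there is a \emph{unique} play keeping the energy within $[0,U]$: the self-loop of weight $-U$ at $s$ is usable only when the energy is exactly $U$, and the only way to raise the energy is the $+1$ cycle through $s'$; thus the play is forced and visits $s$ with energy levels $0,1,\dots,U$ and then $0$ again, periodically. Let $(M,m_0,\alpha_u,\alpha_n)$ be any Moore machine realising it, and let $m_i\in M$ be the memory reached at the visit of $s$ with energy $i$ within a period, $0\le i\le U$. If $m_i=m_j$ with $i<j\le U$, then applying $\alpha_u$ for $s$ and then for $s'$ (the forced successor) yields $m_{i+1}=m_{j+1}$, hence $m_{i+k}=m_{j+k}$ and in particular $m_{i+(U-j)}=m_U$ with $1\le i+(U-j)<U$; but $\alpha_n(m_U,s)$ is the self-loop, whereas at energy $i+(U-j)<U$ the winning play requires the move $s\to s'$, a contradiction. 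So $m_0,\dots,m_U$ are pairwise distinct and $|M|\ge U+1$: pseudo-polynomial memory is necessary for $\playerOne$.

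For Fig.~\ref{fig:AELU_memory_p2} I~would run the symmetric argument from $\playerTwo$'s side: $\playerTwo$ does win (it can push the energy up to $U$ via the $+1$ moves and then catch $\playerOne$ at energy $0$ with the $-1$ move), but I~would show that any winning strategy for $\playerTwo$ needs polynomially many (in $U$) memory states, because with fewer states one can exhibit a $\playerOne$ response keeping the energy in $[0,U]$ forever\,---\,the point being that $\playerTwo$ must enact the correct number of $+1$ moves and the precise timing of the decisive $-1$ move, and the necessary count cannot be maintained with a smaller memory. The main obstacle is exactly this last step: unlike the one-player game of Fig.~\ref{fig:AELU_memory_p1}, the play here is not forced, so the pumping argument must quantify over $\playerOne$'s possible deviations and carefully follow the energy along them to pinpoint where a too-small memory of $\playerTwo$ ``wraps around'' and becomes exploitable. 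The sufficiency direction and the $\playerOne$ lower bound are, by contrast, routine once Lem.~\ref{lem:aelu_to_ae}, Thm.~\ref{thm:ae_two_memoryless}, and the uniqueness of the forced play are in hand.
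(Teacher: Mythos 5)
Your treatment of sufficiency (via Lem.~\ref{lem:aelu_to_ae} and the memoryless determinacy of the expanded $\AEG$ game from Thm.~\ref{thm:ae_two_memoryless}), your identification of $\EGLU$ with $\AELU$ at threshold $t\coloneqq U$, and your lower bound for $\playerOne$ on the game of Fig.~\ref{fig:AELU_memory_p1} all coincide with the paper's proof. In fact your pumping argument on the Moore machine along the unique safe play is spelled out more carefully than the paper's one-sentence justification, and it is correct: the play is forced, the memory states $m_0,\dots,m_U$ at successive visits of $s$ must be pairwise distinct, hence $|M|\ge U+1$.

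The gap is the one you yourself flag: the lower bound for $\playerTwo$ on Fig.~\ref{fig:AELU_memory_p2} is asserted but not proved, and it is not a routine symmetrization of the one-player argument. What must be shown is that \emph{every} $\playerTwo$ strategy realized by a Moore machine with fewer than $U+1$ states is defeated by some $\playerOne$ response. The paper's route is to exhibit that response explicitly: $\playerOne$, tracking the true energy, answers $(a,b)$ with $(g,f)$ (net effect $0$ over the round) and answers $(a,c)$ with $(g,d)$ when the energy at $g$ equals $U$ and with $(g,e)$ otherwise; one then argues that a machine with too few states must conflate two histories whose energies at $a$ differ, and therefore cannot time the decisive $(a,b)$ at energy $0$ (nor $(a,c)$ at energy $U$). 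Your sketch stops exactly where this construction is needed. Note moreover that any rigorous completion must confront the fact that $\playerTwo$'s memory is updated on \emph{every} visited state, so $\playerOne$'s energy-dependent choice among $d$, $e$, $f$ is observable to $\playerTwo$ and leaks information about the energy (a visit to $d$ signals that the energy has just been reset to $0$); the indistinguishability argument must therefore be carried out with respect to what $\playerTwo$ actually observes along the interaction, not merely with respect to his own moves. This is precisely the ``quantify over $\playerOne$'s deviations'' difficulty you anticipated, and without resolving it the necessity claim for $\playerTwo$\,---\,hence half of the theorem\,---\,remains unproven in your write-up.
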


\begin{proof}
We first prove the \textit{upper bound on memory}. The expanded game $G'$ built in the reduction from the $\AELU$ to the $\AEG$ problem (Lem.~\ref{lem:aelu_to_ae}) has a state space of size $\vert \states' \vert = (U + 1) \cdot \vert \states \vert + 1$, over which memoryless strategies suffice, by Thm.~\ref{thm:ae_two_memoryless}. Thus, winning for the $\AELU$ objective only requires memory that is polynomial in the original number of states and the upper bound value $U \in \mathbb{N}$. The same reduction holds for $\EGLU$ games with an even simpler \textit{safety} objective (never reaching \textsf{sink}) instead of the $\AE$ one (or equivalently with the $\AE$ objective for threshold $t = U$). Thus, with regard to the \textit{binary encoding} of $U$, strategies require exponential memory in general. For the special cases of unary encoding or polynomially bounded value~$U$, polynomial memory suffices. Note that as usual, these arguments are true for both the $\AEsup$ and the $\AEinf$ versions of the objective.

We now discuss the two families of games witnessing that pseudo-polynomial memory is also a \textit{lower bound} for both players. 

First, consider the one-player game depicted in Fig.~\ref{fig:AELU_memory_p1} and parametrized by the value $U \in \mathbb{N}$. Assume the objective is $\EGLU$, asking for the energy to remain within $[0,\,U]$ at all times. Recall that the initial energy level is fixed to $\initCredit \coloneqq 0$. It is easy to see that there is only one acceptable strategy for $\playerOne$: playing $(s, s')$ exactly $U$ times, then playing the self-loop $(s, s)$ once, and repeating this forever. Indeed, any other strategy eventually leads the energy outside the allowed range. Hence, to win this game, $\playerOne$ needs a strategy described by a Moore machine whose memory contains at least $(U + 1)$ states. This proves that pseudo-polynomial memory is required for $\playerOne$ in $\EGLU$ games. Furthermore, the same argument can be applied on this game with objective $\AELU$ by considering the average-energy threshold $t \coloneqq U$ which is trivially ensured by strategies satisfying the $\EGLU$ objective.

Second, consider the two-player\footnote{In $\EGLU$ games with only $\playerTwo$ (i.e., $S_{1} = \emptyset$), $\playerTwo$ does not need memory to play as he can pick beforehand which of the energy bounds (lower or upper) he will transgress, and then do so with a memoryless strategy.} $\EGLU$ game depicted in Fig.~\ref{fig:AELU_memory_p2}. Again this game is parametrized by the energy upper bound $U \in \mathbb{N}$ and the initial energy level is fixed to $\initCredit \coloneqq 0$. This game can be won by~$\playerTwo$ using the following strategy: if the energy level is in $[1,\, U]$, play $(a,c)$, otherwise play $(a, b)$. Note that this strategy again requires at least $(U + 1)$ states of memory in its Moore machine (to keep track of the energy level). 

This strategy is indeed winning. Observe that $\playerOne$ can only decrease the energy by using edge $(g, d)$ of weight $-U$, and this edge can only be used safely if the energy level is exactly $U$. In addition, the energy is bound to reach or exceed $U$ eventually (as it will increase by $1$ or $2$ between each visit of $a$). If it exceeds $U$, then $\playerTwo$ wins directly. Otherwise, assume that the energy is $U$ when the game is in state~$g$. If $\playerOne$ plays $(g, f)$, he loses (the energy reaches $U+1$). If he plays $(g, e)$, $\playerTwo$ wins by playing $(a,c)$ (the energy also reaches $U+1$). And if $\playerOne$ plays $(g, d)$, $\playerTwo$ wins by playing $(a, b)$ (the energy reaches $-1$). Hence, $\playerTwo$ wins the game against all strategies of $\playerOne$.

Now, observe that $\playerTwo$ \textit{cannot} win if he uses a strategy with less memory states in its Moore machine. Indeed, any such strategy cannot keep track of all the energy levels between $0$ and $U$ and play $(a,c)$ a sufficient number of times in a row before switching to the appropriate choice (depending on the energy being $0$ or $U$). Therefore, if $\playerTwo$ uses such a strategy, $\playerOne$ can maintain the energy in the allowed range by simply reacting to edge $(a, b)$ with $(g, f)$ and to edge $(a, c)$ by choosing between $(g, d)$ (if the energy is $U$) and $(g, e)$ (otherwise). Such choices are safe for $\playerOne$ as the strategy of $\playerTwo$ does not have enough memory to distinguish the resulting energy levels from the intermediate ones.

This proves that $\playerTwo$ also needs pseudo-polynomial memory in $\EGLU$ games. Finally, we remark that this reasoning also holds for the $\AELU$ objective with threshold $t \coloneqq U$, as for the previous game.
\end{proof}

\section{Average-Energy with Lower-Bounded Energy}
\label{sec:average_l}
We conclude with the conjunction of an $\AE$ objective with a lower bound (again equal to zero) constraint on the running energy, but no upper bound. This corresponds to an \textit{hypothetical} unbounded energy storage. Hence, its applicability is limited, but it may prove interesting on the theoretical standpoint.

\begin{bproblem}[$\AEL$]
\label{problem:ael}
Given a game~$\Game$, an initial state $\initState$ and a threshold~$t \in \mathbb{Q}$, decide if $\pI$ has a winning strategy $\St_1 \in \strats_{1}$ for objective $\LBound(\initCredit \coloneqq 0)\,\cap\, \AvgEnergyLevel(t)$.
\end{bproblem}

This problem proves to be challenging to solve: we provide partial answers in the following, with a proper algorithm for one-player games but only a correct but incomplete method for two-player games. As usual, we present our results for the supremum variant $\AEsup$.

\paragraph{Illustration.} Consider the game in Fig.~\ref{fig:aelu_example}. Recall that for $\AELU$ with $U = 3$, the optimal play is $\play_{3}$, and it requires alternation between all three different simple cycles.
Now consider $\AEL$. One may think that relaxing the objective would allow for simpler winning strategies. This is not the case. Some new plays are now acceptable w.r.t.~the energy constraint, such as $\play_{4} = (aabaaba)^{\omega}$, with $\AEsup(\play_{4}) = 11/7$ and $\play_{5} = (aaababa)^{\omega}$, with $\AEsup(\play_{5}) = 18/7$. Yet, the optimal play w.r.t.~the $\AE$ (under the lower-bound energy constraint) is still $\play_{3}$, hence still requires to use all the available cycles, in the appropriate order. This indicates that $\AEL$ games also require complex solutions.

\subsection{One-player games}

We assume that the unique player is $\pI$. Indeed, the opposite case is easy as for $\pII$, the objective is a disjunction and $\pII$ can choose beforehand which sub-objective he will transgress, and do so with a simple memoryless strategy (both $\AEG$ and $\EGL$ games admit memoryless optimal strategies as seen before).
We show that one-player $\AEL$ problems lie in \PSPACE by reduction to $\AELU$ problems for a well-chosen upper bound $U \in \mathbb{N}$ and then application of Thm.~\ref{thm:aelu_reduc}.

\paragraph{The reduction.} Given a game $G = (S_{1}, S_{2} = \emptyset, E, w)$ with largest weight $W \in \mathbb{N}$, an initial state $\initState$,  and a threshold $t \in \mathbb{Q}$, we reduce the $\AEL$ problem to an $\AELU$ problem with an upper bound $U \in \mathbb{N}$ defined as $U \coloneqq t + N^{2} + N^{3}$, with $N = W \cdot (|S|+2)$. Observe that the length of the binary encoding of $U$ is polynomial in the size of the game, the encoding of the largest weight $W$ and the encoding of the threshold~$t$. The intuition is that if $\playerOne$ can win a one-player $\AEL$ game, he can win it without ever reaching energy levels higher than the chosen bound~$U$, even if he is technically allowed to do so. Essentially, the interest of increasing the energy is making more cycles available (as they become safe to take w.r.t.~the lower bound constraint), but increasing the energy further than necessary is not a good idea as it will negatively impact the average-energy. To prove this reduction, we start from an arbitrary winning path in the $\AEL$ game, and build a witness path that is still winning for the $\AEL$ objective, but
also keeps the energy below $U$ at all times. Our construction exploits a
result of Lafourcade et~al. that bounds the value of the counter along a path in a one-counter automaton (stated in~\cite{LLT05} and proved in~\cite[Lem.~42]{LSV:04:16}). We slightly adapt it to our framework in the next lemma. The technique is identical, but the statement is more
precise. In the following, we call an \textit{expanded configuration} of the game $G$ a couple $(s, c)$ where $s \in \states$ is a state and $c \in \mathbb{Z}$ a level of energy. 

\begin{lemma}
  \label{lemma:bound}
  Let $g \in \mathbb{Z}$. Let $(s,c)$ and $(s',c')$ be two expanded
  configurations of the game $G$ such that there exists an expanded
  path $\rho_{\text{exp}} = (s_0,c_0) \dots (s_m,c_m)$ in $G$ from $(s,c)$ to
  $(s',c')$ with $c_i \ge g$ for every $0 \le i \le m$. Then, there is
  a path $\rho_{\text{exp}}'=(s'_0,c'_0) (s'_1,c'_1) \dots (s'_n,c'_n)$ in $G$ from
  $(s,c)$ to $(s',c')$ such that:
  \begin{itemize}
  \item for every $0 \le i \le n$, $g \le c'_i \le \max \{ c,c', g\}
    +N^2+N^3$, where $N = W \cdot (|S|+2)$, with $W$ the maximal
    absolute weight in $G$;
  \item there is an (injective) increasing mapping $\iota \colon
    \{1,\dots,n\} \to \{1,\dots,m\}$ such that for every $1 \le i \le
    n$, $s'_i = s_{\iota(i)}$ and $c'_i \le c_{\iota(i)}$.
  \end{itemize}

  Furthermore, for any two expanded paths $\rho^1$ and~$\rho^2$, with
  $\last(\rho^1) = (s,c)$ and $\first(\rho^2) = (s',c')$, if~$\AE(\rho^1
  \cdot \rho_{\text{exp}} \cdot \rho^2) \le g$, then also $\AE(\rho^1
  \cdot \rho'_{\text{exp}} \cdot \rho^2) \le \AE(\rho^1 \cdot
  \rho_{\text{exp}} \cdot \rho^2) \le g$.
\end{lemma}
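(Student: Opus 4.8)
I~would prove the geometric part (the bound on the counter together with the mapping~$\iota$) by iterated \emph{cycle surgery} on an arbitrary witness path, following the technique of Lafourcade et al.~\cite{LLT05,LSV:04:16}; the $\AE$ addendum then falls out of a direct computation using the structure produced by the surgery. Write $B = \max\{c,c',g\}$. If already $c_i \le B + N^2 + N^3$ for all~$i$, take $\rho'_{\text{exp}} = \rho_{\text{exp}}$ and~$\iota = \mathrm{id}$. Otherwise I~will show that one surgery step replaces $\rho_{\text{exp}}$ by a strictly shorter path from $(s,c)$ to $(s',c')$ that still stays $\ge g$ everywhere and in which every surviving configuration has a counter value no larger than in $\rho_{\text{exp}}$ and carries the same state. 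Iterating, the length strictly decreases, hence the process terminates, necessarily in a path with all counters in $[g,\,B+N^2+N^3]$; composing the per-step injections yields~$\iota$.

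\textbf{The surgery step.} Suppose $M := \max_i c_i > B + N^2 + N^3$ and let $p$ be a position with $c_p = M$. For a level~$\ell$ I track the state reached by the ``last ascent to~$\ell$ before~$p$'' and by the ``first descent to~$\ell$ after~$p$''. For every $\ell \in (B, M]$ both are well defined (the path runs from $c \le B$ up to~$M$ and back down to $c' \le B$), giving a map $\ell \mapsto (\mathrm{st}_{\uparrow}(\ell), \mathrm{st}_{\downarrow}(\ell)) \in S \times S$. Since $M-B > N^2 + N^3 \geq |S|^2$, two levels $\ell_1 < \ell_2$ produce the same pair; this yields a cycle $C_+$ on the ascent (base state $\mathrm{st}_\uparrow(\ell_1)$, net energy $+(\ell_2-\ell_1)$), after whose last vertex the counter stays $\ge \ell_2$ until~$p$, and a cycle $C_-$ on the descent (base state $\mathrm{st}_\downarrow(\ell_1)$, net energy $-(\ell_2-\ell_1)$), before whose first vertex the counter was $\ge \ell_2$ since~$p$, with $C_+$ occurring entirely before~$C_-$. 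Deleting both edge-cycles from $\rho_{\text{exp}}$ gives a valid path with the same endpoints (the two net effects cancel on the suffix past~$C_-$); every configuration strictly between $C_+$ and $C_-$ drops by $\ell_2-\ell_1$ but was $\ge \ell_2$, hence stays $\ge \ell_1 > B \ge g$; configurations outside are unchanged; and no counter ever increases, so the state labels and the domination $c'_i \le c_{\iota(i)}$ are preserved. The path got strictly shorter, as required.

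\textbf{The $\AE$ addendum.} Fix $\rho^1,\rho^2$ as in the statement and set $\pi = \rho^1 \cdot \rho_{\text{exp}} \cdot \rho^2$, $\pi' = \rho^1 \cdot \rho'_{\text{exp}} \cdot \rho^2$; since all modifications happen inside $\rho_{\text{exp}}$, $\pi'$ is obtained from~$\pi$ by deleting the same matched pairs $C_+, C_-$. It suffices to check that a single pair-deletion does not increase~$\AE$, assuming $\AE(\pi) \le g$; the general case follows by induction, the intermediate paths all having $\AE \le g$. Write $\pi = \alpha \cdot C_+ \cdot \beta \cdot C_- \cdot \gamma$ and $\pi' = \alpha \cdot \beta \cdot \gamma$ as edge sequences. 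Decomposing $\sum_i \EL(\pi(i))$ over these segments exactly as in the proof of Lem.~\ref{lem:AE_prefix}, the removed mass equals
\[
\textstyle\sum_i \EL(\pi(i)) - \sum_i \EL(\pi'(i)) \;=\; q_1\cdot \overline{e}_{C_+} \;+\; q_2\cdot \overline{e}_{C_-} \;+\; r\cdot(\ell_2-\ell_1),
\]
where $q_1 = |C_+|$, $q_2 = |C_-|$, $r = |\beta|$, and $\overline{e}_{C_+}, \overline{e}_{C_-}$ are the average energy levels of $C_+$, resp.\ of $C_-$, \emph{as seen inside~$\pi$}. Both cycles lie inside $\rho_{\text{exp}}$, where every energy level is $\ge g$, so $\overline{e}_{C_+}, \overline{e}_{C_-} \ge g$; hence the removed mass is at least $(q_1+q_2)\,g \ge (q_1+q_2)\,\AE(\pi)$. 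As $\AE$ of a finite path is its average energy and the length dropped by exactly $q_1+q_2$, this is precisely the inequality needed to conclude $\AE(\pi') \le \AE(\pi) \le g$.

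\textbf{Main obstacle.} The genuinely delicate point is the simultaneous choice of $C_+$ and $C_-$: to keep both endpoints of the path fixed we may only remove an energy-neutral set of cycles, so the positive ascent cycle and the negative descent cycle must have \emph{matched} net effects and be placed so that the intervening high region stays above~$g$ once shifted down. The product pigeonhole on the pairs $(\mathrm{st}_\uparrow(\ell),\mathrm{st}_\downarrow(\ell))$ achieves exactly this for unit-step paths; carrying it out for arbitrary integer weights introduces an additive error of order~$W$ at each level, which is why the threshold is inflated to $N^2+N^3$ with $N = W(|S|+2)$ rather than merely~$|S|^2$. This weight bookkeeping is the only part of the argument that requires care, and it is precisely the content transported from~\cite{LLT05,LSV:04:16}.
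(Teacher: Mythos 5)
Your overall architecture matches the paper's: iterated cycle surgery on a witness path, a domination mapping $\iota$ obtained by composing the per-step injections, and an averaging computation for the $\AE$ addendum. Your treatment of the addendum is in fact more explicit than the paper's and is correct: since every removed position carries energy $\ge g \ge \AE(\pi)$ and every surviving position's energy does not increase, the sum drops by at least $(q_1+q_2)\cdot\AE(\pi)$ while the length drops by exactly $q_1+q_2$, which gives $\AE(\pi')\le\AE(\pi)$ by the cross-multiplication you indicate.

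However, the surgery step itself has a genuine gap, and it is exactly the part that makes the lemma nontrivial. Your pigeonhole on the pairs $\bigl(\mathrm{st}_{\uparrow}(\ell),\mathrm{st}_{\downarrow}(\ell)\bigr)$ produces \emph{one} ascent cycle $C_+$ and \emph{one} descent cycle $C_-$ whose net effects are exactly $+(\ell_2-\ell_1)$ and $-(\ell_2-\ell_1)$ only when the counter moves by unit steps, so that every level in $(B,M]$ is actually attained. With arbitrary integer weights the path can jump over levels; the natural repair (``last position where the counter is $\le\ell$ before $p$'') yields cycles whose net effects agree with $\ell_2-\ell_1$ only up to an additive error of order $W$ at each end, so $C_+$ and $C_-$ need not cancel and deleting both changes the endpoint $(s',c')$. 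You acknowledge this in your ``main obstacle'' paragraph but do not resolve it, and the resolution is not mere bookkeeping: the paper's proof collects, on the ascent, about $\beta$ candidate cycles each with net effect some $\widetilde{d}^{(i)}\in(0,\alpha+W)$, pigeonholes to find a single value $d$ occurring at least $\alpha+W$ times, does the same on the descent to get $d'$, and then removes $d'$ copies of the $+d$-cycles together with $d$ copies of the $-d'$-cycles, so that the total removed effect is $d'\cdot d - d\cdot d' = 0$ exactly. This multiplicity-and-cross-cancellation device is what forces the $(\alpha+W)^2$ drop bound and hence the $N^2+N^3$ inflation in the statement; without it the step you describe does not produce a path with the same endpoints, so the induction cannot even start for non-unit weights. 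Supplying that argument (or a genuinely worked-out transport of it from the one-counter setting of Lafourcade et al.) is required to close the proof.
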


\begin{proof}
  We write $\alpha =
  W \cdot (|S|+1)$, $\beta= (\alpha+W) \cdot (\alpha+W-1)-1$ and $K =
  \max\bigl\lbrace c,c', g\bigr\rbrace+(\alpha+W)^2$. We~apply inductively a
  transformation that removes similar ascending and descending segments of the
  path. The~segments are selected such that their composition is neutral
  w.r.t.~the energy. 

  Pick a subpath $\rho_{\text{exp}}[k,k+h] = (s_k, c_k)\ldots{}(s_{k+h}, c_{k+h})$ of $\rho_{\text{exp}}$, if it exists, such that:
  \begin{enumerate}[\hspace{3mm}$(a)$]
  \item $c_{k} \le K$ and $c_{k+h} \le K$;
  \item for every $0 < \ell < h$, $c_{k+\ell} > K$;
  \item there is $0<\ell<h$ such that $c_{k+\ell} > K + W \cdot
    (|S|+1) \cdot \beta$.
  \end{enumerate}
  If such a subpath does not exist, then this means that the cost
  along $\rho_{\text{exp}}$ is overall bounded by $K+W\cdot (|S|+1)\cdot \beta$ (since
  condition $(a)$ is not restrictive\,---\,$c,c' \le K$), which then
  concludes the proof. Hence, assume such a subpath exists for the following steps.

  \paragraph{Ascent part.}  Let $k \le \ell_0\le\dots \le
  \ell_{\beta} \le k+h$ be indices such that:
  \begin{itemize}
  \item $c_{\ell_i} > K + i \cdot W \cdot (|S|+1)$;
  \item for every $k \le \ell< \ell_i$, $c_\ell \le K + i \cdot W
    \cdot (|S|+1)$.
  \end{itemize}
  Fix $0 \le i \le \beta$. Then it holds that $c_{\ell_i} \le K + i
  \cdot W \cdot (|S|+1) + W$ and thus $c_{\ell_{i+1}} - c_{\ell_i} > K
  + (i+1) \cdot W \cdot (|S|+1) -(K + i \cdot W \cdot (|S|+1) + W) = W
  \cdot (|S|+1) -W = W \cdot |S|$.  Let $J_i$ be a subset of
  $[\ell_i;\ell_{i+1}]$ defined by $\ell_i \in J_i$, and if $j \in
  J_i$, then let $j' \le \ell_{i+1}$ be the smallest index larger than
  $j$ (if it exists) such that $c_{j'}>c_j$. Obviously we have $c_j <
  c_{j'} \le c_j+W$. Hence the cardinal of $J_i$ is at least $1+
  \frac{W \cdot |S|}{W} \ge |S|+1$. Hence there is a state
  $\widetilde{s}^{(i)}$ and two indices $j_{i,1}<j_{i,2} \in J_i$ with
  $(s_{j_{i,1}},c_{j_{i,1}}) =(\widetilde{s}^{(i)},\alpha_1)$ and
  $(s_{j_{i,2}},c_{j_{i,2}}) =(\widetilde{s}^{(i)},\alpha_2)$ with
  $c_{\ell_i} \le \alpha_1 < \alpha_2 \le c_{\ell_{i+1}}$, hence
  using previous computed bounds, $0< \alpha_2-\alpha_1 \le
  c_{\ell_{i+1}} - c_{\ell_i} < W \cdot (|S|+2) = \alpha+W$.  We
  write $\widetilde{d}^{(i)} = \alpha_2-\alpha_1$. The segment between
  indices $j_{i,1}$ and $j_{i,2}$ is a candidate for being removed.
  Due to the value of $\beta$, there is $d \in \{\widetilde{d}^{(i)}
  \mid 0 \le i \le \beta\}$ that appears $(\alpha+W)$ times in that
  set.

  \paragraph{Descent part.} We do a similar reasoning for the
  ``descent'' part. There must exist indices $k \le m_0 \le \dots \le
  m_\beta \le k+h$ such that:
  \begin{itemize}
  \item $c_{m_i} > K + (\beta-i) \cdot W \cdot (|S|+1)$;
  \item for every $m_i<m \le k+h$, $c_m \le K + (\beta-i) \cdot W
    \cdot (|S|+1)$.
  \end{itemize}
  Note that we obviously have $\ell_\beta<m_0$.

  Then we apply the same combinatorics as for the ascent part. There
  is some value $0<d' <\alpha+W$ which appears at least $\alpha+W$
  times in potential cycles within the segment $\rho_{\text{exp}}[k,k+h]$.

  \paragraph{Transformation.} The algorithm then proceeds by removing $d'$ segments that increase
  the cost by $d$ within $\rho_{\text{exp}}[\ell_0,\ell_\beta]$ and $d$ segments
  that decrease the cost by $d'$ within $\rho_{\text{exp}}[m_0,m_\beta]$.  This
  yields another path $\rho_{\text{exp}}'$ and an obvious injection of $\rho_{\text{exp}}'$ into
  $\rho_{\text{exp}}$ which satisfies all the mentioned constraints.  The sum of
  all energy levels along $\rho_{\text{exp}}'$ is smaller than that along $\rho_{\text{exp}}$,
  and any energy level along $\rho_{\text{exp}}'$ is obtained from that along
  $\rho_{\text{exp}}$ by decreasing by at most $0<d \cdot d' < (\alpha+W)^2$.  By
  assumption on segment $\rho_{\text{exp}}[k,k+h]$ and bound $K$, we get that
  the cost along $\rho_{\text{exp}}'$ is always larger than or equal to $g$, $c$
  and $c'$.

  We iterate this transformation to get a uniform upper bound.  We
  finally notice that the obtained upper bound $K + W \cdot (|S|+1)
  \cdot \beta$ is bounded itself by $\max \{c,c',g\}+N^2+N^3$, where $N =
  W \cdot (|S|+2)$. This implies the expected result.
\end{proof}

We build upon this lemma to define an appropriate transformation leading to the witness path and derive a sufficiently large upper bound $U \in \mathbb{N}$ for the $\AELU$ problem.

\begin{lemma}
\label{lem:ael_to_aelu}
The $\AEL$ problem over a one-player game $G = (S_{1}, S_{2} = \emptyset, E, w)$, with an initial state $\initState$ and a threshold $t \in \mathbb{Q}$, is reducible to an $\AELU$ problem over the same game $G$, for the same threshold~$t$ and upper bound $U \coloneqq t + N^{2} + N^{3}$, with $N = W \cdot (|S|+2)$.
\end{lemma}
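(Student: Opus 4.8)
The statement to prove is the equivalence, for the one-player game $G$ (so $S_2=\emptyset$ and a strategy is just a single play): $\playerOne$ wins the $\AEL$ game from $\initState$ with threshold $t$ iff $\playerOne$ wins the $\AELU$ game from $\initState$ with the same threshold $t$ and upper bound $U = t + N^2 + N^3$. One direction is immediate: a play keeping the energy in $[0,U]$ in particular keeps it non-negative, so it already witnesses the $\AEL$ objective. For the converse we may assume $t \geq 0$, because when $t<0$ a non-negative energy forces every partial average, hence $\AEsup$, to be $\geq 0 > t$, so neither problem is winnable.

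\paragraph{Construction of the witness play.} Start from an $\AEL$-winning play $\rho$, i.e.\ $\EL(\rho(n)) \geq 0$ for all $n$ and $\AEsup(\rho) \leq t$, and build a play $\rho'$ with $\EL(\rho'(n)) \in [0,U]$ for all $n$ and $\AEsup(\rho') \leq t$. Call $n$ \emph{low} if $\EL(\rho(n)) \leq \lfloor t\rfloor$. Since energies are integers and $\AEsup(\rho)\le t$, low positions occur infinitely often (otherwise $\EL(\rho(n)) \geq \lfloor t\rfloor+1$ for all large $n$, forcing the partial averages to exceed $\lfloor t\rfloor+1 > t$). Enumerate them $0 = n_0 < n_1 < \cdots$ (note $\EL(\rho(0))=0$). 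To the expanded subpath of $\rho$ from position $n_{j-1}$ to position $n_j$ --- whose energies are all $\geq 0$ --- apply Lem.~\ref{lemma:bound} with $g\coloneqq 0$; this returns a subpath $\widehat\rho_j$ with the same endpoints whose energies stay in $\bigl[\,0,\ \max\{\EL(\rho(n_{j-1})),\EL(\rho(n_j)),0\}+N^2+N^3\,\bigr]\subseteq[0,\ \lfloor t\rfloor+N^2+N^3]\subseteq[0,U]$, together with a strictly increasing injection $\iota_j$ of its positions into those of the original subpath that fixes the two endpoints and weakly decreases energies. Concatenating, $\rho'\coloneqq \widehat\rho_1\cdot\widehat\rho_2\cdots$ is a play from $(\initState,0)$ with energy always in $[0,U]$, so it satisfies $\LUBound(U,0)$. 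It remains to check $\AEsup(\rho')\leq t$.

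\paragraph{The average-energy bound.} Glue the $\iota_j$'s into a single strictly increasing injection $\iota$ from the positions of $\rho'$ to those of $\rho$, with $\iota(0)=0$, $\EL(\rho'(i))\leq\EL(\rho(\iota(i)))$, and with every cut position $n_j$ in its image; the positions of $\rho$ outside the image of $\iota$ are exactly those deleted by the flattenings, and by the construction in the proof of Lem.~\ref{lemma:bound} each carries energy $>K$, where $K = \max\{\EL(\rho(n_{j-1})),\EL(\rho(n_j)),0\}+N^2$ for the relevant segment. The quantitative key is that every deleted position has energy at least $t+1$: a segment that is actually shortened has a non-empty interior, every position of which has energy $\geq\lfloor t\rfloor+1$ (it lies strictly between two consecutive low positions), so its endpoints have energy $\geq\lfloor t\rfloor+1-W$ (one edge away from an interior position), whence $K\geq\max\{0,\lfloor t\rfloor+1-W\}+N^2$, which --- using $N=W\cdot(|S|+2)$, so that $N^2$ dominates $W$ --- exceeds $t$ by at least $1$. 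Now fix $\varepsilon\in(0,1)$ and take $m$ large, writing $p\coloneqq\iota(m)\geq m$. Among $\rho(1),\dots,\rho(p)$ the $p-m$ positions not in $\iota(\{1,\dots,m\})$ are deleted ones (energy $\geq t+1$) and the rest have energy $\geq 0$, so $\sum_{i=1}^{p}\EL(\rho(i))\geq (p-m)(t+1)$; and since $\AEsup(\rho)\leq t$, also $\sum_{i=1}^{p}\EL(\rho(i))\leq(t+\varepsilon)\,p$ once $p$ is large. Together these force $p/m\leq(t+1)/(1-\varepsilon)$. Finally, using $\EL(\rho'(i))\leq\EL(\rho(\iota(i)))$ and subtracting off the deleted energies,
\[
\sum_{i=1}^{m}\EL(\rho'(i)) \;\leq\; \sum_{i=1}^{p}\EL(\rho(i)) - (p-m)\,t \;\leq\; (t+\varepsilon)\,p - (p-m)\,t \;=\; \varepsilon\,p + t\,m ,
\]
so $\frac{1}{m}\sum_{i=1}^{m}\EL(\rho'(i))\leq t+\varepsilon\cdot(p/m)\leq t+\varepsilon(t+1)/(1-\varepsilon)$ for all large $m$. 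Letting $\varepsilon\to 0$ gives $\AEsup(\rho')\leq t$, hence $\rho'$ is $\AELU$-winning. The argument for the $\AEinf$ variant is identical.

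\paragraph{Main obstacle.} The delicate part is precisely this last step. The flattenings shorten the play, so a length-$m$ prefix of $\rho'$ corresponds to a strictly longer prefix of $\rho$, and a priori the running average could increase. What rescues the argument is the quantitative content of Lem.~\ref{lemma:bound} --- that the removed positions carry energy comfortably above $t$, thanks to the generous choice of $N$ --- which simultaneously keeps the length distortion $\iota(m)/m$ bounded and makes deletion of those positions average-decreasing; verifying that this holds uniformly over a play with infinitely many flattened pieces, and tracking the constants so that the flattened segments stay within the exact bound $U=t+N^2+N^3$, is the crux.
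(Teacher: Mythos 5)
Your proof is correct, and it reaches the same bound $U = t + N^{2} + N^{3}$ via the same auxiliary flattening lemma (Lem.~\ref{lemma:bound}), but the way you deploy that lemma is genuinely different from the paper's. The paper first converts the arbitrary winning play into a \emph{lasso} witness: it finds a minimal configuration $(s,c)$ visited infinitely often, decomposes the suffix into cycles on $(s,c)$, and extracts one cycle $\gamma$ with $\AE(\gamma)\le t-c$ (via Lem.~\ref{lem:AE_prefix}); it then flattens, inside that single finite cycle, the maximal excursions where the energy exceeds $t$, applying Lem.~\ref{lemma:bound} with $g=t$ so that the lemma's final clause (flattening preserves $\AE\le g$) yields the average-energy bound for free. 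You skip the lasso extraction entirely: you cut the raw infinite play at its returns to ``low'' energy ($\le\lfloor t\rfloor$), flatten each finite segment with $g=0$, and, since the final clause of the lemma is then useless (it would only preserve $\AE\le 0$), you re-derive the bound $\AEsup(\rho')\le t$ by hand with a limsup/counting argument: deleted positions carry energy $\ge t+1$, which simultaneously bounds the length distortion $\iota(m)/m$ and makes the deletions average-decreasing. I checked the arithmetic and it goes through (including the edge cases $t<0$, empty interiors, and the fact that $N^{2}\ge W$ rescues the endpoint estimate $K\ge\lfloor t\rfloor+1-W+N^{2}>t$). One caveat: your argument uses a property that is established only \emph{inside} the proof of Lem.~\ref{lemma:bound} and is not part of its statement, namely that every position removed by the transformation has (original) energy strictly above $K=\max\{c,c',g\}+N^{2}$; if you present the proof this way you should state this as an explicit strengthening of the lemma. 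What each route buys: yours avoids the lasso/cycle-extraction machinery and works directly on an arbitrary witness; the paper's keeps Lem.~\ref{lemma:bound} as a black box at the cost of the preliminary reduction to a periodic play.
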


\begin{proof}
  We prove that we can bound the energy along a
  witness of the one-player $\AEL$ problem. Let $\sigma$ be a winning strategy of $\pI$ for the objective
  $\LBound(\initCredit \coloneqq 0) \cap \AvgEnergyLevel(t)$ and $\pi = s_0 s_1 \dots s_n \dots$ be the
  corresponding outcome.

  We build another strategy $\widetilde\sigma$ with corresponding
  play $\widetilde\pi$ such that for every $n$, $0 \le
  \initCredit+\EL(\widetilde\play(n)) \le \initCredit + t + N^2 +
  N^3$, where $N = W \cdot (|S|+2)$ ($W$ is the maximal absolute
  weight in $G$), and such that $\AEsup(\widetilde\pi) \le \AEsup(\pi)$. We
  actually build the play $\widetilde\pi$ directly, and infer strategy
  $\widetilde\sigma$.

  From $\pi$, we build the expanded play $\play_{\text{exp}} = (s_0,c_0) (s_1,c_1)
  \dots (s_n,c_n) \dots$ such that $c_i = \EL(\play(i))$ for every $i
  \ge 0$. Since $\pi$ is a witness satisfying the objective
  $\LBound(\initCredit) \cap \AvgEnergyLevel(t)$, it holds that $c_i +
  \initCredit \ge 0$ for every $i \ge 0$. We now show that some pair
  $(s,c)$ is visited infinitely often along $\play_{\text{exp}}$. Toward a
  contradiction, assume that it is not the case. Then since energy levels are bounded from below along $\play$, this means that
  $\liminf_{n\to \infty} c_n = \TPinf(\play) = +\infty$, and by Lem.~\ref{lem:AEbetweenTP}, that $\AEsup(\pi) =
  +\infty$ which contradicts the play being winning for the $\AE$ objective with threshold $t \in \mathbb{Q}$. Now select the smallest energy $c$ and state $s$ such that
  $(s,c)$ is visited infinitely often along $\play_{\text{exp}}$. Pick $n_0$ such
  that (1)~$(s_{n_0}, c_{n_0}) = (s,c)$, (2)~$\pi[\ge n_0] = s_{n_0}s_{n_0 + 1}\ldots{}$ only visits
  states that are visited infinitely often along $\pi$, and (3)~for
  every $(s',c')$ along $\play_{\text{exp}}[\ge n_0]$, it holds that $c' \ge c$.
  
  We can then write $\play_{\text{exp}}$ as $\play_{\text{exp}}[\le n_0] \cdot \mathcal{C}_1
  \cdot \mathcal{C}_2 \dots$ where each $\mathcal{C}_i$ ends at
  configuration $(s,c)$ (hence $\mathcal{C}_i$ forms a cycle), and
  each configuration $(s',c')$ along some $\mathcal{C}_i$ satisfies
  $c' \ge c$. We write $\gamma_i$ for the projection of
  $\mathcal{C}_i$ on states (without energy level)\,---\,it forms a cycle as
  well. We obviously have $\AEsup(\pi) = \EL(\pi(n_0)) + \AEsup(\pi[>
    n_0]) = c + \AEsup(\pi[> n_0])$ by Lem.~\ref{lem:AE_prefix}, and since $\AEsup(\pi) \le t$,
  there must be some cycle $\mathcal{C}_i$ such that $\AE(\gamma_i)
  \le t - c$. We write $\gamma$ for such a $\gamma_i$, and we define $\varpi =
  \pi(n_0) \cdot \gamma^\omega$: it is a lasso-shaped play which also satisfies the objective $\LBound(\initCredit) \cap
  \AvgEnergyLevel(t)$.

  We will now modify the play $\varpi$, so that the energy does
  not grow too much along it. We write $\varpi_{\text{exp}}$ for the expanded version of
  $\varpi$: it is of the form $\varpi_{\text{exp}}[\le n_0] \cdot
  \bigl(\varpi_{\text{exp}}[n_0+1,n_0+p]\bigr)^\omega$, where
  $\varpi_{\text{exp}}[n_0+1,n_0+p]$ projects onto $\gamma$ when the energy information
  is removed (note that the last configurations of $\varpi_{\text{exp}}[\le n_0]$
  and of $\varpi_{\text{exp}}[n_0+1,n_0+p]$ are $(s,c)$). We will do two things:
  $(i)$ first we will work on the cycle $\gamma$; and $(ii)$ then we
  will work on the prefix $\varpi[\le n_0]$, to build a witness with a
  fixed upper bound on the energy. For the rest of the proof, we assume
  that $\varpi_{\text{exp}} = (s_0,c_0) (s_1,c_1) \dots$ so that $(s_{n},c_{n}) =
  (s,c)$ for every $n = n_0 + b \cdot p$ for some integer~$b$.

First consider point $(i)$. Let us notice that $c \le t$, otherwise the average-energy
    along $\varpi$ could not be at most $t$ (remember
    that the cost along the expanded version of $\gamma$ starting at
    $(s,c)$ is always larger than or equal to $c$ by construction).  We pick the first
    maximal subpath $\varpi_{\text{exp}}[k,k+h]$ of $\varpi_{\text{exp}}$ with $[k,k+h]
    \subseteq (n_0,n_0+p)$, such that $c_{k+\ell} >t$ for every $0 \le
    \ell \le h$. By maximality of $\varpi_{\text{exp}}[k,k+h]$, it is the case
    that $c_{k-1} \le t$ and $c_{k+h+1} \le t$. We infer that $t<c_{k}
    \le t+W$ and $t<c_{k+h} \le t+W$, where $W$ is the maximal
    absolute weight in the game $G$.  We
    apply~Lem.~\ref{lemma:bound} to the path
    $\varpi_{\text{exp}}[k,k+h]$ with $g = t$, and we get that we can build an
    expanded path $\varpi_{\text{exp}}^{(k)}$ which is shorter than
    $\varpi_{\text{exp}}[k,k+h]$ and such that:
    \begin{itemize}
    \item at all positions of $\varpi_{\text{exp}}^{(k)}$, the energy is in the interval $[t,t+N^2+N^3]$, where $N = W \cdot (|S|+2)$;
    \item there is an injective increasing mapping $\iota \colon
      [0,|\varpi_{\text{exp}}^{(k)}|] \to [k,k+h]$ such that for every index $1 \le i
      \le |\varpi_{\text{exp}}^{(k)}|$, the state of $\varpi_{\text{exp}}^{(k)}[=i]$
      coincides with that of $\varpi_{\text{exp}}[=\iota(i)]$ and the energy at
      position $i$ of $\varpi_{\text{exp}}^{(k)}$ is smaller than or equal to
      $c_{\iota(i)}$.
    \end{itemize}
    In particular, we have a new witness for the objective
    $\LBound(\initCredit) \cap \AvgEnergyLevel(t)$, which is the play
    $\varpi[< n_0] \cdot \bigl(\varpi[n_0,k-1] \cdot \varpi^{(k)}
    \cdot \varpi[k+h+1,n_0+|\gamma|-1]\bigr)^\omega$, where
    $\varpi^{(k)}$ is the projection of $\varpi_{\text{exp}}^{(k)}$ over the
    states of the game $G$. We iterate this transformation over all relevant segments
    of $\gamma$ (this will happen only a finite number of times), and
    we end up with a new lasso-play $\varpi' = \varpi[\le n_0] \cdot
    (\gamma')^\omega$ such that:
    \begin{itemize}
    \item $\varpi'$ satisfies the objective $\LBound(\initCredit) \cap
      \AvgEnergyLevel(t)$;
    \item for every $1\le \ell \le |\gamma'|$, $-\initCredit \le
      \EL(\varpi'(n_0+\ell)) \le t+N^2+N^3$.
    \end{itemize}

Now, consider point $(ii)$. It remains to work on the prefix $\varpi[\le n_0]$ (which is
    still a prefix of $\varpi'$). We apply Lem.~\ref{lemma:bound} to
    the prefix $\varpi[\le n_0]$ with $g=0$, and we get an appropriately
    bounded witness.
    
Summing up, our construction proves that if there exists a winning play for $\LBound(\initCredit \coloneqq 0) \cap \AvgEnergyLevel(t)$ in the one-player game $G$, then there exists one for $\LUBound(U, \initCredit \coloneqq 0) \cap \AvgEnergyLevel(t)$, with $U \coloneqq t + N^2 + N^3$. Since the converse implication is obvious (as the second objective is strictly stronger), this concludes the proof of the reduction to an $\AELU$ game.
\end{proof}

\paragraph{Complexity.} Plugging this bound $U$ in the \PSPACE algorithm for one-player $\AELU$ games (Thm.~\ref{thm:aelu_reduc}) implies \PSPACE-membership for one-player $\AEL$ games also. In terms of time complexity, we saw that this problem can thus be solved in pseudo-polynomial time. We prove that no truly-polynomial-time algorithm can be obtained unless $\PTIME = \NP$ as the one-player $\AEL$ problem is $\NP$-hard. We show it by reduction from the subset-sum problem~\cite{garey_FNY1979}: given a finite set of naturals $A = \{a_{1}, \ldots{}, a_{n}\}$ and a target natural $v$, decide if there exists a subset $B \subseteq A$ such that $\sum_{a_{i} \in B} a_{i} = v$. The reduction is sketched in Fig.~\ref{fig:subset-sum_to_ael}: a play corresponds to a choice of subset. In order to keep a positive energy level, $\playerOne$ has to pick a subset that achieves a sum \textit{at least} equal to $v$, but in order to satisfy the $\AE$ threshold, this sum must be \textit{at most} $v$: hence $\playerOne$ must be able to pick a subset whose sum is \textit{exactly} the target $v$.

\vspace{-4mm}
\begin{figure}[thb]
        \centering
\scalebox{0.9}{\begin{tikzpicture}[->,>=stealth',shorten >=1pt,auto,node
    distance=2.5cm,bend angle=45, scale=0.95, font=\normalsize]
    \tikzstyle{p1}=[draw,circle,text centered,minimum size=7mm,text width=8mm]
    \tikzstyle{p2}=[draw,rectangle,text centered,minimum size=7mm,text width=4mm]
    \node[p1]  (s1)  at (0, 0) {$s_{1}$};
    \node[p1]  (a1) at (2, 1.2) {$a_{1}$};
    \node[p1]  (na1) at (2, -1.2) {$\neg a_{1}$};
    \node[p1]  (s2)  at (4, 0) {$s_{2}$};
    \node[p1]  (a2) at (6, 1.2) {$a_{2}$};
    \node[p1]  (na2) at (6, -1.2) {$\neg a_{2}$};
    \node[]  (s3)  at (8, 0) {};
    \node[p1]  (sn)  at (10, 0) {$s_{n}$};
    \node[p1]  (an) at (12, 1.2) {$a_{n}$};
    \node[p1]  (nan) at (12, -1.2) {$\neg a_{n}$};
    \node[p1]  (end)  at (14, 0) {$\textsf{end}$};
    
    \coordinate[shift={(-8mm,0mm)}] (init) at (0.west);
    \path
    (end) edge [loop left, out=40, in=320,looseness=2, distance=2cm] node [right] {$0$} (end)
    (init) edge (s1)
    (s1) edge node[above] {$a_{1}$} (a1)
    (s1) edge node[below] {$0$} (na1)
    (a1) edge node[above] {$0$} (s2)
    (na1) edge node[below] {$0$} (s2)
    (s2) edge node[above] {$a_{2}$} (a2)
    (s2) edge node[below] {$0$} (na2)
    (a2) edge node[above] {$0$} (s3)
    (na2) edge node[below] {$0$} (s3)
    (sn) edge node[above] {$a_{n}$} (an)
    (sn) edge node[below] {$0$} (nan)
    (an) edge node[above] {$-v$} (end)
    (nan) edge node[below] {$-v$} (end)
    (s3) edge[loosely dotted,-,thick] (sn);
      \end{tikzpicture}}
	\caption{Reduction from the subset-sum problem for target $v \in \mathbb{N}$ to a one-player $\AEL$ problem for average-energy threshold $t \coloneqq v$.}
	\label{fig:subset-sum_to_ael}
\end{figure}
\vspace{-4mm}

\begin{theorem}
\label{thm:ael_one_complexity}
The $\AEL$ problem is in $\PSPACE$ and at least $\NP$-hard for one-player games.
\end{theorem}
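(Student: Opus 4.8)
The plan is to treat the upper bound and the lower bound separately, and most of the work is already in hand. For \PSPACE-membership I would simply chain Lemma~\ref{lem:ael_to_aelu} with Theorem~\ref{thm:aelu_reduc}: the former turns a one-player $\AEL$ instance $(\Game,\initState,t)$ into the one-player $\AELU$ instance $(\Game,\initState,U,t)$ with $U \coloneqq t + N^2 + N^3$ and $N = W\cdot(|S|+2)$. The key point to check is that, although $U$ may be exponentially large as an integer, its binary encoding has length polynomial in $|S|$, in $\log W$ and in the size of~$t$, so this is genuinely a polynomial-time reduction. It then suffices to run the on-the-fly lasso-construction algorithm for one-player $\AELU$ games from Theorem~\ref{thm:aelu_reduc}; its space usage is polynomial in the size of its input, hence polynomial in the size of the original $\AEL$ instance, giving \PSPACE-membership (and, as a byproduct, a pseudo-polynomial-time algorithm, since the $\AELU$ solver is pseudo-polynomial and $U$ is itself only pseudo-polynomial in the $\AEL$ input).

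For \NP-hardness I would give a direct polynomial-time reduction from the subset-sum problem, realised by the one-player game of Figure~\ref{fig:subset-sum_to_ael}. Given $A=\{a_1,\dots,a_n\}$ and target~$v$, the play runs through gadgets $s_1,\dots,s_n$, and at each~$s_i$ the (sole) player $\pI$ chooses either to commit $a_i$ to a subset~$B$ (an outgoing edge of weight~$a_i$) or to discard it (an outgoing edge of weight~$0$); a final edge of weight~$-v$ then leads to an absorbing state $\textsf{end}$ carrying a self-loop, and the threshold is the one fixed in the figure. Plays are thus in bijection with subsets $B\subseteq A$, and each play is lasso-shaped, so by Lemma~\ref{lem:AE_prefix} its average-energy equals the energy level that is reached and then held forever at $\textsf{end}$, namely $\bigl(\sum_{a_i\in B}a_i\bigr)-v$. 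The lower-bound objective $\LBound(\initCredit\coloneqq0)$ then forces $\sum_{a_i\in B}a_i\ge v$ (otherwise the energy becomes negative right after the $-v$ edge), while the average-energy objective forces the matching inequality $\sum_{a_i\in B}a_i\le v$; hence $\pI$ has a winning strategy precisely when some subset of~$A$ sums to~$v$. As the game has $O(n)$ states and weights polynomial in the input, this is a valid polynomial reduction, so the one-player $\AEL$ problem is \NP-hard.

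I do not expect a genuine obstacle in assembling this theorem: the hard technical content sits in Lemma~\ref{lem:ael_to_aelu} (which imports the counter-bounding argument of Lafourcade~et~al.) and in the algorithm behind Theorem~\ref{thm:aelu_reduc}, both already established, while the subset-sum gadget is routine. The two points deserving a moment of care are (i)~the encoding-size bookkeeping that makes the reduction to $\AELU$ \PSPACE-preserving rather than merely \EXPTIME, and (ii)~the use of Lemma~\ref{lem:AE_prefix} to identify the average-energy of the witness lasso with its stabilised energy level, so that the lower- and upper-side constraints really do squeeze $\sum_{a_i\in B}a_i$ down to exactly~$v$. Sharpening the one-player complexity between \NP-hardness and \PSPACE, and settling the status of the two-player $\AEL$ problem, remain open.
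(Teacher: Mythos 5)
Your proposal is correct and follows essentially the same route as the paper: \PSPACE-membership by composing Lem.~\ref{lem:ael_to_aelu} with the one-player case of Thm.~\ref{thm:aelu_reduc} (including the key bookkeeping that the binary encoding of $U=t+N^2+N^3$ is polynomial in the input), and \NP-hardness from subset-sum via the gadget of Fig.~\ref{fig:subset-sum_to_ael}, with the lower-bound and average-energy constraints squeezing $\sum_{a_i\in B}a_i$ to exactly~$v$. One detail to fix: since, by Lem.~\ref{lem:AE_prefix}, the average-energy of the witness lasso is $\bigl(\sum_{a_i\in B}a_i\bigr)-v$, the threshold in the hardness gadget must be taken as $t\coloneqq 0$ (not the $t\coloneqq v$ announced in the figure caption, to which your text defers) so that the average-energy objective forces $\sum_{a_i\in B}a_i\le v$ rather than $\le 2v$ --- your own arithmetic already implicitly uses $t=0$, so just state it explicitly.
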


\begin{proof}
First, consider the claim of $\PSPACE$-membership. Let $G= (S_{1}, S_{2} = \emptyset, E, w)$ be a game with initial state $\initState$. Consider the $\AEL$ problem for a given average-energy threshold $t \in \mathbb{Q}$. By Lem.~\ref{lem:ael_to_aelu}, this problem is reducible to the $\AELU$ problem with upper bound $U \coloneqq t + N^{2} + N^{3}$, with $N = W \cdot (|S|+2)$. Hence, $U$ is of order $\mathcal{O}(t + W^3 \cdot \vert S\vert^{3})$, and its encoding is polynomial in the encoding of the original $\AEL$ problem (including thresholds and weights, not only in the number of states of the original game!). Following the complexity analysis presented in Thm.~\ref{thm:aelu_reduc}, we thus conclude that the one-player $\AEL$ problem is indeed in $\PSPACE$. In terms of time, by using the $\MP$ reduction and the pseudo-polynomial algorithm, we have an algorithm for the one-player $\AEL$ problem that takes time of order
\begin{equation*}
\mathcal{O}\left( \big( (U + 1) \cdot \vert \states \vert + 1\big) ^{3} \cdot \max \{U,\, \lceil t\rceil + 1\}\right) =  \mathcal{O}\left(\left( t + W^3 \cdot |S|^{3}\right)^{4} \cdot |S|^{3}\right),
\end{equation*}
which is still pseudo-polynomial in the size of the original $\AEL$ problem (i.e., polynomial in the number of states and in the values of the largest absolute weight and of the average-energy threshold).

Second, we prove that the one-player $\AEL$ problem is $\NP$-hard. Consider the subset-sum problem for the set $A = \{a_1, \ldots{}, a_{n}\}$ such that for all $i \in \{1, \ldots{}, n\}$, $a_{i} \in \mathbb{N}$, and target $v \in \mathbb{N}$. Deciding if there exists a subset $B \subseteq A$ such that $\sum_{a_{i} \in B} a_{i} = v$ is well-known to be $\NP$-complete~\cite{garey_FNY1979}. We reduce this problem to an $\AEL$ problem over the game $G$ depicted in Fig.~\ref{fig:subset-sum_to_ael}. Observe that this game has polynomially as many states as the size of $A$, and that its largest absolute weight is equal to the maximum between the largest element of~$A$ and the target $v$. It is clear that there is a bijection between choices of subsets of $A$ and plays in $G$. Let us fix threshold $t \coloneqq v$ for the average-energy. Recall that Lem.~\ref{lem:AE_prefix} implies that the average-energy of any play is exactly its energy level at the first visit of \textsf{end} (because afterwards the zero self-loop is repeated forever). Hence, we have that
\begin{enumerate}
\item a play $\play$ in $G$ is winning for $\LBound(\initCredit \coloneqq 0)$ if and only if the corresponding subset $B$ is such that $\sum_{a_{i} \in B} a_{i} \geq v$;
\item a play $\play$ in $G$ is winning for $\AvgEnergyLevel(t \coloneqq v)$ if and only if the corresponding subset $B$ is such that $\sum_{a_{i} \in B} a_{i} \leq v$.
\end{enumerate}
Therefore, $\playerOne$ has a winning strategy for the $\AEL$ objective $\LBound(\initCredit \coloneqq 0) \cap \AvgEnergyLevel(t \coloneqq v)$ in $G$ if and only if there exists a subset $B$ for which the sum of elements is exactly equal to the target $v$.

This proves the reduction from the subset-sum problem and the $\NP$-hardness result. Observe two things. First, the hardness proof relies on having set elements and a target value that are not polynomial in the size of the input set $A$. Indeed, the subset-sum problem is solvable with a pseudo-polynomial algorithm, hence in~$\PTIME$ for polynomial values. Second, our reduction also holds for the $\AEinf$ variant of the average-energy.
\end{proof}

\paragraph{Memory requirements.} Recall that for $\playerTwo$, the situation is simpler and memoryless strategies suffice. By the reduction to $\AELU$, we know that pseudo-polynomial memory suffices for $\playerOne$. This bound is tight as witnessed by the family of games already presented in Fig.~\ref{fig:AELU_memory_p1}. To ensure the lower bound on energy, $\playerOne$ has to play edge $(s, s')$ at least $U$ times before taking the $(s, s)$ self-loop. But to minimize the average-energy, edge $(s, s')$ should never be played more than necessary. The optimal strategy is the same as for the $\AELU$ problem: playing $(s, s')$ exactly $U$ times, then $(s, s)$ once, then repeating, forever. As shown in Thm.~\ref{thm:aelu_memory}, this strategy requires pseudo-polynomial memory.

\begin{theorem}
\label{thm:ael_one_memory}
Pseudo-polynomial-memory strategies are both sufficient and necessary to win for $\playerOne$ in one-player $\AEL$ games. Memoryless strategies suffice for $\playerTwo$ in such games.
\end{theorem}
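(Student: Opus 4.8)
The plan is to treat the statement as three claims and dispatch each using the machinery already developed: (i)~pseudo-polynomial memory is \emph{sufficient} for $\playerOne$; (ii)~it is in general \emph{necessary} for $\playerOne$; and (iii)~memoryless strategies suffice for $\playerTwo$, i.e., in the one-player $\AEL$ game where $S_{1} = \emptyset$.

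For~(i) I would chain the reductions. By Lem.~\ref{lem:ael_to_aelu}, the one-player $\AEL$ instance on $G$ with threshold $t$ is equivalent to an $\AELU$ instance on the \emph{same} game $G$ with upper bound $U \coloneqq t + N^{2} + N^{3}$ where $N = W \cdot (\vert S \vert + 2)$; this $U$ is pseudo-polynomial in the $\AEL$ input (polynomial in $\vert S\vert$, $W$ and $t$). Since $\LUBound(U, \initCredit \coloneqq 0) \cap \AvgEnergyLevel(t) \subseteq \LBound(\initCredit \coloneqq 0) \cap \AvgEnergyLevel(t)$, any strategy winning the derived $\AELU$ game also wins the original $\AEL$ game. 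By Thm.~\ref{thm:aelu_memory} the $\AELU$ game admits a winning strategy with memory polynomial in $\vert S\vert$ and $U$ --- concretely, one that is memoryless over the configurations $(s,c)$, $c \in \{0,\dots,U\}$, of the expanded game of Lem.~\ref{lem:aelu_to_ae}, hence uses memory $\vert M\vert \leq U+1$ when read as a strategy on $G$. Pulling this strategy back to the $\AEL$ game yields the claimed pseudo-polynomial-memory winning strategy for $\playerOne$.

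For~(ii) I would reuse the one-player family of Fig.~\ref{fig:AELU_memory_p1}, parametrized by $U \in \mathbb{N}$, and fix the $\AE$ threshold $t$ to be the \emph{optimal} average-energy value of that game, i.e., the value of the play that takes edge $(s,s')$ exactly $U$ times, then the self-loop $(s,s)$ once, and repeats. The key is that this play is the \emph{unique} winning one: in this arena energy only increases by $1$ (through $(s,s')(s',s)$) or decreases by $U$ (through the self-loop), so every $\LBound$-feasible play is a sawtooth that must climb to energy at least $U$ before each use of the self-loop, and any play that never uses the self-loop, or performs a superfluous round-trip, or overshoots $U$, has infinite or strictly larger average-energy, hence fails the tight threshold. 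A Moore machine realizing the surviving play must distinguish the $U+1$ ``phases'' $0,1,\dots,U$ counting the $+1$ steps taken since the last self-loop: by a Myhill--Nerode style argument two prefixes ending in $s$ in distinct phases must carry distinct memory values, since the prescribed action and the subsequent memory evolution differ between a phase $<U$ and phase $U$. Hence $\vert M\vert \geq U+1$, and as $U$ is exponential in the binary encoding of the game this shows pseudo-polynomial memory is necessary (and re-derives the $\playerOne$ lower bound of Thm.~\ref{thm:aelu_memory}).

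For~(iii), take $S_{1} = \emptyset$: then $\playerTwo$ wins iff the single outcome he picks lies in $\neg\LBound(\initCredit \coloneqq 0) \cup \neg\AvgEnergyLevel(t)$, a disjunction, so he may commit to one disjunct in advance. If some play of $G$ has a prefix of negative energy level, let $\playerTwo$ follow a \emph{shortest} such prefix (which is a simple path) and then any reachable cycle: this memoryless lasso strategy violates $\LBound(\initCredit \coloneqq 0)$. Otherwise every play keeps the energy non-negative and $\playerTwo$ wins iff he can force $\AEsup > t$, which by memoryless determinacy of $\AEG$ games (Thm.~\ref{thm:ae_two_memoryless}) the maximizer $\playerTwo$ can do memorylessly whenever it is possible at all; thus memoryless strategies suffice for $\playerTwo$. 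The main obstacle in the whole proof is the uniqueness/optimality analysis inside~(ii): one must check carefully that every eventually-periodic outcome of a strategy with fewer than $U+1$ memory states in this family is either $\LBound$-infeasible (energy escapes to $\pm\infty$ along the repeated cycle) or has average-energy strictly above the tight threshold, and then convert this into the clean bound $\vert M\vert \geq U+1$; the remaining parts are routine given Lem.~\ref{lem:ael_to_aelu}, Thm.~\ref{thm:aelu_memory} and Thm.~\ref{thm:ae_two_memoryless}.
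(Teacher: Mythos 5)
Your proposal is correct and follows essentially the same route as the paper: sufficiency for $\playerOne$ via Lem.~\ref{lem:ael_to_aelu} combined with Thm.~\ref{thm:aelu_memory}, necessity via the family of Fig.~\ref{fig:AELU_memory_p1} with the threshold set to the optimal average-energy (your Myhill--Nerode argument just makes explicit what the paper leaves informal), and the $\playerTwo$ case by committing to one disjunct and invoking memoryless determinacy of $\EGL$ and $\AEG$ games. The only quibble is your parenthetical claim that a shortest prefix reaching negative energy is a simple path (false in general, e.g.\ when the drop is only achieved by traversing a negative cycle), but the conclusion stands since a memoryless strategy pumping that reachable negative cycle violates $\LBound$ anyway, which is exactly the memoryless determinacy of $\EGL$ that the paper cites.
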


\subsection{Two-player games}

For the two-player $\AEL$ problem, we only provide partial answers, as open questions remain. We first discuss decidability: we present an incremental algorithm that is correct but incomplete (Lem.~\ref{lem:ael_semi}) and we draw the outline of a potential approach to obtain completeness hence decidability. Then, we prove that the two-player $\AEL$ problem is at least $\EXPTIME$-hard (Lem.~\ref{lem:ael_exp_hard}). Finally, we show that, in contrast to the one-player case, $\playerTwo$ also requires memory in two-player $\AEL$ games (Lem.~\ref{lem:ael_memory}).

\paragraph{Decidability.} Assume that there exists some $U \in \mathbb{N}$ such that $\playerOne$ has a winning strategy for the $\AELU$ problem with upper bound $U$ and average-energy threshold $t$. Then, this strategy is trivially winning for the $\AEL$ problem as well. This observation leads to an incremental algorithm that is correct (no false positives) but incomplete (it is not guaranteed to stop).

\begin{lemma}
\label{lem:ael_semi}
There is an algorithm that takes as input an $\AEL$ problem and iteratively solves corresponding $\AELU$ problems for incremental values of $U \in \mathbb{N}$. If a winning strategy is found for some $U \in \mathbb{N}$, then it is also winning for the original $\AEL$ problem. If no strategy is found up to value $U \in \mathbb{N}$, then no strategy of~$\playerOne$ can simultaneously win the $\AEL$ problem and prevent the energy from exceeding~$U$ at all times.
\end{lemma}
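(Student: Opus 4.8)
The plan is first to spell out the incremental procedure precisely, and then to verify the two soundness statements; both reduce to a trivial inclusion between objectives, so the argument is short. Concretely, the algorithm enumerates $U = 0, 1, 2, \dots$ and, for each $U$, runs the decision procedure for the $\AELU$ problem on the input game with initial state $\initState$, upper bound $U$, and the given threshold $t$; this is legitimate because the $\AELU$ problem is decidable (indeed $\EXPTIME$-complete) by Thm.~\ref{thm:aelu_reduc}. As soon as some value $U$ yields a positive answer, the algorithm halts and returns \textsf{Yes} together with a witness strategy; otherwise it keeps incrementing $U$.

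For soundness of a positive answer, I would observe that $\LUBound(U, \initCredit \coloneqq 0) \subseteq \LBound(\initCredit \coloneqq 0)$ --- constraining the running energy to $[0,U]$ is in particular keeping it non-negative --- so a strategy $\St_1$ winning for $\LUBound(U, \initCredit \coloneqq 0) \cap \AvgEnergyLevel(t)$ is also winning for $\LBound(\initCredit \coloneqq 0) \cap \AvgEnergyLevel(t)$, i.e., for the original $\AEL$ objective. Hence the procedure never produces a false positive, and the witness strategy extracted from the $\AELU$ reduction (Lem.~\ref{lem:aelu_to_ae}) transfers verbatim to the $\AEL$ problem.

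For the ``up to $U$'' guarantee I would argue by contraposition: suppose $\playerOne$ has a strategy $\St_1$ that wins the $\AEL$ problem and additionally ensures $\initCredit + \EL(\play(n)) \le U$ at every position $n$ of every consistent outcome $\play$. Then along any such outcome the energy stays in $[0,U]$ --- the lower bound because $\St_1$ is winning for $\LBound(\initCredit \coloneqq 0)$, the upper bound by the extra assumption --- so $\St_1$ is in fact winning for $\LUBound(U, \initCredit \coloneqq 0) \cap \AvgEnergyLevel(t)$, which makes the $\AELU$ problem for bound $U$ positive. Since the algorithm tests exactly this bound among $0, \dots, U$, it would have halted, contradicting the hypothesis that no strategy was found up to $U$. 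This establishes the claimed property.

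Finally, I expect no real obstacle inside this proof: everything rests on the decidability of $\AELU$ (Thm.~\ref{thm:aelu_reduc}) and on the objective inclusion above. The genuinely hard point --- which the lemma carefully does \emph{not} claim --- is \emph{termination} of the procedure, i.e., producing an a priori bound on the smallest $U$ for which $\playerOne$ wins the $\AELU$ relaxation whenever he wins $\AEL$ at all. For one-player games this is exactly what Lem.~\ref{lem:ael_to_aelu} supplies, but in two-player games $\playerTwo$ can seemingly force the energy arbitrarily high, so no uniform bound follows from the present techniques; this is precisely why the algorithm is only correct but incomplete.
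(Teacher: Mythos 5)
Your proposal is correct and matches the paper's argument, which is in fact even terser: the paper justifies the lemma solely by the observation (in the paragraph preceding its statement) that a strategy winning $\AELU$ for some $U$ is trivially winning for $\AEL$, leaving the objective inclusion and the contrapositive for the ``up to $U$'' clause implicit. Your remarks on soundness via $\LUBound(U, \initCredit \coloneqq 0) \subseteq \LBound(\initCredit \coloneqq 0)$ and on the lack of a termination guarantee are exactly the intended content.
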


While an incomplete algorithm clearly seems limiting from a theoretical standpoint, it~is worth noting that in practice, such approaches are common and often necessary restrictions, even for problems where a complete algorithm is known to exist. For example, the existence of an initial energy level sufficient to win in multi-dimensional energy games can be decided~\cite{CRR14} but practical implementations resort to an incremental scheme that is in practice incomplete because the theoretical bound granting completeness is too large to be tackled efficiently by software synthesis tools~\cite{BBFR13}. In our case, we have already seen that if such a bound exists for the two-player $\AEL$ problem, it needs to be at least exponential in the encoding of problem (cf.~one-player $\AEL$ games). Hence it seems likely that a prohibitive bound would be necessary, rendering the algorithm of Lem.~\ref{lem:ael_semi} more appealing in practice.

Nevertheless, we \textit{conjecture} that the $\AEL$ problem is decidable for two-player games and that, similarly to the one-player case, an upper bound on the energy can be obtained. Unfortunately, this claim is much more challenging to prove for two-player games. Clearly, the approach of Lem.~\ref{lem:ael_to_aelu} has to be generalized: while in one-player games we could pick a witness winning play and transform it, we now have to deal with \textit{tree unfoldings}\,---\,describing sets of plays\,---\,because of the uncontrollable choices made by $\playerTwo$.

A potentially promising approach is to define a notion close to the \textit{self-covering trees} used in~\cite{CRR14} for energy games. Roughly, take any winning strategy of $\playerOne$ in a two-player $\AEL$ game. Without further assumption, this strategy could be infinite-memory. It can be represented by its corresponding infinite tree unfolding where in nodes of $\playerOne$, a unique child is given by the strategy, and in nodes of $\playerTwo$, all possible successors yield different branches. Every rooted branch of this tree is infinite and describes a winning play. Then, we would like to achieve the following steps.
\begin{enumerate}
\item\label{item:cut} Prove that all branches of this unfolding can be cut in such a way that the resulting finite tree describes a \textit{finite-memory} strategy that is still winning for the $\AEL$ objective.
\item Reduce the height of this finite tree by compressing parts of the branches: deleting embedded zero cycles seems to be a good candidate for the transformation to apply.
\item\label{item:bound} Derive an \textit{upper bound} on the height of the compressed tree and, consequently, on the maximal energy level reached along any play consistent with the corresponding strategy.
\item Use this upper bound to reduce the $\AEL$ problem to an $\AELU$ problem.
\end{enumerate}
Sadly, some challenges appear on the technical side when trying to implement this approach, mainly for items~\ref{item:cut} and~\ref{item:bound}. Intuitively, the additional difficulty (when compared to the approach developed in~\cite{CRR14} and similar works) arises from the fact that describing what is a good cycle pattern for the $\AEL$ objective is much more intricate than it is for a simple $\EGL$ objective (in which case we simply look for zero cycles). This makes the precise definition of an appropriate transformation of branches, and the resulting tree height analysis, more tedious to achieve.

We also mention that the $\AEL$ problem could be reduced, following a construction similar to the one given in Sect.~\ref{subsec:AELU_algo}, to a mean-payoff threshold problem over an infinite arena, where states of the expanded graph are arranged respectively to their energy level, ranging from zero to infinity, and where weights would also take values inside $\mathbb{N} \cup \{\infty\}$ (as they reflect the possible energy levels). To the best of our knowledge, it is not known if mean-payoff games over such particular structures are decidable. If so, an algorithm would have to fully exploit the peculiar form of those arenas, as it is for example known that general models such as pushdown games are undecidable for the mean-payoff~\cite{CV12}.

Finally, one could envision to fill the gap between one-player and two-player $\AEL$ games by using a general result similar to~\cite[Cor.~7]{GZ05}. Recall that we used it to derive memoryless determinacy in the two-player case from memoryless determinacy of both one-player versions ($S_{1} = \emptyset$ and $S_{2} = \emptyset$). However, we here have that in one-player games, $\playerOne$ requires pseudo-polynomial memory. Therefore, it is necessary to extend the result of Gimbert and Zielonka to finite-memory strategies: that is, to show that if we have a bound on memory valid in both one-player versions of a game, then this bound, or a derived one, is also valid in the two-player version. This is not known to be the case in general, and establishing it for a sufficiently general class of games seems challenging.

\paragraph{Complexity lower bound.} We now prove that the two-player $\AEL$ problem would require at least exponential time to solve. Our proof is by reduction from \textit{countdown games}. A~countdown game
$\mathcal{C}$ is a weighted graph $(\mathcal{V}, \mathcal{E})$, where $\mathcal{V}$ is the finite set of
states, and $\mathcal{E} \subseteq \mathcal{V} \times \mathbb{N} \setminus \{0\} \times \mathcal{V}$ is the edge relation. Configurations are of the form $(v, c)$, $v \in \mathcal{V}$, $c \in \mathbb{N}$. The game starts in an initial configuration $(v_{\text{init}}, c_0)$ and transitions from a configuration $(s, c)$ are performed as follows. First, $\pI$ chooses a duration $d$, $0 < d \leq c$ such that there exists $e = (v, d, v') \in \mathcal{E}$ for some $v' \in \mathcal{V}$. Second, $\pII$ chooses a state $v' \in \mathcal{V}$ such that $e = (v, d, v') \in \mathcal{E}$. Then the game advances to $(v', c-d)$. Terminal configurations are reached whenever no legitimate move is available. If such a configuration is of the form $(v, 0)$, $\pI$ wins the play, otherwise $\pII$ wins. Deciding the winner given an initial configuration $(v_{\text{init}}, c_0)$ is $\EXPTIME$-complete~\cite{JSL08}.

Our reduction is depicted in Fig.~\ref{fig:countdown_to_ael}. The $\EL$ is initialized to $c_0$, then it is decreasing along any play. Consider the $\AEL$ objective for $\AE$ threshold $t \coloneqq 0$. To ensure that the energy always stays non-negative, $\playerOne$ has to switch to \textsf{stop} while the $\EL$ is no less than zero. In addition, to ensure an $\AE$ no more than $t = 0$, $\playerOne$ has to obtain an $\EL$ at most equal to zero before switching to \textsf{stop} (as the $\AE$ will be equal to this $\EL$ thanks to Lem.~\ref{lem:AE_prefix} and the zero self-loop on \textsf{stop}). Hence, $\playerOne$ wins the $\AEL$ objective only if he can ensure a total sum of chosen durations that is \textit{exactly} equal to $c_{0}$, i.e., if he can reach a winning terminal configuration for the countdown game. The converse also holds.

\begin{figure}[htb]
        \centering
\scalebox{0.96}{\begin{tikzpicture}[->,>=stealth',shorten >=1pt,auto,node
    distance=2.5cm,bend angle=45,font=\normalsize,scale=.7,inner sep=.5mm]
    \everymath{\scriptstyle}
    \tikzstyle{p1}=[draw,circle,text centered,minimum size=8mm]
    \tikzstyle{p2}=[draw,rectangle,text centered,minimum size=7mm]
    \node[p1]  (start)  at (-1.5, 1.5) {\small\textsf{start}};
    \node[p1]  (1)  at (0, 0) {$v_{\text{init}}$};
    \node[p2]  (2a) at (3, 1.5) {$(v_{\text{init}}, d_1)$};
    \node[p2]  (2) at (4, 0) {$(v_{\text{init}}, d_2)$};
    \node[p2]  (2b) at (3, -1.5) {$(v_{\text{init}}, d_3)$};
    \node[p1]  (3)  at (8, 0) {$v''$};
    \node[p1]  (3a)  at (7, 1.5) {$v'$};
    \node[p1]  (3b)  at (7, -1.5) {$v'''$};
    \node[p1]  (4) at (-0.5, -2.5) {\small\textsf{stop}};
    \path[use as bounding box] (12,0);
    \coordinate[shift={(-5mm,0mm)}] (init) at (start.west);
    \path
    (init) edge (start)
    (start) edge node[left] {$c_{0}$} (1)
    (1) edge node [left] {$0$} (4)
    (1) edge node [above] {$-d_2$} (2)
    (1) edge node [above,xshift=-1mm] {$-d_1$} (2a)
    (1) edge node [below,xshift=-1mm] {$-d_3$} (2b)
    (2) edge node [above] {$0$} (3)
    (2) edge node [above,xshift=-1mm] {$0$} (3a)
    (2) edge node [below,xshift=-1mm] {$0$} (3b)
    (4) edge [loop left, out=220, in=140,looseness=2, distance=2cm] node [left] {$0$} (4)
    ;
	\draw[->,>=latex] (3) to[out=320, in=355] node [below] {$0$} (4);
	\draw[->,>=latex] (3) to node[above,xshift=-1mm] {$-d_4$} (11,1.5);
	\draw[->,>=latex] (3) to node[above] {$-d_5$} (12,0);
	\draw[->,>=latex] (3) to node[below,xshift=-1mm] {$-d_6$} (11,-1.5);
	\draw[dashed,-,>=latex] (12.3, 0) to (13.3,0);
	\draw[dashed,-,>=latex] (11.3, 1.5) to (12.3,1.5);
	\draw[dashed,-,>=latex] (11.3, -1.5) to (12.3,-1.5);
	\draw[dashed,-,>=latex] (4, 1.5) to (5,1.5);
	\draw[dashed,-,>=latex] (4, -1.5) to (5,-1.5);
	\draw[dashed,-,>=latex] (7.6, 1.5) to (8.6,1.5);
	\draw[dashed,-,>=latex] (7.6, -1.5) to (8.6,-1.5);
\end{tikzpicture}}
\vspace{-1mm}
	\caption{Reduction from a countdown game $\mathcal{C} = (\mathcal{V}, \mathcal{E})$ with initial configuration $(v_{\text{init}}, c_{0})$ to a two-player $\AEL$ problem for average-energy threshold $t \coloneqq 0$.}
	\label{fig:countdown_to_ael}
\end{figure}

\begin{lemma}
\label{lem:ael_exp_hard}
The $\AEL$ problem is $\EXPTIME$-hard for two-player games.
\end{lemma}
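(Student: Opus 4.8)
The plan is to reduce from \textit{countdown games}, which are $\EXPTIME$-complete~\cite{JSL08}, following the construction of Fig.~\ref{fig:countdown_to_ael}. Given a countdown game $\mathcal{C} = (\mathcal{V}, \mathcal{E})$ with initial configuration $(v_{\text{init}}, c_0)$, I build the two-player game $G$ whose states are a fresh state $\textsf{start}$, a fresh state $\textsf{stop}$, one $\playerOne$-state per $v \in \mathcal{V}$, and one $\playerTwo$-state $(v,d)$ per pair such that $(v,d,v') \in \mathcal{E}$ for some $v'$. The edges and weights are: $\textsf{start} \to v_{\text{init}}$ with weight $c_0$; from each $v \in \mathcal{V}$, an edge $v \to \textsf{stop}$ with weight $0$ and, for each available duration $d$, an edge $v \to (v,d)$ with weight $-d$; from each $(v,d)$, an edge $(v,d) \to v'$ with weight $0$ for every $v'$ with $(v,d,v') \in \mathcal{E}$; and a self-loop of weight $0$ on $\textsf{stop}$. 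The objective is the $\AEL$ objective $\LBound(\initCredit \coloneqq 0) \cap \AvgEnergyLevel(t \coloneqq 0)$ from initial state $\textsf{start}$. Since $c_0$ and the durations contribute only polynomially many bits, $G$ has size polynomial in the encoding of $\mathcal{C}$, so this is a polynomial-time reduction and it suffices to prove that $\playerOne$ wins the $\AEL$ game in $G$ iff $\pI$ wins $\mathcal{C}$ from $(v_{\text{init}}, c_0)$.

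The semantic anchor is that, after the initial edge, the energy level along any play equals $c_0$ minus the sum of the durations chosen so far; it strictly decreases at each visit to a $\playerTwo$-state and stays constant once $\textsf{stop}$ is entered. By Lem.~\ref{lem:AE_prefix} applied to the decomposition $\prefix \cdot \textsf{stop}^\omega$, together with $\AEsup(\textsf{stop}^\omega) = 0$, the average-energy of any play that eventually reaches $\textsf{stop}$ equals the energy level at its first visit of $\textsf{stop}$. Building on this, I would establish three facts about any play $\play$ consistent with an $\AEL$-winning strategy of $\playerOne$: (a) the energy stays non-negative, hence $\playerOne$ never picks a duration larger than the current energy, so every duration-move is a \textit{legitimate} countdown move; (b) $\play$ must reach $\textsf{stop}$, because an infinite play avoiding it uses infinitely many duration-moves and thus drives the energy below $0$, contradicting $\LBound$; and (c) the energy at the first visit of $\textsf{stop}$ is exactly $0$, being $\ge 0$ by $\LBound$ and $\le t = 0$ by $\AvgEnergyLevel(0)$ via Lem.~\ref{lem:AE_prefix}. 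Symmetrically, one checks that going to $\textsf{stop}$ is the only safe move once the energy is $0$, and that a countdown configuration $(v,c)$ with $c > 0$ and no legitimate move corresponds exactly to a losing position for $\playerOne$ in $G$ (stopping there loses by $\AvgEnergyLevel(0)$, any duration-move loses by $\LBound$).

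With these facts both directions follow. If $\pI$ wins $\mathcal{C}$, he copies his countdown strategy in $G$ (reading the current configuration off the history and the energy level) and switches to $\textsf{stop}$ as soon as the energy reaches $0$ at a $\mathcal{V}$-state; since that strategy forces every outcome to reach a $(v,0)$ terminal within at most $c_0$ moves, the induced play keeps non-negative energy and has average-energy $0$, so $\playerOne$ wins the $\AEL$ game. Conversely, from an $\AEL$-winning strategy $\sigma$ one extracts a countdown strategy by having $\pI$ play, at configuration $(v,c)$ with $c > 0$, the duration prescribed by $\sigma$ on the corresponding history; fact (a) makes this legitimate, fact (c) rules out that $\sigma$ goes to $\textsf{stop}$ while $c > 0$, and facts (b) and (c) guarantee that every resulting countdown play reaches a $(v,0)$ terminal, so the extracted strategy wins $\mathcal{C}$. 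The $\EXPTIME$-hardness then transfers. The most delicate part will be the bookkeeping behind facts (a)--(c) and the converse extraction --- in particular, ruling out that a winning $\AEL$ strategy could profit from deviating from ``legitimate countdown'' behaviour (overshooting a duration, stalling in a cycle, or stopping early), and matching terminal countdown configurations with $c > 0$ precisely to losing positions of $\playerOne$ in $G$.
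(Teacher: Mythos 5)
Your proposal is correct and follows essentially the same route as the paper: the identical reduction from countdown games (same gadget, same weights, same threshold $t=0$), with the winning condition enforced by the observation that the average-energy of a play ending in $\textsf{stop}^\omega$ equals the energy level at the first visit of $\textsf{stop}$, forcing that level to be exactly $0$. Your facts (a)--(c) just spell out more explicitly the bookkeeping that the paper's proof states tersely.
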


\begin{proof}
Given a countdown game $\mathcal{C} = (\mathcal{V}, \mathcal{E})$ and an initial configuration $(v_{\text{init}}, c_0)$, we build a game $G = (S_1, S_2, E, w)$ with initial state $s_{\text{init}}$ such that $\playerOne$ has a winning strategy in $G$ for the $\AEL$ objective for threshold $t \coloneqq 0$ if and only if he has a winning strategy in $\mathcal{C}$ to reach a terminal configuration with counter value zero. The construction is depicted in Fig.~\ref{fig:countdown_to_ael}. Formally, the game $G$ is built as follows.
\begin{itemize}
\item $S_1 = \mathcal{V} \cup \{\textsf{start}, \textsf{stop}\}$.
\item $S_2 = \left\lbrace (v, d) \in \mathcal{V} \times \mathbb{N} \setminus \{0\} \mid \exists\, v' \in \mathcal{V},\, (v, d, v') \in \mathcal{E}\right\rbrace$.
\item $s_{\text{init}} = \textsf{start}$.
\item For each $(v, d, v') \in \mathcal{E}$, we have that $(v, (v, d)) \in E$ with $w(v, (v, d)) = -d$ and $((v, d), v') \in E$ with $w((v, d), v') = 0$.
\item Additionally, $(\textsf{start}, v_{\text{init}}) \in E$ with $w(\textsf{start}, v_{\text{init}}) = c_{0}$, $(\textsf{stop}, \textsf{stop}) \in E$ with $w(\textsf{stop}, \textsf{stop}) = 0$ and for all $v \in \mathcal{V}$, $(v, \textsf{stop}) \in E$ with $w(v, \textsf{stop}) = 0$.
\end{itemize}

First, consider the left-to-right direction of the claim. Assume $\playerOne$ has a winning strategy for the $\AEL$ objective in $G$. As noted before, such a strategy necessarily reaches the energy level zero then switches to \textsf{stop} directly. Hence, applying this strategy in the countdown game ensures that the sum of durations will be exactly equal to $c_0$ (recall that we start our $\AEL$ game by initializing the energy to $c_0$ then decrease it at every step by the duration chosen by $\playerOne$). Thus, this strategy is winning in the countdown game $\mathcal{C}$.

Second, consider the right-to-left direction. Assume that $\playerOne$ has a winning strategy in the countdown game $\mathcal{C}$. Playing this strategy in $G$ ensures to reach a state $v \in S_1$ with energy level exactly equal to zero. Thus a winning strategy for the $\AEL$ objective is to play the countdown strategy up to this point then to immediately take the edge $(v, \textsf{stop})$. Indeed, any consistent outcome will satisfy the lower bound on energy (as the energy will never go below zero), and it will have an average-energy equal to $t = 0$ (because the energy level when reaching \textsf{stop} will be zero).

This shows both directions of the claim and concludes our proof. Observe that this reduction is also true if we consider the $\AEinf$ variant of the average-energy.
\end{proof}

\paragraph{Memory requirements.} We close our study of two-player $\AEL$ games by discussing the memory needs. First note that we cannot provide upper bounds: if we had such bounds, we could derive a bound on the energy along any consistent play and reduce the $\AEL$ problem to an $\AELU$ one as discussed before, hence proving its decidability. Second, we already know by Thm.~\ref{thm:ael_one_memory} that pseudo-polynomial memory is necessary for $\playerOne$. Finally, we present a simple game (Fig.~\ref{fig:ael_two_memory}) where $\playerTwo$ needs to use memory in order to prevent $\playerOne$ from winning.

\begin{figure}[thb]
        \centering
\scalebox{0.9}{\begin{tikzpicture}[->,>=stealth',shorten >=1pt,auto,node
    distance=2.5cm,bend angle=45, scale=0.8, font=\normalsize]
    \tikzstyle{p1}=[draw,circle,text centered,minimum size=7mm,text width=4mm]
    \tikzstyle{p2}=[draw,rectangle,text centered,minimum size=7mm,text width=4mm]
    \node[p1]  (0) at (0, 0) {$s_{1}$};
    \node[p2]  (1)  at (3, 0) {$s_{2}$};
    \node[p1]  (2) at (6, 0) {$s_{3}$};
    
    \coordinate[shift={(-5mm,0mm)}] (init) at (0.west);
    \path
    (1) edge [loop left, out=130, in=50,looseness=2, distance=2cm] node [above] {$0$} (1)
    (2) edge [loop left, out=320, in=40,looseness=2, distance=2cm] node [right] {$-1$} (2)
    (init) edge (0)
    (0) edge node[above] {$1$} (1);
	\draw[->,>=latex] (1) to[out=00,in=180] node[above] {$-1$} (2);
	\draw[->,>=latex] (2) to[out=220,in=320] node[below] {$2$} (1);
      \end{tikzpicture}}
      \vspace{-2mm}
	\caption{Simple two-player $\AEL$ game witnessing the need for memory even for $\playerTwo$.}
	\label{fig:ael_two_memory}
\end{figure}

\begin{lemma}
\label{lem:ael_memory}
Pseudo-polynomial-memory strategies are necessary to win for $\playerOne$ in two-player $\AEL$ games. Memory is also required for $\playerTwo$ in such games.
\end{lemma}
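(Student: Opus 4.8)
The statement has two parts: (i) pseudo-polynomial memory is necessary for $\playerOne$, and (ii) memory is required for $\playerTwo$. Part~(i) is already essentially done: it is exactly the lower-bound content of Theorem~\ref{thm:ael_one_memory}, witnessed by the family of one-player games in Fig.~\ref{fig:AELU_memory_p1} (parametrized by $U$), where $\playerOne$ must play $(s,s')$ exactly $U$ times before the $(s,s)$ self-loop, and a Moore machine realizing this needs at least $U+1$ memory states. Since one-player $\AEL$ games are a special case of two-player $\AEL$ games (take $S_2 = \emptyset$), the same family witnesses the lower bound for $\playerOne$ in the two-player setting, so I would simply invoke Theorem~\ref{thm:ael_one_memory} and move on.

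\textbf{Part (ii): $\playerTwo$ needs memory.} I would analyze the game of Fig.~\ref{fig:ael_two_memory}: $\playerOne$ owns $s_1$ (edge $s_1 \xrightarrow{1} s_2$) and $s_3$ (self-loop $s_3 \xrightarrow{-1} s_3$ and edge $s_3 \xrightarrow{2} s_2$), while $\playerTwo$ owns $s_2$ (self-loop $s_2 \xrightarrow{0} s_2$ and edge $s_2 \xrightarrow{-1} s_3$), with the $\AEL$ objective for threshold $t \coloneqq 0$ and initial credit $0$. First I would check that $\playerTwo$ \emph{can} win with memory: the winning strategy is to first take the self-loop on $s_2$ some bounded number of times (raising/keeping the energy so the $\AE$ of the so-far prefix is driven up), then switch to $s_2 \xrightarrow{-1} s_3$, forcing $\playerOne$ into $s_3$; from $s_3$, if $\playerOne$ stays on the $-1$ self-loop the energy eventually drops below zero and $\playerOne$ loses $\LBound$, while if $\playerOne$ returns via $s_3 \xrightarrow{2} s_2$ the energy climbs, and by repeating this pumping $\playerTwo$ forces $\AEsup$ above $0$ (more precisely, forces the running energy level to be bounded away from $0$ from below infinitely often is not enough — I need the \emph{average} of energy levels to exceed $0$, so $\playerTwo$'s memory must count how long to loop on $s_2$ before each descent so that the cumulative average stays strictly positive). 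Then the crux: I would show that \emph{no memoryless} strategy of $\playerTwo$ wins. A memoryless strategy at $s_2$ either always self-loops (then the play $s_1 s_2^\omega$ has energy levels $0,1,1,1,\dots$, giving $\AEsup = 1 > 0$ — wait, that is winning for $\playerTwo$) — so I must be more careful: the actual memoryless options and their consequences against $\playerOne$'s best response need to be enumerated, and I would exhibit for each memoryless $\playerTwo$ strategy an explicit $\playerOne$ counter-strategy producing a play with $\AEsup \le 0$ and energy always $\ge 0$. The point will be that a memoryless $\playerTwo$ who commits to $s_2 \to s_3$ lets $\playerOne$ immediately reply $s_3 \to s_2$, and the resulting cycle's $\AE$, combined with the initial segment, can be made $\le 0$; whereas a memoryless $\playerTwo$ who commits to the $s_2$ self-loop gives an $\AE$ that — here I must double-check the arithmetic — either already satisfies or violates the threshold. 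If the self-loop case gives $\AE = 1$ (winning for $\playerTwo$), then the game as drawn would be \emph{won memorylessly} by $\playerTwo$, which contradicts the lemma, so I expect the intended reading is that $\playerTwo$ wants $\AE > 0$ to be \emph{unachievable by a memoryless strategy only when $\playerOne$ plays adversarially}, meaning $\playerOne$'s response matters and the correct claim is that against the memoryless self-loop, $\playerOne$ has no move to make (he is stuck at $s_1$ forced to $s_2$) so that case does win for $\playerTwo$ — hence I would instead argue the memoryless strategy that self-loops forever at $s_2$ is in fact \emph{not} a legal winning strategy because... — this tension is exactly the main obstacle.

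\textbf{Main obstacle.} The delicate point is pinning down precisely why \emph{every} memoryless $\playerTwo$ strategy fails, given that the naive "always self-loop at $s_2$" strategy superficially looks like it yields $\AEsup = 1 > 0$ and hence wins. I expect the resolution is that in the paper's convention $\playerOne$ is the \emph{minimizer} and wins $\AvgEnergyLevel(t)$ iff $\AEsup \le t$; so against "always self-loop" the unique play has $\AEsup = 1 > 0 = t$, which is \emph{losing for $\playerOne$}, i.e.\ \emph{winning for $\playerTwo$} — so that memoryless strategy \emph{does} win, and the lemma as I've read it would be false for this game. Therefore I must re-examine the weights/threshold: almost certainly the intended game has the threshold or a self-loop weight such that the memoryless self-loop gives $\AE$ strictly below $t$ (so $\playerOne$ wins that way), the memoryless $s_2\to s_3$ lets $\playerOne$ survive and keep $\AE \le t$ via $s_3 \to s_2$, but a \emph{memory} strategy that loops $k$ times then descends, with $k$ growing, pushes the long-run average above $t$. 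The core of the proof will be: (a) construct $\playerTwo$'s memory strategy (loop-counter) and verify $\AEsup > t$ on all outcomes while $\playerOne$ can never fix the lower bound; (b) a finite case analysis showing each of the (constantly many) memoryless $\playerTwo$ strategies admits a $\playerOne$ reply with $\AEsup \le t$ and energy always nonnegative. Step~(b), and correctly reconciling it with the minimizer convention, is the main thing to get right; the rest is routine lasso-based $\AE$ computation via Lemma~\ref{lem:AE_prefix} and Lemma~\ref{lem:AE_repeat}.
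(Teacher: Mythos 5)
There is a genuine gap, and it is exactly the one you flag yourself without resolving. The witness game of Fig.~\ref{fig:ael_two_memory} is used in the paper with threshold $t \coloneqq 1$, not $t \coloneqq 0$. Under $t=0$ the lemma would indeed be false for this game, as you observe: the memoryless ``always self-loop at $s_2$'' strategy yields the unique play $s_1(s_2)^\omega$ with $\AEsup = 1 > 0$, which wins for $\playerTwo$. Under $t=1$ the tension disappears: that same play has $\AEsup = 1 \le t$, so ``always self-loop'' \emph{loses} for $\playerTwo$; and against ``always play $(s_2,s_3)$'', $\playerOne$ replies with $s_1 s_2 (s_3 s_2 s_3)^{\omega}$, i.e.\ repeats the zero cycle $s_3 \xrightarrow{2} s_2 \xrightarrow{-1} s_3 \xrightarrow{-1} s_3$ whose energy levels from level $0$ are $2,1,0$, giving $\AEsup = 1 \le t$ with energy always nonnegative (Lemmas~\ref{lem:AE_prefix} and~\ref{lem:AE_repeat}). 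So both memoryless $\playerTwo$ strategies fail. Your proposal never pins down the threshold and therefore never completes the case analysis you correctly identify as the crux.

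Your sketch of $\playerTwo$'s winning memory strategy is also off. You propose a loop-counter/pumping scheme in which $\playerTwo$ repeatedly alternates between looping at $s_2$ and descending to $s_3$; but against repeated descents $\playerOne$ can keep answering with the zero cycle above and hold $\AEsup$ at $1$, so that strategy does not win. The correct strategy needs only one bit of memory: play $(s_2,s_3)$ exactly once, then self-loop at $s_2$ forever. After the single descent $\playerOne$ sits at $s_3$ with energy $0$; the $-1$ self-loop would violate $\LBound$, so he must take $(s_3,s_2)$ and raise the energy to $2$, after which the zero-weight self-loop at $s_2$ locks $\AEsup = 2 > 1$. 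Your part~(i) (invoking the one-player lower bound of Thm.~\ref{thm:ael_one_memory} for $\playerOne$) is fine and matches the paper.
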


\begin{proof}
We only have to prove that $\playerTwo$ needs memory in the game of Fig.~\ref{fig:ael_two_memory}. Consider the $\AEL$ objective for the average-energy threshold $t \coloneqq 1$ on this game. Assume that $\playerTwo$ is restricted to \textit{memoryless} strategies. Then, there are only two possible strategies for $\playerTwo$. If $\playerTwo$ always takes the self-loop $(s_{2}, s_{2})$, then the only consistent play is $s_1(s_2)^\omega$: it has $\AE$ equal to $1$, and satisfies the lower bound constraint on energy, thus $\playerOne$ wins. If $\playerTwo$ always takes $(s_2, s_3)$, then $\playerOne$ can win by producing the following play: $s_{1}s_2(s_3s_2s_3)^{\omega}$. It also has $\AE$ equal to $1$, and satisfies the energy constraint. Hence $\playerTwo$ cannot win this game with a memoryless strategy. Nonetheless, he has a winning strategy that uses memory. Let this strategy be the one that plays $(s_{2}, s_{3})$ once then chooses the self-loop $(s_2, s_2)$ forever. When this strategy is used by $\playerTwo$, $\playerOne$ has to pick $(s_3, s_2)$ in the first visit of $s_3$ otherwise he loses because the energy goes below zero. But if $\playerOne$ picks this edge, the unique outcome becomes $s_1s_2s_3(s_2)^{\omega}$, whose average-energy is $2 > t$, hence also losing for $\playerOne$. Thus, the defined strategy is winning for $\playerTwo$.
\end{proof}

\section{Conclusion}

We presented a thorough study of the \textit{average-energy} payoff. We showed that average-energy games belong to the same intriguing complexity class as mean-payoff, total-payoff and energy games and that they are similarly memoryless determined. We then solved average-energy games with lower- and upper-bounded energy: such a conjunction is motivated by previous case studies in the literature~\cite{CJLRR09}. Lastly, we provided preliminary results for the case of average-energy with a lower bound but no upper bound on the energy. Following the publication of~\cite{DBLP:journals/corr/BouyerMRLL15}, Larsen et al.~adressed a different problem in~\cite{DBLP:journals/corr/LarsenLZ15}: they proved that deciding if there exists a threshold $t \in \mathbb{Q}$ such that $\playerOne$ can win a two-player game for objective $\LBound(\initCredit \coloneqq 0)\,\cap\, \AvgEnergyLevel(t)$ can be done in doubly-exponential time. This is indeed equivalent to deciding if there exists an upper-bound $U \in \mathbb{N}$ such that $\playerOne$ can win for the objective $\LUBound(U, \initCredit \coloneqq 0)$, which is known to be in \textsf{2EXPTIME}~\cite{JLR13}. Unfortunately, this approach does not help in solving Problem~\ref{problem:ael}, where the threshold $t \in \mathbb{Q}$ for the average-energy is part of the input: solving two-player $\AEL$ games is still an open question.

We believe that the average-energy objective and its variations model relevant aspects of systems in practical applications as hinted by the aforementioned case study. Hence, we would like to extend our knowledge of this objective to more general models such as stochastic games, or games with multi-dimensional weights. Of course, the open questions regarding the $\AEL$ objective are intriguing. Finally, we would like to implement our techniques in synthesis tools and assess their applicability through proper case studies.

\bibliographystyle{plain}
\bibliography{biblio}

\end{document}